\newtheorem{theo}{Theorem}[section]  
\newtheorem{lem}[theo]{Lemma}
\newtheorem{coro}[theo]{Corollary}
\newtheorem{fa}[theo]{Fact}
\theoremstyle{definition}
\newtheorem{defi}[theo]{Definition}
\newtheorem{ex}[theo]{Example}
\newenvironment{myfont}{}{}
\newcommand{\n}{\noindent}
\newcommand{\Var}{\textsf{Var}}
\newcommand{\For}{\textsf{For}}
\newcommand{\Ex}{\textsf{Ex}}
\newcommand{\Con}{\textsf{Con}}
\newcommand{\N}{\mathbb{N}}
\def\var{\textsf{var}}
\def\Pow{\mathcal{P}}
\def\R{\textsf{R}}
\def\LV\mathbb{LV}
\def\s{\mathfrak{S}}
\def\m{\mathfrak{M}}
\def\S{\mathcal{S}}
\def\M{\mathcal{M}}
\def\Si{\mathcal{S}^{k}}
\def\Mi{\mathcal{M}^{k}}
\def\Mn{\mathcal{M}^{n}}
\def\GS{\mathcal{GS}}
\def\TR{\mathcal{TR}}
\newcommand{\TE}{\textsf{Lab}}
\def\LV{\mathbb{LV}}
\def\K{\textsf{K}}
\def\t{\textsf{t}}
\def\f{\textsf{f}}
\def\SEp{\textbf{S}}
\title{Tableau methodology for propositional logics}
\author{T. Jarmużek\footnote{Nicolaus Copernicus University in Toru\'n}, R. Gor\'{e} 
 \footnote{{Monash University, Melbourne, Australia}}}
\date{}
\begin{document}

\maketitle

\begin{abstract} We set out a general methodology for producing tableau systems for propositional logics via a tableau metatheory
  that provides general and formal notions for different tableau systems that vary by semantics or formulae. Moreover, by dint of
  these general notions, some facts, independent of their applications to a particular propositional logic, can be proved. One of
  the examples is the tableau metatheorem that simplifies the process of constructing a complete tableau system for a given logic,
  just reducing it to checking specific properties of the tableau rules within the analyzed, particular system.
  In our paper we generalize  an abstract consistency property proposed by R. Smullyan and M. Fitting from the modal case to the others.
  Such a methodology is essential for a deeper and
  universal treatment of tableau methods for various propositional languages and semantics.

\end{abstract}

\textbf{Keywords:} metatheory, methodology of logic,  methodology of tableaux, proof theory, propositional logic, tableau methods.

\section*{Introduction and overview}\label{sec:zero}

We explore the relationships between widely understood propositional logic and their  tableau proofs. We set out a
methodology for constructing adequate tableau systems for different sorts of propositional logic that can be determined with
generalized relational semantics.\footnote{A  similar strategy, but relative to the context of syllogistic logic was implemented by us~\cite{TJiRG2021} where the tableau metatheory was developed for syllogistic languages.}

The initial and basic idea that motivates our approach is the following observation: the syntax and semantics of a given logic always determine a minimal syntax and structure of a tableau system for the logic along with other properties. So, it seems reasonable to propose some metatheory for a class of logics that are very similar from  a syntactical and semantic viewpoint. 
One can naturally ask: how can  we benefit from such a theory? We show how the metatheory we propose  makes the process of defining adequate tableau systems much easier. Since we define very general notions that cover all tableau issues concerning a very wide class of propositonal logics and demonstrate crucial relationships between them, we can apply them automatically in particular cases,  concentrating only on remaining details, specific to the~particular propositional logic under examination.\footnote{The ideas behind the presented approach were outlined relative to certain contexts in \cite{TJ2007}, \cite{TJ2008}, \cite{TJ2013}, \cite{TJ2013b}. They have been developed and improved here to cope with propositional logic.} Additional, and probably more important advantages of this metatheory that open future interesting  research areas in tableau-proofs-approach are listed in the last section  (Section \ref{sec:6}).

At the beginning let us notice that in the theory of tableaux, we can distinguish three kinds of approaches  to their construction.

First, tableau proofs are either with a signed or unsigned language. \emph{Signed} means that in a language of tableau proofs, additional symbols to denote logical values are used, while \emph{unsigned} means that in the tableau proofs we do not use any such symbols for logical values. This is a traditional division. It is worth mentioning that the first appearance of the tableau method in Beth~\cite{EB1955} used signed tableaux (\cite{ZL1960}).

Second, tableaux either can be built with  languages that contain labels that denote possible worlds (points of relativization), or can be built with languages without labels (see \cite{MF1983}, \cite{RG1999}).

Third,  a way of construction of  tableaux (and branches)  can be divided into \emph{nodes based on formulae} or  \emph{nodes based on sets of formulae}. While the former seems to be  usually of didactic form (this approach  is for example extensively outlined in \cite{GP2008}), the latter is more paradigmatic (see for example \cite{MF1983} and \cite{RG1999}) and has a strong connection to sequent calculi.

In our paper we set out a generalization of  all these  aspects of tableau theory for propositional logic that can be tweaked to any of the more specific forms mentioned above.  We obtain the required generality by using \emph{generalized labels} (but we call them \emph{labels}). \label{Generalized label} The generalized labels can obviously code points of relativization, but also other important semantic (and not only semantic) aspects, such as logical values, an object-property  of belonging/non-belonging to a denotation of a given term or possibly other things.\footnote{Other possible uses of generalized labels can be: (a) tracking the origin of decomposed formulae (see for example case for relevance tableaux \cite{DM2017} or paraconsistent tableaux \cite{TJiMT2013}, \cite{TJiMT2015}), (b) quasi-negation (-)/quasi-assertion (+) in many valued tableaux or FDE tableaux, including Routley Star (see for example \cite{GP2008}).} Moreover, since we are developing a metatheory, we should be open to new roles for labels, which  may only emerge when we use the metatheory for particular cases of new propositional logics.

Last but not least, our approach is of the nodes-based-on-sets-of-formulae kind but different to others of this kind in at least two ways. First,  decomposed expressions are collected rather than deleted, so at any stage of a tableau branch,  the full
information on the proof is still available. So additional constraints on tableau rules can be imposed directly on the inputs of rules (since in our approach tableau rules are sets of n-tuples of sets such that the input is always a proper subset of the output). Second, in fact we do not use direct tree structures with nodes, since branches are strictly monotonic sequences of input/output sets, while tableaux are sets of such branches (selected with some additional conditions). However,  the~remaining approaches (with nodes based on formulae as well as  sets of formulae) can be defined using our apparatus, since our approach is more abstract.

In Section \ref{sec:1} we present a language  that includes what is common to any propositional logic, whilst not excluding  richer grammatical constructions, since we want our theory to cover  all possible propositional languages.

In Section \ref{sec:2} we introduce general semantic structures for propositional logics, as well as notions of satisfiability which together form models. It turns out that models semantically determine particular propositional logics, which is illustrated by examples. In keeping with the spirit of generality, in Section  \ref{sec:3} we propose a syntax for our tableau
language to describe properties of general semantic structures in tableau proofs. Finally, we introduce a notion of a set of satisfied expressions.

Section  \ref{sec:4} is completely devoted to problems of what is generally a tableau rule, a branch, a tableau etc. A multistage, set-theoretical  construction of these notions is proposed. As a novelty in the field of propositional logic a branch consequence relation as the
tableau counterpart of a semantic consequence relation for a given logic is also introduced. 

In Section \ref{sec:5} all important relationships between these notions are proved to present  general connections between the tableau notions  and general semantics. 
There we also generalize to some extent the aforementioned notion abstract consistency property. This establishes  sufficient conditions for a complete and sound tableau system that are in this section formulated.

In the last section of the paper (Section \ref{sec:6})  we present  an application of the metatheory to the construction of adequate tableau system for a bi-modal, three-valued logic, and we outline some perspectives for the further development of the tableau metatheory.

\section{Propositional languages}\label{sec:1}

In order to explore the general view on tableaux for propositional languages we need to define the notion of propositional language generally. 

Let $\N$ be the set of natural numbers. The propositional logics are defined in the languages that are built with:  variables $\Var  = \{p_{i} |  i \in \mathbb{N} \}$ (in practice we will use: $p$, $q$, $r$ , propositional connectives $\Con_{I}^{K}  = \{ c_{i}^{k} |  i \in I, k \in K \}$, where $I, K \subseteq \N$ and parentheses $)$, $($. In case of the connectives $c_{i}^{k}$ the subscript $k$ is an arity, while $i$ is just an index
for the connective. What is the purpose of $I$ and $K$? Could we not just say  $\Con  = \{ c_{i}^{k}: i, k \in \N\}$? Would we ever want to define two sets of connectives such as $\Con_{1,2,3}^{2}$ and $\Con_{4,5,6}^{2}$?  Our approach is to be general, so we consider $\Con_{I}^{K}$ for some $K$, $I$.

For the sake of our considerations, however,  we assume $\Var$, brackets and $\Con_{K}^{I}$, for some $K$, $I$ (rather non-empty), writing then just $\Con$. On this ground we have a standard definition of set of formulae.

\begin{defi}[Formula]\label{Formula} Set of formulae is the least set $\For$ that satisfies the following conditions:

\begin{enumerate}

\item $\Var \subseteq \For$

\item $c_{i}^{k} (A_{1}, \dots, A_{k}) \in \For$, if $A_{1}, \dots , A_{k} \in \For$ and $ c_{i}^{k} \in  \Con$, for some $i, k \in \N$.

\end{enumerate}

\end{defi}

\section{General semantics for propositional languages} \label{sec:2}

A language in which tableau proofs can be carried must relate  formulae to some aspects of semantic structures, so the general theory of propositional  tableaux needs a universal notion of semantic structure, and finally model for a propositional logic.

\begin{defi}[General semantic structure]\label{General semantic structure}

\textit{A general semantic structure } is the following  ordered triple:  $\langle \{W_{i}\}_{i\in M}$, $\{ R_{j}\}_{j\in N}$, $\vartheta \rangle$, where: 

\begin{itemize}

\item[(a)] $M$, $N$ are sets of indices (and $M \cap N = \emptyset$),

\item[(b)]  $\{W_{i}\}_{i\in M}$ is a non-empty family of indexed, non-empty sets (called \emph{set of domains}),

\item[(c)] $\{R_{j}\}_{j\in N}$ is a possibly empty family of relations,

\item[(d)] $\vartheta$ is a valuation of variables in~all domains, so  $\vartheta \colon \bigcup_{i \in M} W_{i} \times \Var \longrightarrow \{0, 1 \}$.

\end{itemize}

 The set of all general semantic structures is denoted by $\GS$.  They can be also called \emph{semantic structures} or just \emph{structures}.

\end{defi}

Some convention can be adopted. Let $\s = \langle \{W_{i}\}_{i\in M}, \{ R_{j}\}_{j\in N}, \vartheta \rangle$. For  all $i \in M$, $W_{i}^{\s} = W_{i}$; for all  $j \in N$, $R_{j}^{\s} = R_{j}$,  and $\vartheta^{\s} = \vartheta$. The same convention is assumed for all structures interpreted later as models.

Some short comments on definition~\ref{General semantic structure} are needed. A structure  $\langle \{W_{i}\}_{i\in M}, \allowbreak  \{ R_{j}\}_{j\in N}, \allowbreak
\vartheta \rangle$ is an ordered triple where:
\begin{enumerate}
\item $\{W_{i}\}_{i\in M}$ is a non-empty set of non-empty domains. The domains  can play at least three roles: points of
  relativization (sometimes they are called \emph{possible worlds} or just \emph{worlds}); denotations for sentences (see \cite{TJiMK2022}); and logical values. But they can have more applications.  Since $\{W_{i}\}_{i\in M}$ is non-empty, we always have a non-empty set of
  points of relativization.  Clearly, even in non-modal logics,  we have a world, but when there is only one, it is usually omitted.

\item $\{R_{j}\}_{j\in N}$ includes different relations of various arities. They enable the modelling of modalities in intensional contexts. They also can  be  additional components (e.g. the hereditary relation $\sqsubseteq$ or ternary relations for relevant models) or functions (like Routley Star (see \cite{RRiVR1972})), and so on. However, $\{R_{j}\}_{j\in N}$ can be empty.

\item
Finally, we have $\vartheta$, a valuation of variables in~all domains. It describes a relationship between any propositional variable  $p \in \Var$
and any object $a$ in $W_{i}$, where $i \in M$. For example if $W_{i}$ is a set of possible worlds and $a \in W_{i}$, $\vartheta(a, p) = 1$
can mean that $p$ is true at $a$; if $W_{i}$ is a Cartesian product of possible worlds and logical values, then $a \in W_{i}$ is a pair $\langle
b, l \rangle$ (where $b$ is a possible world and $l$ is a logical value) and  $\vartheta(a, p) = \vartheta(\langle b, l\rangle, p) = 1$ can in the
metalanguage mean that the sentence  ``\emph{$p$ is at $b$}'' (or that ``\emph{$p$ happens at $b$}'', etc.) has logical  value $l$.  Clearly, the function in many situations can be excessive. But this is not a problem. By dint of the bi-valued function $\vartheta$ we can reduce even complex semantic properties to basic binary decisions: either  yes or no. However, before we complicate the situation, let us consider a very simple example to explore some initial intuitions.

\end{enumerate}

\begin{ex}\begin{myfont} [A variant of logic $\SEp$]\label{Przyklad podlogiki S} Here we define a variant of logic of content relationship $\SEp$ (see: \cite{TJiMK2020}). The idea behind the content relationship logic is that sentences have ascribed not only logical values, but also sets that represent abstract notion of content. Let our  language be built with $\Var$ and  three connectives $\neg, \wedge, \rightarrow$.  We consider a set $\S$ of semantic structures  that contains only structures $\s = \langle \{W_{i}\}_{i\in M}, \{ R_{j}\}_{j\in N}, \vartheta \rangle$, where $M = \{ 1, 2 \}$, $W_{1}  = \{w \}$ (one
  point of relativization or one world), $W_{2}$ is an arbitrary non-empty set of non-empty sets, $\{ R_{j}\}_{j\in N}$ is empty, $\vartheta$ is a valuation of variables in domains $\vartheta \colon \bigcup_{i \in M}
  W_{i} \times \Var \longrightarrow  \{0, 1 \}$ that  for all $p\in \Var$  satisfies the condition:

\begin{itemize}

\item[] (cont) there exists  exactly one $U \in W_{2}$  such that $\vartheta(U,x) = 1.$

\end{itemize}

The condition (cont) is to guarantee that no variable has an empty content.\footnote{The condition is because in the logics of
  content relationship it is usually assumed that the content of sentences is
  non-empty.} 
Could the structures not be simplified to $\langle W_{2}, \vartheta\rangle$, since we have no relations and only one point of
relativization, meaning that they can be omitted? The answer is no, because the case under study is extremely simple. In general
we must stick close to a more general pattern of structures, to allow for different possibilities. In fact, we have already
simplified things by leaving out domains with additional values and designated values since we are defining a two-valued
logic. But we will add these in example \ref{Example mulitivalued and intensional}.

So far, we have the definition of structures. But what makes a structure a model for the language? The answer is: a satisfiability relation that stipulates the conditions under which  sentences are true or false.

Before we introduce the relation, let us define some notation. By $\Pow$ we mean the powerset operation, while by $\var \colon \For \longrightarrow \Pow(\Var)$ the function that collects all variables contained in a given formula.

For any formula $A$ and for all $\s \in \S$, and for $\vartheta^{\s}$, we define an intended content of a formula in the model as:

\[ s^{\s}(A) =  \bigcap \{U \in W_{2}^{\s} | \text{ for some } p  \in \var(A),  \vartheta^{\s}(U, p) = 1 \}. \footnote{In the original version of $\SEp$, the content of formula in a model is defined as the union of the content of variables: $s^{\s}(A) =  \bigcup \{U \in W_{2}^{\s} | \text{ for some } p \in \var(A),  v^{\s}(U, p) = 1 \}$, which is called \textit{union set assignment}. Here (and in the example \ref{Example mulitivalued and intensional}) we employed intersection in place of union, which we can call \textit{intersected  set assignment}. It results in  a weaker logic than $\SEp$.} \]

We introduce now the relation $\models$ between the structure
$\s$, points of relativization $W_{1}^{\s}$ and formulae $\For$ (if the relation $\models$ does not hold, we write $\not \models$) --- this is the satisfiability relation:


\begin{itemize}

\item[(var)]  $\s, w \models p$ iff $ \vartheta( w,  p )=1$, if $p \in \Var$

\item[($\neg$)]  $\s, w \models \neg A$ iff $\s, w \not \models A$

\item[($\wedge$)]    $\s, w \models  A \wedge B$ iff  $\s, w \models A$ and  $\s, w \models B$

 \item[($\rightarrow$)]  $\s, w \models A \rightarrow B$ iff for all $A, B \in \For$, all $\s \in \S$ and $w \in W_{1}^{\s}$:

 \begin{enumerate}
     \item  $\s, w \not \models A$ or $\s, w \models B$
     \item $s^{\s}(A) \cap s^{\s}(B)  \neq \emptyset$. 
     
 \end{enumerate}

\end{itemize}

\noindent So, we have in fact defined  $\models$ for $W_{1}^{\s}$.  Moreover, we could reduce $\s, w \models A$ to $\s \models A$, since $W_{1}^{\s}$ is a~singleton in this example. However, as we have said, we keep in mind the more general pattern, allowing more worlds.

Finally, in the standard way we define a consequence relation as the relation  $\models_{\S} \subseteq \Pow(\For) \times \For$ determined by preserving truth in the class of all structures $\S$ under conditions (var), ($\neg$), ($\wedge$), ($\rightarrow$):
\[ X \models_{\S} A \text{ iff } \forall \s \in \S, w \in W_{1}^{\s} ((\forall B \in X,~ \s, w \models B) \Rightarrow \s, w
\models A ). \]


\noindent In this way, we obtain a  semantically determined logic that is a sublogic of content related logic $\SEp$. The logic $\SEp$ was introduced in \cite{RE1990} and examined by tableau approach, among others, in \cite{TJiMK2022}. 

For example, when we take $X = \{ p \}$ and $A = p  \rightarrow \neg (\neg p \wedge \neg q)$, then we have $\{p\} \not \models_{\S} p  \rightarrow \neg (\neg p \wedge \neg q)$. However, when we assume  $X = \{\neg (\neg (p \rightarrow q) \wedge \neg (q \rightarrow p)) \}$ and $A = (p \wedge (p \rightarrow q)) \rightarrow q$, then we have $\{ \neg (\neg (p \rightarrow q) \wedge \neg (q \rightarrow p)) \}  \models_{\S} (p \wedge (p \rightarrow q)) \rightarrow q$, by the conditions (var), ($\neg$), ($\wedge$), ($\rightarrow$).

At the end of the example it is worth noting that the defined structures can be reduced just to ordered pairs $\langle \vartheta, s \rangle$, where $\vartheta \colon \Var \longrightarrow \{0, 1 \}$,  $s \colon \For  \longrightarrow W$, and $W$ is a non-empty set of  non-empty sets and for complex formula $A$:

\[ s(A) =  \bigcap \{U \in W  | \text{ for some } p \in \var(A),  s(p) = U \}. \]


\n Under the truth conditions we have introduced in this example, the structures $\langle \vartheta, s \rangle$ become models for the sublogic $\SEp$. This shows that the general structures we introduce can capture such cases. 
Later in this section we will present one more example. \end{myfont}
\qed{}
\end{ex}

We next introduce some further notions that will be of use.  Let us consider some subset of structures $\S \subseteq \GS$. In
order to interpret $\S$ as a set of models we assume that there exists an object $k$ that for all $\s \in \S$, if
$\s = \langle \{W_{i}\}_{i\in M}$, $\{ R_{j}\}_{j\in N}$, $\vartheta \rangle$, then $k \in M$. In other words, to become models, all
structures in the set $\S$ are required to share in the sets indexing their families of domains $\{W_{i}\}_{i\in M}$ at least one
index $k \in M$. If so, then all of the structures include in $\{W_{i}\}_{i\in M}$ a non-empty set $W_{k}$ with the same index
$k$. The sets indexed with $k$ are intended to be treated in models from $\S$ as sets of points of relativization (particularly as
one point of relativization in any structure). Formally:

\begin{defi}[k$^{th}$ domain structure]\label{kth domain} Let $\s = \langle \{W_{i}\}_{i\in M},\{ R_{j}\}_{j\in N}, \vartheta \rangle$.
  $\s$ is called \emph{a structure with the k$^{th}$ domain} iff $k \in M$.  Let $\S$ be a set of structures.  By $\Si$ we
  stipulate that all structures in $\S$ contain one common k$^{th}$ domain, for some $k$.
\end{defi}

For instance, in example \ref{Przyklad podlogiki S} above, domain $W_{1}$ always serves as k$^{th}$ domain, with $k=1$.

A set of structures $\S \subseteq \GS$ becomes a set of models of $\For$ only if some special relation between $\S$ and $\For$ is
set down. It is usually a~ternary relation of satisfiability between structure $\s$, a point of relativization $w$ and any formula
$A \in \For$: if all structures in $\S$ have only one relativization point, the relation can be treated as binary. In case of semantically
determined logics this relation is often given inductively, by some truth conditions, as in example~\ref{Przyklad podlogiki
  S}. However, here we do not have any particular truth conditions as we are working on a very abstract level. So we must also use
some general notion of being a model of $\For$, without specifying any truth conditions.

First, we introduce some useful notation. Let $\s = \langle \{W_{i}\}_{i\in M},$ $\{ R_{j}\}_{j\in N}$, $\vartheta\rangle \in \GS$. Then
$\overline{\s} = \{Q_{\s} \colon Q_{\s} \subseteq (\bigcup_{i \in M} W_{i}) \times \For \}$ is the set of all subsets of
$(\bigcup_{i \in M} W_{i}) \times \For$, so just $\Pow((\bigcup_{i \in M} W_{i}) \times \For)$.  For any particular
$Q_{\s}$, the expression $w Q_{\s} A$ intuitively means, that formula $A$ is satisfied in structure $\s$ at point $w$. We see that
there are many possibilities for how to determine $Q_{\s}$ (set $\overline{\s}$ contains all of them for $\s$), but always it is
defined uniquely. When we choose some $Q_{\s}$ it can be also excessive in many features. But generally this does not cause any
trouble, since it is clear when a formula is satisfied at a relativization point from $W_{k}$.

\begin{defi}[Models of \For]\label{Models of For} Let $\S \subseteq \GS$. Let $Q \subseteq \bigcup \{\bigcup_{i \in M} (W_{i}^{ \s}) \times \For | \s \in  \S \}$. 
  Then, $\S$ are \emph{models of }$\textsf{For}$ \emph{with respect to} $Q$.
\end{defi}

\n Definition \ref{Models of For} expresses the fundamental idea that being a model of a~given set of formulae  provides a relation between particular parts of domains, especially points of relativization and formulae. In other words, $\s$ is a model of \For,  if we can define (usually, inductively, if it is computable) a relation of \emph{being satisfied} (and by
complementation,  of \emph{being not satisfied})  in model $\s$  at any $w$ that belongs to $\bigcup_{i \in M} W_{i}$. It is obvious
that set $\S$ can be the base for different models if $Q$ varies. We now make certain assumptions for the remainder of the paper:

\begin{itemize}

\item for $\For$ (defined in \ref{Formula}) we assume a set of semantic structures $\S \subseteq \GS$

\item structures $\S$ are with the k$^{th}$ domain, for some $k$, so we write $\Si$


\item relation between $\Si$ and $\For$ is established by some satisfiability relation.

\end{itemize}

\noindent Since $\Si$ are models of $\For$ with respect to  satisfiability  by definition~\ref{Models of For}, we will write $\Mi$ instead of $\Si$.
We are now ready to define  in a standard way a semantic consequence relation. We will use an abbreviation. Let $X \subseteq \For$, $\m \in \Mn$ and $w$ be a relativization point in k$^{th}$ domain in $\m$ ($w \in W_{k}^{\m}$). Then: $\m, w Q A$ means that in model $\m$ formula $A$ is satisfied at the point $w$; while $\m, w Q X$ means that for all $B \in X$: $\mathfrak{M}, w Q B$.

\begin{defi}[Semantic consequence relation]\label{Semantic consegence relation} Let~$\Mn$ be a set of models of $\For$ with respect to relation $Q$, and with n$^{th}$  domain. Let $X \subseteq \textsf{For}$, $A \in \textsf{For}$.



\begin{itemize}

\item $(\maltese)$ \label{Pattern of local semantic consequence relation} Formula $A$ is  \textit{a local semantic consequence of set of formulae} $X$ \textit{with respect to a set of
    models} $\Mn$ (in short:  $X \models_{\Mn} A$) iff for all $\mathfrak{M}= \langle \{W_{i}\}_{i\in M}, \{ R_{j}\}_{j\in N}, \vartheta \rangle \in \Mn$, for all $w \in W_{n}^{\m}$:

\medskip

 if $\mathfrak{M}, w Q X$, then $\mathfrak{M}, w Q A$.

\item Formula $A$ is  \textit{a global semantic consequence of a set of formulae} $X$ \textit{with respect to a set of models} $\Mn$ (in   short:  $X \models^{g}_{\Mn} A$) iff for all $\mathfrak{M}= \langle \{W_{i}\}_{i\in M}, \{ R_{j}\}_{j\in N}, \vartheta \rangle \in \Mn$:

\medskip

 if for  all $w \in W_{n}^{\m}$: $\mathfrak{M}, w Q X$, then for  all $w \in W_{n}^{\m}$: $\mathfrak{M}, w Q A$.

\end{itemize}

\end{defi}

Clearly, it is always the case that  $\models_{\Mn}\;\subseteq\;\models^{g}_{\Mn}$. So now, we have a good starting point for investigating
the relationships between $\models_{\Mn}$, $\models^{g}_{\Mn}$ and tableau systems that can be constructed for both. However, we
concentrate only on local consequence relation, leaving the other for another occasion.


Hence, since we  have $\For$ and models of $\For$, by definition~\ref{Semantic consegence relation} we automatically get the
logic defined semantically on $\For$ by the set of models $\Mi$ that is identified with relation  $\models_{\Mi}$. To simplify notation, later we will write just $\M$ and $\models$, remembering that $\models$ is defined by models from $\M$ on worlds from the  k$^{th}$ domain of any model in $\M$. \label{Semantic assumptions}  We shall use $\models$ also to denote the satisfiability relation between models, worlds, formulae, and possibly other semantic entities. These assumptions we denote by $(\maltese\, \maltese)$.\label{Local semantic consequence relation}

Let us consider a last example of the application of our general semantic notions. It is the application to the case of many-valuedness and intensionality for a language. It is worth noticing that the complications in  our formulation arise mainly because of the complex structures of different possible propositional logics that we cover in our metatheory.

\begin{ex}\begin{myfont}[A many-valued and intensional logic with a content--related implication] \label{Example mulitivalued and intensional} We assume a language built with 
negation $\neg$, conjunction $\wedge$, implication $\rightarrow$, and a necessity operator $\Box$. 
We would like to give an intensional and many-valued (with $m$ logical values and $n$ designated) interpretation of the language. For this we choose set $\S \subseteq \GS$  of structures  $\langle \{W_{i}\}_{i\in M}, \{ R_{j}\}_{j\in N}, \vartheta \rangle$, where:

\begin{itemize}

\item $M = \{ 1, 2, 3, 4, 5 \}$; 

\item  $W_{1}$ is a non-empty set of points of relativization;

\item for $m \in \mathbb{N}$, the set $W_{2} = \{l^{1}, \dots, l^{m} \}$ is a
  set of $m \geq 2$ logical values,  while for $n \in \mathbb{N}$, the set $W_{3} = \{l^{k_{1}}, \dots, l^{k_{n}} \}$ is a   set of $n$ designated logical values, with $W_3 \subset W_2$ and $1 \leq k_{1}, \dots, k_{n} \leq m$ (For ease we will later write  $W_{V}$ and $W_{DV}$,  instead of $W_{2}$, and respectively $W_{3}$, but keep in mind that they are the 2$^{nd}$ and the 3$^{rd}$  domains of $\S$.);


\item $W_{4} = \{ W_{w}\}_{w\in W_{1}}$ is a family of non-empty sets of sets indexed by the relativization points from $W_{1}$;

\item $W_{5} = \{ \langle w, l \rangle \mid w \in W_{1}, l \in W_{V} \}$ is a set of pairs: a world and a logical value;


\item $\{ R_{j}\}_{j\in N} = \{ R_{1}, R_{2} \}$, (so $N  = \{ 1, 2 \}$)  is a doubleton  with:

\begin{itemize}

\item $R_{1} \subseteq W_{V} \times W_{V}$ is an order on the set of logical values; we put linear order: for all
  $l^i, l^j \in \{l^{1}, \dots, l^{m} \}$, $l^i R_{1} l^j$ iff $i \leq j$, as $i, j \in \mathbb{N}$; of course, for any non-empty
  $W' \subseteq W_{V}$ in respect of $R_{1}$ there exists $min(W')$ as well as $max(W')$; intuitively $max(W_{V})$ we interpret as
  \emph{truth}, while $min(W_{V})$ as \emph{falsity}, so we additionally assume that $max(W_{V}) \in W_{DV}$, whereas
  $min(W_{V}) \not \in W_{DV}$,

\item $R_{2} \subseteq W_{1} \times W_{1}$ is  just an accessibility relation for interpretation of $\Box$-formulae,

\end{itemize}

\item valuation $\vartheta \colon \bigcup_{i \in M} W_{i} \times \Var \longrightarrow  \{0, 1 \}$, that for all $p \in \Var$ satisfies the
  conditions:
\end{itemize}

\begin{itemize} 

\item[(a)] for all $p \in \Var$ and all $w \in W_{1}$ there exists exactly  one $U \in W_{w} \in \{ W_{w}\}_{w\in W_{1}}$, such that  $\vartheta( U, p) = 1$, 

\item[(b)] for all $p \in \Var$ and all  $w\in W_{1}$ there exists exactly one $l \in W_{V}$,  such that  $\vartheta(\langle w,  l \rangle, p ) = 1$, \hfill (clearly, $\langle w,   l \rangle \in W_{5}$)

\item[(c)]  for all $p \in \Var$, $y \in \bigcup_{i \in M \setminus \{ 4, 5 \}} W_{i}$,  $\vartheta(y, p ) = 0$.

\end{itemize}

Condition (a) defines what is a content of a variable $p$ in a given world $w$. As we can see, thanks to $W_{4} = \{ W_{w}\}_{w\in W_{1}}$, the content can vary from world to world, since with any world there is associated a set of sets that represent content of atomic sentences. Condition (b) is important for the construction because (b) determines the relationship between a possible world $w$, a variable $p$, and a logical value: in world $w$ sentence $p$  has exactly one logical value from $W_{V}$. Intuitively, such sentences describe basic semantic facts in the structures. Condition (c) just fulfils the requirements that the function must be defined for all domains.

Before we introduce the relation, let us define some notation. We generalize into possible worlds the notion of content that we introduced in the example  \ref{Przyklad podlogiki S}. For any formula $A$ and for all $\s \in \S$, $w \in W_1^{\s}$ and $\vartheta^{\s}$, we define 

\[ s^{\s}_{w}(A) =  \bigcap \{U \in W_{w}^{\s} \in W_{4}^{\s} | \text{ for some } p \in \var(A),  \vartheta^{\s}(U, p) = 1 \}. \]

Now we define a satisfiability relation $\models$ in two steps. Let us take $\s \in \S$, where $\s = \langle \{W_{i}\}_{i\in M}, \{ R_{j}\}_{j\in N}, \vartheta \rangle$.  We would like to know under what conditions $\s, w \models A$, where $w \in W_{1}$ and $A$ is a formula. The fundamental idea is that for each $w$, a formula $A$ should have exactly one value $l$ from $W_{V}$. Intuitively, if $l \in W_{DV}$ (i.e.\ is a designated value), then $A$ is satisfied in $w$ (i.e.\ $\s, w\models A$); if $l \not \in W_{DV}$ (i.e.\ is not a designated value), then $A$ is not satisfied in $w$ (i.e.\ $\s, w \not \models A$).

Our definition is inductive, so we  will define atomic cases directly, and make an inductive step. Let $p \in \Var$. By the definition of $\s \in \S$ for  all  $w\in W_{1}$ there exists exactly one $l \in W_{V}$,  such that  $\vartheta(\langle w,  l \rangle, p ) = 1$. So, we define

\[\s, \langle w, l \rangle \models p\text{ iff } \vartheta(\langle w,  l \rangle, p ) = 1\]

\n and finally

 \[\s, w  \models p \textit{ iff } \exists l \in W_{V} (\s, \langle w, l \rangle \models p \, \& \, l \in W_{DV}).\]

We consider cases of possible  formulae: $\neg A$, $A \wedge B$, $A \rightarrow B$, $\Box A$, where  $A, B$ are formulae,  and the remaining symbols are the assumed connectives. 
These cases are more complex, so we make an inductive assumption. Let us consider  formulae $A$, $B$ such that for any of
them and  all $w \in W_{1}$ there exists such a unique $l_{1} \in W_{V}$ that $\s, \langle w, l_{1} \rangle \models A$  and there
exists such a unique $l_{2} \in W_{V}$ that $\s, \langle w, l_{2} \rangle \models B$.

We introduce a meaning for $\neg$. 

\[\s, \langle w, l \rangle \models \neg A \text{ iff } l =  \begin{cases} min(W_{V}), &\text{if } l_{1} = max (W_{V}) \\ max (W_{V}), &\text{if } l_{1} = min (W_{V}) \\ l_{1}, &\text{otherwise}. \end{cases}\]

\n So,   truth is a complement  of  falsity, and vice versa, but other values are a complement for themselves. Finally,

 \[\s, w  \models \neg A \textit{ iff } \exists l \in W_{V} (\s, \langle w, l \rangle \models \neg A \, \& \, l \in W_{DV}).\]

For the case of $\wedge$ we define: if   $\s, \langle w, l_{1} \rangle \models A$  and  $\s, \langle w, l_{2} \rangle \models B$,
then $\s, \langle w, l  \rangle \models A \wedge B$ iff $l = min (\{ l_{1}, l_{2} \})$ --- hence, it is also a standard approach
--- and $\s, w  \models  A \wedge B$ iff $\exists l \in W_{V} (\s, \langle w, l \rangle \models A \wedge B  \, \& \, l \in W_{DV})$.

Now, we define the condition for the content-related implication $\rightarrow$. 

\[\s, \langle w, l \rangle \models  A \rightarrow B \text{ iff } l =   \begin{cases}  l_{2}, &\text{if }  s^{\s}_{w}(A) \cap s^{\s}_{w}(B) \neq \emptyset \text{ and }  l_{1} R_{1} l_{2} \\ l_{1}, &\text{if }  s^{\s}_{w}(A) \cap s^{\s}_{w}(B) \neq \emptyset \text{ and }  l_{2}  R_{1} l_{1} \\ min(W_{V}), &\text{otherwise}; \end{cases}\]

\n and:

 \[\s, w  \models  A \rightarrow B \textit{ iff } \exists l \in W_{V} (\s, \langle w, l \rangle \models  A \rightarrow B \, \& \, l \in W_{DV}).\]

Finally, we take into account intensionality, the case of $\Box$. Having in mind a possible world $w$, we consider a set of  all
worlds that are accessible from $w$, so $R_{w} = \{ w' \in W_{1} \mid w R_{2} w' \}$. We now define a set of logical values that formula $A$ takes in particular worlds from $R_{w}$.
It is the set
 $R_{w}^{l} = \{l \in W_{V} \mid \exists w' \in R_{w} \text{ such that }  (\s, \langle w', l \rangle \models A) \}$. Then, we propose that:

 \[ \s, \langle w, l \rangle \models \Box A \textit{ iff } l =\begin{cases} min(R_{w}^{l}), &\text{if } R_{w}^{l} \text{ is non-empty} \\ max (W_{V}), &\text{otherwise}. \end{cases}\]
 

\noindent At the end,  we put $\s, w \models \Box A$ iff
$ \exists l \in W_{V} (\s, \langle w, l \rangle  \models \Box A \, \& \, l \in W_{DV})$.

As we see,  for any formula $A$ in all structures $\s \in \S$ and all possible world $w \in W_{1}$ we can determine whether
$\s, w   \models A$, or $\s, w  \not \models A$. Therefore, we immediately have semantically determined models on
structures $\S$ and  --- by condition $(\maltese)$ (def. \ref{Semantic consegence relation}) --- a modal, many--valued  logic of a content-related implication of some kind. \end{myfont}\end{ex}

 Note that  \ref{Example mulitivalued and intensional} is only an example of how to define a propositional logic with a subset of structures of $\GS$. It is probable that all kinds of semantically determinable propositional logics   can be determined  by a subset
of structures $\GS$ when we suitably introduce satisfiability  relation between structures,  possible worlds in their k$^{th}$  domains and formulae. This is the range of what is covered by our tableau metatheory for propositional logics we shall introduce in the subsequent sections.

One can ask: how separate the wheat from the chaff? How will you separate ``good'' logics from ``silly'' ones?  Our theory --- as all theories --- is conditional. If you have a good, semantically determined propositional logic that satisfies certain (given later) conditions, you will get an adequate tableu system. If you put in the the high-quality wheat, you will pull out the good bread.

\section{Fundamental tableau theory notions}\label{sec:3}

To describe the properties of structures $\M$ for language $\For$ (that we assumed in Section~\ref{Semantic assumptions}) in the
tableau metatheory we need some syntactic apparatus.  Firstly, we list symbols:

  \begin{description}
  \item[\rm{Indices:}] a set $\mathbb{I}$ of indexes (equal to the set $\mathbb{N}$ of natural numbers for simplicity)
  \item[\rm{Function Symbols:}] a set  of n--ary functional symbols $w^n_i$ where $n \geq 1$ and $i \geq 1$:
    $w_{1}^{1}$, $w_{2}^{1}$, $w_{3}^{1}$, \dots, $w_{1}^{2}$, $w_{2}^{2}$, $w_{3}^{2}$, \dots, $w_{1}^{3}$, $w_{2}^{3}$,
    $w_{3}^{3}$, \dots
  \item[\rm{Predicate Symbols:}] a set of n--ary predicate symbols $r^n_i$ where $n \geq 2$ and $i \geq 1$: $r_{1}^{2}$,
    $r_{2}^{2}$, $r_{3}^{2}$, \dots, $r_{1}^{3}$, $r_{2}^{3}$, $r_{3}^{3}$, \dots, $r_{1}^{4}$, $r_{2}^{4}$, $r_{3}^{4}$, \dots
\item[\rm{Identity symbol:}] $\equiv$
\item[\rm{Tableau negation symbol:}] $\sim$.
  \end{description}


Some comment on  tableau negation is necessary. In most tableau systems there is no tableau negation 
but in some there is. 
Intuitively, the negation can be seen as a meta-level sign in a traditional signed tableau.
So, it is generally not needed. However to have, later, a universal notion of tableau inconsistency in the proof, we introduce $\sim$. When we examine a tableau system in which a branch is closed if it contains two expressions $A$ and $B$ of some kind, we always can think of an additional tableau rule that makes one more step adding $\sim A$ to the branch. So, any kind of tableau inconsistency can be reduced to appearance of  some pair: $A$ and $\sim A$ on a tableau level, but maybe not vice versa. Maybe sometimes we need the tableau negation to explicitly state an inconsistency of branch. Summing up, the systems with explicit negation can cover not less cases than the systems with no explicit negation.  

We also require some kind of terms to denote in the tableau language domains $\{ W_{i} \}_{i \in M}$ from semantic models. We call them \textit{labels} (in fact they are \emph{generalized labels}).

\begin{defi}\label{Tableu labels}[Tableau labels] The set of all labels $\TE$ is the least set that consists of  $w_{k}^{l}(m_{1}, \dots, m_{l})$, where:
$k, l \in \mathbb{N}, m_{1}, \dots, m_{l} \in \mathbb{I}$, $l \geq 1$, and  $w_{k}^{l}$ is a functional symbol. The members of  $\TE$  we denote by  $t_{1}, t_{2}, t_{3},$ \dots
\end{defi}

Next we define a set of tableau expressions whose members are just called \textit{expressions}.



\begin{defi}\label{Expression} [Expressions] For all formulae $A \in \For$, natural numbers $ k, l \in \mathbb{N}$,
    indices $i, j, m_{1}, \dots, m_{l}  \in \mathbb{I}$, $t \in \TE$, where $l \geq 2$, the set  $\textsf{Ex}$ is the least set that consists of: 

 \begin{itemize}

 \item $r_{k}^{l}(m_{1}, \dots, m_{l})$ \hspace{2cm} $\sim r_{k}^{l}(m_{1}, \dots, m_{l})$

 \item $i\equiv j$ \hspace{2cm} $\sim i\equiv j$

\item $\langle A, t \rangle$ \hspace{2cm} $\langle \sim A, t \rangle$.

 \end{itemize}

\end{defi}

\n  When the context is clear, we remove brackets $\langle$  $\rangle$ and write: $A, t$, or $\sim A, t$.

It seems a good moment to point out that our metatheory also examines non-labeled tableaux (just as it examines theories without a
direct semantic negation $\sim$, too). Notice that when we consider only expression $\langle A, t \rangle$ with the same label $t$, in fact we are dealing with non-labeled tableaux. Hence our further findings are also valid for this case.

Now, we would like to be able to  select indexes from sets of expressions.

\begin{defi}[Function choosing indexes] \textit{The function choosing indexes} is  the function $\circ \colon \textsf{Ex} \cup \Pow(\Ex) \longrightarrow \Pow(\mathbb{N})$ defined by condition, for all $A \in \For$, and $ k, l \in \mathbb{N}, i, j, m_{1}, \dots, m_{l} \in \mathbb{I}$, where $l \geq 1$:

\begin{itemize}

 \item $\circ(r_{k}^{l}(m_{1}, \dots, m_{l}))= \circ(\sim r_{k}^{l}(m_{1}, \dots, m_{l})) = \{m_{1}, \dots, m_{l} \}$

 \item $\circ(i\equiv j) = \circ(\sim i\equiv j) = \{ i, j \}$

\item $\circ(\langle A, w_{k}^{l} (m_{1}, \dots, m_{l})\rangle) = \circ(\langle \sim A, w_{k}^{l} (m_{1}, \dots, m_{l})\rangle) = \{m_{1}, \dots, m_{l} \}$

\item $\circ(X) = \bigcup \{ \circ(y): y \in X \}$, if $X \subseteq \Ex$,

\end{itemize}

\end{defi}


All tableau proofs can be rewritten with other indexes, however, preserving the presence  of $\For$ with varying indexes. But sometimes some subset of indexes $Z \subseteq \mathbb{I}$ may play a distinguish role (like logical values) and can not be replaced. So, we have a definition of similar sets of expressions with respect to some possibly empty  $Z$.

\begin{defi}[Similar sets of expressions]\label{Similar sets of expressions}

Let $X$, $Y \subseteq \Ex$ be sets of expressions. Let $Z \subseteq \mathbb{I}$. Set $X$ is  \textit{similar to} $Y$ \emph{with respect to}  $Z$ iff there is a~bijection $\ddag \colon \circ(X) \longrightarrow \circ (Y)$ such that  $\circ(X)$, $\circ(Y)$ are sets of indexes occurring in expressions of $X$ and $Y$ and 
for all $A \in \textsf{For}$ and $k, l \in \mathbb{N}, i, j, m_{1}, \dots, m_{l} \in \mathbb{I}$, where $l \geq 1$:

\begin{itemize}
  \item[(1)]  for all $x \in Z$, if $x \in \circ(X)$, then $\ddag(x) = x$

  \smallskip

  \item [(2)] for all kinds of expressions in \Ex:
\end{itemize}

\begin{itemize}

 \item[(a)] $r_{k}^{l}(m_{1}, \dots, m_{l}) \in X$ iff  $r_{k}^{l}(\ddag(m_{1}), \dots, \ddag(m_{l})) \in Y$

  \item[(b)] $i\equiv j \in X$ iff $\ddag(i) \equiv \ddag(j) \in Y$

  \item[(c)] $\langle A, w_{k}^{l} (m_{1}, \dots, m_{l}) \rangle \in X$ iff $\langle A, w_{k}^{l}(\ddag(m_{1}), \dots, \ddag(m_{l})) \rangle \in Y$

 \item[(d)]  $\sim r_{k}^{l}(m_{1}, \dots, m_{l}) \in X$ iff  $\sim r_{k}^{l}(\ddag(m_{1}), \dots, \ddag(m_{l})) \in Y$

 \item[(e)]   $\sim i\equiv j \in X$ iff $\sim \ddag(i) \equiv \ddag(j) \in Y$

  \item[(f)] $\langle \sim A, w_{k}^{l} (m_{1}, \dots, m_{l}) \rangle \in X$ iff $\langle \sim A, w_{k}^{l}(\ddag(m_{1}), \dots, \ddag(m_{l})) \rangle \in Y$,

\end{itemize}

\end{defi}


We need also a notion of a branch inconsistent set of expressions.

\begin{defi}[Branch inconsistent set of expressions]\label{Branch inconsistent set of expressions}

  Let $X \subseteq \textsf{Ex}$.   We say that $X$ is \textit{branch inconsistent} iff it contains at least two expressions which form a complementary pair of any of the following forms, for all $A \in \textsf{For}$ and $k, l \in \mathbb{N}, i, j, m_{1}, \dots, m_{l}, n  \in \mathbb{I}$, $t \in \TE$, where $l \geq 2$:

 \begin{itemize}


 \item[(a)] $r_{k}^{l}(m_{1}, \dots, m_{l})$, $\sim r_{k}^{l}(m_{1}, \dots, m_{l})$

 \item[(b)] $i\equiv j$, $\sim i\equiv j$

\item[(c)] $\langle A, t \rangle$, $\langle \sim A, t \rangle$,

 \end{itemize}

Otherwise, we call set $X$ \textit{branch consistent}. We abbreviate these to say that $X$  is \textit{b-inconsistent} or  respectively \textit{b-consistent}.

\end{defi}

So, a set can be b-consistent but it may expand to a b-inconsistent set. 
We just mean that  b-consistency is simply looking for a complementary pair. If it finds such a pair, then it says the set is   b-inconsistent. For example $\{\langle A, t \rangle, \langle \lnot\lnot\lnot A, t\rangle\}$ is b-consistent since there is no complementary pair. But if we later expand   this set via a double-negation cancelling rule say to $\{\langle A, t \rangle,$
$\langle \lnot\lnot\lnot A, t\rangle, \langle \lnot A, t\rangle, \langle \sim A, t \rangle \}$
then this set is b-inconsistent. So the first set is   b-consistent but it expands into a b-inconsistent set.

Let us observe that if two sets of expressions $X, Y \subseteq \Ex$ are similar, they are either both b-consistent, or both
b-inconsistent, by definitions~\ref{Similar sets of expressions} and \ref{Branch inconsistent set of expressions}.

\begin{fa}\label{Two similar sets of expression are either consistent or inconsistent} If two sets of expressions are similar, they are either both b-consistent, or both b-inconsistent
\end{fa}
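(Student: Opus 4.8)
The plan is to transport complementary pairs back and forth along the witnessing bijection. Let $X$ be similar to $Y$ with respect to some $Z \subseteq \mathbb{I}$, and let $\ddag \colon \circ(X) \longrightarrow \circ(Y)$ be the bijection guaranteed by Definition \ref{Similar sets of expressions}. It suffices to prove the equivalence that $X$ is b-inconsistent iff $Y$ is b-inconsistent, since by Definition \ref{Branch inconsistent set of expressions} b-consistency is just the negation of b-inconsistency.

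For the left-to-right direction I would assume that $X$ contains a complementary pair and argue by the three forms (a)--(c) of Definition \ref{Branch inconsistent set of expressions}. If $X$ contains $r_{k}^{l}(m_{1},\dots,m_{l})$ and $\sim r_{k}^{l}(m_{1},\dots,m_{l})$, then clauses (a) and (d) of similarity (read left to right) place $r_{k}^{l}(\ddag(m_{1}),\dots,\ddag(m_{l}))$ and $\sim r_{k}^{l}(\ddag(m_{1}),\dots,\ddag(m_{l}))$ in $Y$, which is a complementary pair of form (a); the $\equiv$-pair of form (b) is handled identically by clauses (b) and (e). The case of form (c), with $\langle A, t\rangle, \langle \sim A, t\rangle \in X$, is the only one that needs a moment's care: here $t = w_{k}^{l}(m_{1},\dots,m_{l})$ for some functional symbol and indices, and clauses (c) and (f) apply the \emph{same} bijection to the \emph{same} index tuple, so the two resulting expressions $\langle A, w_{k}^{l}(\ddag(m_{1}),\dots,\ddag(m_{l}))\rangle$ and $\langle \sim A, w_{k}^{l}(\ddag(m_{1}),\dots,\ddag(m_{l}))\rangle$ carry one and the same label. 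Hence they again form a complementary pair, now of form (c), and $Y$ is b-inconsistent.

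The right-to-left direction I would run symmetrically, but invoking that $\ddag$ is \emph{onto} $\circ(Y)$ rather than appealing to any symmetry of the similarity relation itself. Concretely, if $Y$ contains a complementary pair, its indices all lie in $\circ(Y)$ (by the definition of $\circ$), so surjectivity yields preimages under $\ddag$; reading the matching pair of similarity clauses from right to left then deposits the corresponding complementary pair in $X$. This handles all three forms exactly as above, so $X$ is b-inconsistent, completing the equivalence.

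I expect the only real obstacle to be the bookkeeping in form (c): one must notice that a label is itself a functional term $w_{k}^{l}(m_{1},\dots,m_{l})$ and that similarity acts on its argument indices, so that the $A$-expression and the $\sim A$-expression are relabelled \emph{identically} and thus remain complementary. The other two forms are purely mechanical. It is worth flagging that I deliberately avoid claiming that similarity is a symmetric relation --- condition (1) of Definition \ref{Similar sets of expressions} need not be preserved by $\ddag^{-1}$ when $Z$ meets $\circ(Y) \setminus \circ(X)$ --- and instead use only the bijectivity of $\ddag$ together with the biconditional form of clauses (a)--(f).
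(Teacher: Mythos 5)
Your proof is correct and is essentially the paper's own argument: the paper offers no detailed proof, stating the fact as an immediate observation ``by definitions \ref{Similar sets of expressions} and \ref{Branch inconsistent set of expressions}'', and your transport of complementary pairs along the bijection $\ddag$ (using the biconditional clauses in both directions, with the careful note that clauses (c) and (f) relabel the two members of a pair identically) is exactly the intended unpacking of that observation. Your flag about not assuming symmetry of the similarity relation is a sound extra precaution, though the paper never raises the issue.
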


Let $\mathfrak{M}= \langle \{W_{i}\}_{i\in M}, \{ R_{j}\}_{j\in N}, \vartheta \rangle \in \M$. We consider  functions $f \colon \mathbb{N}
\longrightarrow M \cup N$. Let us notice that:  if $f(i) = k$, for $k \in M$, and $W_{k}$ is a domain that consists of $n$-tuples in $\{W_{i}\}_{i\in M}$, then  $w_{i}^{n}$ is an n--ary functional symbol; if $f(i) = k$, for $k \in N$, and $R_{k}$ is an n--ary relation in $\{R_{j}\}_{j\in N}$, then  $r_{i}^{n}$ is an n--ary predicate symbol.

We assume some family $\{f_{\m}\}_{\m \in \M}$ of functions $f$ --- exactly one  for each model $\m \in \M$.
The function defined on the ground of $\{f_{\m}\}_{\m \in \M}$ by the condition $\f(\m, i) = f_{\m}(i)$ will be called \textit{the fitting function} and denoted by $\f$ (so $\f \colon \M \times \mathbb{N} \longrightarrow M \cup N$).
 When  a model $\m$ is fixed, then instead of $\f(\m, i)$ we write $\f(i)$.


We can now generalize the notion of interpretation  from the formulae to sets of expressions in a~model from $\M$.

\begin{defi}[Model suitable for a set of expressions]\label{Model sutable to a set of expressions}

Let $X \subseteq \Ex$ and $\m= \langle \{W_{i}\}_{i\in M}, \{ R_{j}\}_{j\in N}, \vartheta \rangle \in \M$. Let $\f \colon \M \times \mathbb{N} \longrightarrow M \cup N$ be the fitting function for $\M$. Model $\mathfrak{M}$ is \textit{suitable} for $X$ iff
 there is a function: $f': \mathbb{I} \longrightarrow \bigcup_{i \in M}  W_{i}$, such that the following conditions are fulfilled, for all  $A \in \For$ and $ k, l \in \mathbb{N}, i, j, m_{1}, \dots, m_{l} \in \mathbb{I}$, where $l \geq 1$:

\begin{itemize}



\item[(a)]  if $\langle A, w_{k}^{l}, (m_{1}, \dots, m_{l}) \rangle \in X$,  then $\langle f'(m_{1}), \dots, f'(m_{l}) \rangle \in W_{\f(k)}$ and \break $\m, \langle f'(m_{1}), \dots, f'(m_{l}) \rangle \models A$ 

\item[(b)]  if $\langle \sim A, w_{k}^{l}, (m_{1}, \dots, m_{l}) \rangle \in X$,  then $\langle f'(m_{1}), \dots, f'(m_{l}) \rangle \in W_{\f(k)}$ and \break $\m, \langle f'(m_{1}), \dots, f'(m_{l}) \rangle \not  \models A$ 

\item[(c)]  if $r_{k}^{l}(m_{1}, \dots, m_{l}) \in X$, then $\langle f'(m_{1}), \dots, f'(m_{l}) \rangle \in R_{\f(k)}$

\item[(d)] if $\sim r_{k}^{l}(m_{1}, \dots, m_{l}) \in X$, then $\langle f'(m_{1}), \dots, f'(m_{l}) \rangle \not \in R_{\f(k)}$


\item[(e)] if $i\equiv j \in X$, then $f'(i) = f'(j)$\footnote{Clearly $\equiv$ is a symbol of the tableau language, while = and $\neq$ are metalanguage symbols. In our framework they are, respectively: identity of denotation and non-identity of denotation of a pair of indexes under mapping $f'$.}

\item[(f)] if $\sim i\equiv j \in X$, then $f'(i)  \neq f'(j)$,

\end{itemize}

\end{defi}

It is clear that for a b-inconsistent set of expressions no model is suitable. So, by \ref{Branch inconsistent set of expressions},  \ref{Model sutable to a set of expressions} we have.

\begin{fa}\label{Fact No model is suitable for b-inconsistent set} No model is suitable for a b-inconsistent set of expressions.
\end{fa}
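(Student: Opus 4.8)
The plan is to prove the contrapositive-style statement directly by unfolding the two relevant definitions and deriving a contradiction. Suppose, for the sake of contradiction, that $X \subseteq \Ex$ is b-inconsistent yet some model $\m \in \M$ is suitable for $X$ via a witnessing function $f' \colon \mathbb{I} \longrightarrow \bigcup_{i \in M} W_i$ as in Definition~\ref{Model sutable to a set of expressions}. By Definition~\ref{Branch inconsistent set of expressions}, b-inconsistency means $X$ contains a complementary pair of one of the three forms (a), (b), (c). First I would fix such a pair and then argue case by case that suitability forces two contradictory semantic facts to hold simultaneously.

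The three cases are entirely parallel and each is short. For form (a), suppose $r_k^l(m_1,\dots,m_l) \in X$ and $\sim r_k^l(m_1,\dots,m_l) \in X$. Clause (c) of suitability gives $\langle f'(m_1),\dots,f'(m_l)\rangle \in R_{\f(k)}$, while clause (d) gives $\langle f'(m_1),\dots,f'(m_l)\rangle \notin R_{\f(k)}$ --- an immediate contradiction. For form (b), if $i \equiv j \in X$ and $\sim i \equiv j \in X$, then clause (e) yields $f'(i) = f'(j)$ and clause (f) yields $f'(i) \neq f'(j)$, again contradictory. For form (c), if $\langle A, t\rangle \in X$ and $\langle \sim A, t\rangle \in X$ with $t = w_k^l(m_1,\dots,m_l)$, then clause (a) gives $\m, \langle f'(m_1),\dots,f'(m_l)\rangle \models A$ and clause (b) gives $\m, \langle f'(m_1),\dots,f'(m_l)\rangle \not\models A$, which is impossible since $\models$ and $\not\models$ are mutually exclusive by definition. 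In every case the assumption of suitability collapses, so no model can be suitable for a b-inconsistent $X$.

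I do not expect any serious obstacle here, since the statement is essentially an observation that the syntactic notion of a complementary pair is designed to mirror a semantic impossibility, and the definitions line up one-to-one. The only point requiring a little care is the implicit assumption in case (c) that the same label $t$ maps under $f'$ to the same tuple $\langle f'(m_1),\dots,f'(m_l)\rangle$ in both applications of suitability; this is immediate because $f'$ is a single fixed function and the two expressions share the identical label $t = w_k^l(m_1,\dots,m_l)$, so both clauses reference the same argument tuple. Similarly in cases (a) and (b) the indices $m_1,\dots,m_l$ (respectively $i,j$) are literally the same in both members of the pair, so $f'$ produces the same object. Hence the contradiction is genuine and does not rest on any hidden coincidence of labels.

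It is worth remarking that this fact is purely semantic bookkeeping and makes no use of the similarity machinery of Definition~\ref{Similar sets of expressions} or of Fact~\ref{Two similar sets of expression are either consistent or inconsistent}; it depends only on Definitions~\ref{Branch inconsistent set of expressions} and \ref{Model sutable to a set of expressions}, exactly as the paragraph preceding the statement indicates. The whole argument therefore amounts to the three short case-by-case derivations above, and the proof can be stated in a few lines.
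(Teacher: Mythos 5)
Your proposal is correct and matches the paper's intended argument: the paper simply asserts the fact is ``clear'' by Definitions~\ref{Branch inconsistent set of expressions} and \ref{Model sutable to a set of expressions}, and your three-case unfolding (complementary pair of form (a), (b), or (c) clashing with suitability clauses (c)/(d), (e)/(f), and (a)/(b) respectively) is exactly the routine verification being left implicit. Your extra care about the single fixed function $f'$ acting on identical indices/labels is a reasonable point to make explicit, but it does not depart from the paper's route.
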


\section{Rules, branches and tableaux}\label{sec:4}

The crucial notion of any tableaux system is a notion of tableau rule. Here we propose a very abstract view on this notion. Having a set of expressions $\textsf{Ex}$ we can establish some general conditions for defining rules. We start with the initial notion of rule. Assume that $\Pow (\textsf{Ex})$  is the set of  all subsets of the set $\textsf{Ex}$. Let
$\Pow(\textsf{Ex})^{n}$ be an
$n$--ary Cartesian product $\underbrace{\Pow(\textsf{Ex})\times \dots \times \Pow(\textsf{Ex})}_{n}$, for some
$n \in \mathbb{N}$, and let $\bigcup_{n \in \mathbb{N}} \Pow(\textsf{Ex})^{n}$  be the union of all such $n$--ary Cartesian
products such that $n \geq 2$.

\begin{defi}[Rule]\label{Rule}

 \textit{Rule} is a subset $\R \subseteq \bigcup_{n \in \mathbb{N}} \Pow(\textsf{Ex})^{n}$ such that
if $\langle X_{1}, \dots, X_{n} \rangle$ $\in \R$, then the following conditions are satisfied:

\begin{itemize}

\item[(a)] $X_{1} \subset X_{i}$, for all $1 < i \leq n$

\item[(b)] $X_{1}$ is $b$-consistent

\item[(c)] for all $1 < k \leq n$ and $1 < l \leq n$,  if $k \neq l$, then $X_{k} \neq X_{l}$.

\end{itemize}

\n We often refer to the set $X_{1}$ as the \emph{input} and to the sets $X_2, X_3, \cdots , X_n$ as the \emph{outputs} of a given rule. \end{defi}

Our rules extend properly a set of expressions by (a): an input is always a proper subset of outputs, so a rule does not work
trivially. By (b), any input is always b-consistent, so no rule can be used senselessly, extending a set that cannot generate a
counter-model. (So, rules have an internal mechanism that blocks extending b-inconsistent sets.)  Finally, by (c), rules do not
contain duplicated outputs.

Let us distinguish a set of indexes $\LV \subseteq \mathbb{I}$ that plays the role of signs of logical values in our language (they denote objects of some domain in a model: see example \ref{Example mulitivalued and intensional}) or other roles that require constants. 
The set $\LV$ can be empty in case we have a two valued semantics or we do not want to incorporate values in the labels.  One more helpful notion to finish. Let $Z \subseteq \Ex$. Then $Z$ is \emph{co--infinite}  iff $\mathbb{N} \setminus \circ(Z)$ is infinite.

\begin{defi}[Tableau rule]\label{Tableau rule}

Let $\R$ be a rule and  $\langle X_{1}, \dots, X_{n} \rangle$ $\in \R$. $\R$ is \emph{a tableau rule} iff:

\begin{itemize}

\item[(CS)] $\R$ is \emph{closed under similar sets}: for any $Y_{1} \subseteq \Ex$, if $Y_{1}$ is similar
to $X_{1}$ with respect to  $\LV$, then there exist sets of expressions  $Y_{2}$, \dots, $Y_{n}$, such that $\langle Y_{1}$, $\dots$, $Y_{n} \rangle \in \R$  and for all $1 < i \leq n$, $Y_{i}$
is similar to $X_{i}$ with respect to $\LV$

\item[(CF)] $\R$ is \emph{closed under finite sets}: if $X_{1}$ is a finite set, then for all $1 < i \leq n$, $X_{i}$ is a finite set

\item[(CC)] $\R$ is \emph{closed under cores}: for some finite set $Y \subseteq X_{1}$ there exists exactly one  $n$--tuple $\langle Z_{1}$, $\dots$, $Z_{n} \rangle \in \R$ such that:

\begin{enumerate}

\item  $Z_{1} = Y$

\item for all  $1< i \leq n$, $Z_{i} = Z_{1} \cup (X_{i} \setminus X_{1})$

\item there does not exist a proper subset $U_{1} \subset Y$ and
$n$--tuple $\langle U_{1}$, $\dots$, $U_{n} \rangle \in \R$ such that for all $ 1 < i \leq n$, $U_{i} = U_{1} \cup (Z_{i} \setminus Z_{1})$

\end{enumerate}

Any such $n$--tuple $\langle Z_{1}$, $\dots$, $Z_{n} \rangle$ is called \emph{a core of rule} $\R$ \emph{in}  $\langle X_{1}, \dots, X_{n} \rangle$.
\end{itemize}

\begin{itemize}

\item[(CE)] $\R$ is \emph{closed under expansion}:  for any $b$-consistent set of expressions $Z_{1}$ such that:

\begin{enumerate}

\item  $X_{1} \subset Z_{1}$

\item $Z_{1}$ is co--infinite

\item for all $1 < i \leq n$, $X_{i}$ is not similar with respect to $\LV$ to any subset of $Z_{1}$,

\end{enumerate}

 if $n$--tuple $\langle W_{1}$, $\dots$, $W_{n} \rangle$ is a core of rule $\R$ in $\langle X_{1}, \dots, X_{n} \rangle$, then:

\begin{enumerate}

\item  there are  $n - 1$  sets of expressions $Z_{2}$, \dots, $Z_{n}$ such that $\langle Z_{1}, \dots, Z_{n} \rangle \in \R$

\item and for all $1 < i \leq n$, $W_{i}$ is  similar with respect of $\LV$ to $W_{1} \cup (Z_{i} \setminus Z_{1})$.

\end{enumerate}

\end{itemize}

\end{defi}

By saying that a rule $\R$ \textit{was applied to} $X_{1}$ \textit{to obtain} $Y$, we mean that there exists one $\langle X_{1},
\dots, X_{n} \rangle \in \R$ such that for some $1 < i \leq n$,  $Y = X_{i}$.

The explanation of both definitions is included in the following example of rules written in the proposed manner.

\begin{ex}\begin{myfont}[Tableau rules]\label{Example: tableau rules}
  First, we comment on definition~\ref{Rule}. To make it easy to add comments we assume that a rule is represented as a schema-fraction where $2 \leq n$: 
\begin{align*}
\begin{array}{c}
  X_1 \\ \hline
  X_{2} | \dots | \ X_{n}
\end{array}
\end{align*}

According to definition~\ref{Rule}, instantiations of $X_{1}, \dots, X_{n}$ are subsets  of $\Ex$.
As we see, rules are n-tuples of sets of expressions, in fact they are intended to extend sets, not uniquely, if we have more outputs. However, only in ways limited by conditions (a), (b), (c) of Definition~\ref{Rule}.

Not all rules are tableau rules. Definition \ref{Tableau rule} specifies four properties that make a tableau rule from a rule. Let
us inspect them via the examples.  We can start with some possible rule $(\R_{\vee})$ for propositional connective $\vee$:

\begin{align*}
 \begin{array}{c}
  X \cup \{ \langle A \vee B, w(i, j) \rangle \} \\ \hline
  X \cup \{\langle A \vee B, w(i, j) \rangle, \langle A , w(i, j) \rangle \} ~|~ X \cup \{\langle A \vee B, w(i, j) \rangle, \langle B , w(i, j) \rangle \}
\end{array}
\end{align*}

\noindent where $A \vee B$ is a disjunction of two formulae, $i \in \mathbb{I}$, $j$ is a fixed and designated logical value from $\LV$, $\langle C, w(i, j) \rangle$ codes that formula $C$ has in world $i$ the value $j$, and  set $X \cup \{ \langle A \vee B, w(i, j) \rangle \}$ is b-consistent.

It looks complicated, but it is a formal and paradigmatic form of the rule. If we assume that $(\R_{\vee})$ fulfills the
properties of definition~\ref{Rule}, than we can slim it down. When we omit also brackets $\langle, \rangle$, it can be written
as:

\begin{align*}
(\text{\R}_{\vee})\quad\quad \begin{array}{c}
 A \vee B, w(i, j) \\ \hline
   A , w(i, j) ~|~  B , w(i, j)
\end{array}
\end{align*}

\noindent Nonetheless, we still can apply it only to sets that are b-consistent and do not include the output. So far, we have
exemplified definition~\ref{Rule}.

Obviously, if we know that $i$ is any point of relativization, and $j$ is a~fixed, designated  value, we can also abandon functional symbol $w$ (it is only for a general, metatheoretical use to denote  the correct domain in the structure):

\begin{align*}
(\text{\R}_{\vee})\quad\quad \begin{array}{c}
 A \vee B, i, j  \\ \hline
   A , i, j ~|~  B , i, j
\end{array}
\end{align*}

\noindent  Going further, if we work in a two-valued logic and interpret $\vee$ as the classical disjunction, we can also omit $j$, and when we have only one world, one point of relativization (so a non-modal logic), we can do the same with an arbitrary $i$, assuming just:

\begin{align*}
(\text{\R}'_{\vee})\quad\quad \begin{array}{c}
 A \vee B \\ \hline
   A  ~|~  B
\end{array}
\end{align*}

Up to now, by example of $(\R_{\vee})$, and its particularization  $(\R'_{\vee})$, we have illustrated three of the conditions for a rule being a tableau rule according to definition~\ref{Tableau rule}. The rule is closed under similar sets (CS), because $(\R_{\vee})$ is invariant with respect to the designated logical value denoted  by $j$, but defined for other indexes not already reserved to denote any logical values from $\LV$. The same applies to being closed under finite sets (CF). By  rule $(\R_{\vee})$, we always extend input to outputs by adding a finite set of expressions. So, any step of the proof by rule satisfying  (CF) must introduce a finite piece of information to the proof. Finally, the last property (CC). All  cores  of rule $(\R_{\vee})$ are  ordered triples:  $ \langle \{ \langle A \vee B, w(i, j) \rangle \},  \{ \langle A , w(i, j) \rangle \}, \{ \langle B , w(i, j) \rangle \} \rangle$. Although, we apply the rule to the first member, we must be careful about the whole set of expressions. The existence of cores enables us to maintain a balance between single additions, atoms of our reasoning and the~whole set of information on a branch. Thanks to this we are sure  that the use of rule is non-trivial, since additional clauses (b-consistency, being superset, etc.) refer to the whole branch.

An additional clause is also introduced by being closed under expansion (CE). We shall explain it by analysing another rule whose focus is the intensional case  $\Diamond A$. We would  like to understand $\Diamond A$  as a sentence with the possibility operator, but in a many-valued frame, under some specific, exemplary meaning. 


\begin{align*}
(\text{\R}_{ \Diamond})\quad\quad \begin{array}{c}
 X \cup \{\langle  \Diamond A,  w(i, j) \rangle \}\\ \hline
 X \cup \{ \langle  \Diamond A,  w(i, j) \rangle, r(i,k), \langle A, w(k, j) \rangle\}
\end{array}
\end{align*}


\noindent We assume that $j$ is a designated logical value and fixed. The respective expressions mean:

\begin{itemize}

\item $\langle  \Diamond A,  w(i, j) \rangle$ ---  sentence $\Diamond A$ in world $i$ possesses the logical value $j$,

\item $r(i,k)$ ---  world $k$ is accessible from world $i$,


 \item $\langle A, w(k, j) \rangle$ --- sentence $A$ in world $k$ possesses logical value $j$.

\end{itemize}

\noindent The meaning seems clear. But the rule requires two clauses:

\medskip

(1) $k \not \in \circ(X \cup \{\langle  \Diamond A,  w(i, j) \rangle \})$

\smallskip

(2)  for all $k' \in \mathbb{N}$, $\{r(i, k'), \langle A, w(k', j) \rangle\} \not \subseteq X$.

\medskip

\noindent Clause (1) obliges to introduce  a fresh label in case of use of  $(\text{\R}_{ \Diamond})$ to denote a world. It  must be fresh in a proof   to ensure that we cannot obtain a   b-inconsistency simply by reusing the existing values for these indexes. Clause (2) is to prevent us from an~unnecessary use of $(\text{\R}_{ \Diamond})$, if  we already have sufficient conditions to verify $\Diamond A $ at world $i$ with value
$j$. In fact none of the clauses can be formally imposed, if the form of rule $(\text{\R}_{ \Diamond})$ is:

\begin{align*}
(\text{\R}'_{ \Diamond})\quad\quad \begin{array}{c}
 \Diamond A, i, j \\ \hline
 r(i,k) \\
 A, k, j 
\end{array}
\end{align*}

\noindent since we have no reference to premisses other than merely $\Diamond A, i, j$ (a branch or a tableau are beings of a higher order, dependent on the definition of rules, so formally we can not impose any clauses on them!).
The clauses of rule $(\text{\R}_{ \Diamond})$ justify (CE) well. The condition of being closed under expansion in
definition \ref{Tableau rule} says that if a rule is defined for some input $X_{1}$, then under several restrictions it is also
defined for any superset $Z_{1} \supseteq X_1$. Of course, some restrictions (like being b-consistent) are standard in the light of what we have already said. But two of them are essentially new.

First $Z_{1}$ must be co-infinite.  That is, there are still an infinite, unused number of indexes $\mathbb{I} \setminus \circ(Z_{1})$.  So, if we need to introduce new ones to outputs, we can. It is clear that the restriction corresponds to clause (1). The existence of expansion of a given rule should not be expected, if the specific clause requiring new indexes can not be fulfilled, because no fresh indexes are available.

Second, for all  $1 < i \leq n$, $X_{i}$ (output of input $X_{1}$) is not similar to any subset of $Z_{1}$ (with respect to $\LV$,  which is invariant). Otherwise, $Z_{1}$  would provide the same semantic piece of information as  some $X_{i}$, and it would not make sense to use the rule on $Z_{1}$ (in fact, on the core that $X_{1}$ and $Z_{1}$ share). Clearly, this corresponds to clauses like clause (2), where we have enough information to not use the rule. So, once again, there is no need for expansion here.

It is worth noting that the conditions (CS), (CF), (CC), (CE), which should be met by a tableau rule according to the definition \ref{Tableau rule} of tableau rule are usually met by all rules defined in tableau systems. For example, the condition (CS) is responsible for the structurality of the rule, the condition (CF) for the fact that the effect of applying the rule is a finite set, the condition (CC) for the existence of a minimal set of expressions to which we apply the rule, while the condition (CE) says that the rule can be applied to supersets of minimal sets of expressions, provided that this brings new information. 

However, these conditions have never been formulated generally and abstractly, which allows us in section \ref{sec:5} to demonstrate some metatheoretical dependencies in general, assuming only that the tableau rules satisfy conditions (CS), (CF), (CC), (CE) and are rules according to the definition of a rule \ref{Rule}.
\end{myfont}
\qed{}
\end{ex}

The restrictions proposed in definitions of rule \ref{Rule} and tableau rule \ref{Tableau rule} which were mentioned in example \ref{Example: tableau rules} are not only in place to render the proofs more economical. First of all, the reason is that at the metatheoretical level, the proofs of general facts about our systems rely on certain structural properties of rules. The properties we have introduced are sufficient (but probably not necessary) for these purposes (see section \ref{sec:5}).

A further assumption we make is that by $\TR$ we denote some fixed set of tableau rules. Having a fixed $\TR$ we can work on more complex tableau notions, such as those of a branch and a tableau.

In the traditional approach a tableau  is usually defined first and then branches are extracted from it. Our proposal here is reverse.  We propose a multistage, set-theoretical  construction of tableau notions. We will deal with three levels of ever more complex set-theoretical objects in our theory: (1) tableau rules (sets of n-tuples), (2) branches (special sequences of sets), (3)   tableaux (special sets of branches). The reader may ask how suitable set of branches can be included in the same tableau.
The answer is the condition of cohesion given in Definition~\ref{Tableau} of tableau.
Furthermore, the notion of complete tableau from Definition~\ref{Complete incomplete tableau} answers the question. It is worth saying that the view we present here is translatable into the traditional view (by graphs or pictures).

Intuitively, a  branch is a sequence of sets of expressions: $X_{1} \subseteq X_{2} \subseteq \dots \subseteq X_{n}$
(possibly infinite), where for any $1 \leq i < n$, $X_{i+1}$ is a result of application of some tableau rule to set $X_{i}$. Formally:

\begin{defi}[Branch]\label{Branch}  Let $K = \mathbb{N}$ or $K = \{1, 2, \dots, n \}$, where $n \in \mathbb{N}$.
Let $X$ be a set of expressions. \textit{A branch} (or \textit{a branch starting from} $X$) is any sequence
 $\phi: K \longrightarrow \Pow (\Ex)$ satisfying the conditions:

\begin{enumerate}

\item $\phi(1) = X$

\item for all $i \in K$: if $i+1 \in K$, then there exists a rule $\R \in \TR$   and $n$-tuple
$\langle Y_{1}, \dots Y_{n}\rangle$ $\in$ $\R$, such that $\phi(i) = Y_{1}$ and $\phi(i+1) = Y_{k}$, for some $1 < k \leq n$.

\end{enumerate}

\end{defi}

We assume that branches of a given logic will be denoted by small Greek letters $\phi, \psi, \chi$ etc., while sets of branches by big Greek letters $\Phi, \Psi$  etc.

\noindent   From the definition  of a branch \ref{Branch} we have a corollary:

\begin{coro}\label{Coro Set of expressions is branch}

$\bullet$ Let $X \subseteq \Ex$. Then $\phi \colon \{ 1 \} \longrightarrow \Pow(\Ex)$, where $\phi(1) = X$, is a branch.

$\bullet$ Let $\phi\colon  K \longrightarrow \Pow(\Ex)$ be a branch.  Then $\psi \colon \{ 1 \} \longrightarrow \Pow(\Ex)$, where $\psi(1) = \bigcup \{ \phi(i) | i \in K \}$,  is also a~branch. 

\end{coro}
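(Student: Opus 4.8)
The plan is to prove both bullet points of Corollary~\ref{Coro Set of expressions is branch} directly from Definition~\ref{Branch}, since each is a simple verification that a certain map satisfies the two defining conditions of a branch.

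For the first bullet, I would take $K = \{1\}$ and define $\phi\colon\{1\}\longrightarrow\Pow(\Ex)$ by $\phi(1)=X$. Condition (1) of Definition~\ref{Branch} holds trivially since $\phi(1)=X$ by construction. Condition (2) is vacuously satisfied: it quantifies over all $i\in K$ with $i+1\in K$, but here $K=\{1\}$ contains no such $i$ (as $2\notin K$), so there is nothing to check. Hence $\phi$ is a branch. This is immediate and requires no real work.

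For the second bullet, let $\phi\colon K\longrightarrow\Pow(\Ex)$ be a branch and set $\psi\colon\{1\}\longrightarrow\Pow(\Ex)$ with $\psi(1)=\bigcup\{\phi(i)\mid i\in K\}$. Again the domain is the singleton $\{1\}$, so condition (1) holds by definition of $\psi$, and condition (2) is once more vacuous because $2\notin\{1\}$. Therefore $\psi$ is a branch as well. The key observation making this legitimate is simply that $\psi(1)$ is a well-defined member of $\Pow(\Ex)$: since each $\phi(i)\subseteq\Ex$, the union $\bigcup\{\phi(i)\mid i\in K\}$ is a subset of $\Ex$, hence an element of $\Pow(\Ex)$, so the codomain is respected.

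I do not anticipate any genuine obstacle here, since both assertions reduce to the fact that a one-element domain makes the step condition (2) vacuously true. The only point worth stating carefully is that the second part does \emph{not} assert that $\psi(1)$ arises from a single rule application to some predecessor; it merely repackages the accumulated information of the original branch as a length-one branch. The substantive content of the corollary is thus conceptual rather than technical: every set of expressions counts as a (trivial) branch, and the total content of any branch can itself be viewed as a one-step branch. If one wished to strengthen the second bullet to say something about $\psi(1)$ being reachable by rules, that would require invoking the monotonicity built into Definition~\ref{Rule}(a) and possibly a limit argument for infinite $K$; but as stated, no such argument is needed.
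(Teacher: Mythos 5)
Your proof is correct and matches the paper's treatment: the paper offers no explicit proof, presenting the corollary as immediate from Definition~\ref{Branch}, and your verification (condition (1) by construction, condition (2) vacuous on the singleton domain $K=\{1\}$, plus the observation that $\bigcup\{\phi(i)\mid i\in K\}\in\Pow(\Ex)$) is exactly the argument the paper leaves implicit.
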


\noindent Taking into account corollary \ref{Coro Set of expressions is branch}, where we make a union of a branch, instead of $\psi(1) = \bigcup \{ \phi(i) | i \in K \}$ we will  write $\bigcup \phi$ as a one--member--branch made from  the union  of the members of the branch $\phi:  K \longrightarrow \Pow(\textsf{Ex})$.

Different kinds of branches we can now define below.

 \begin{defi}[Closed/open/complete branch]\label{Branch: closed/open,complete} Let $\phi: K \longrightarrow \Pow(\textsf{Ex})$ be a branch.

  \begin{itemize}

\item $\phi$ is \textit{closed} iff there exists $i \in K$ such that $\phi(i)$
  is a b--inconsistent set;   $\phi$ is \textit{open} iff it is not closed

\item $\phi$ is \textit{complete} iff  there is no branch $\psi:  \{1, 2\} \longrightarrow \Pow (\textsf{Ex})$ such that $\bigcup \phi = \psi(1)$.
\end{itemize}

\end{defi}

\noindent Note that if $\phi(i)$ is a b--inconsistent set, for some $i \in K$, then it is the last  element of a branch  $\phi$, because rules cannot be applied to b-inconsistent sets,  by \ref{Rule}. So, a closed branch is always finite. The definition of complete branch says in turn that a complete branch $\phi$ has a full information in respect of the tableau rules $\TR$. If we take the union $\bigcup \phi$ of the whole chain there is no possibility to extend it by means of any tableau rule from $\TR$. If the reader asks why we define a complete branch in this way, we will answer that branch $\phi$ can be infinite, but not complete. If after summing $\bigcup \phi$ no tableau rule applies, then $\phi$ is complete. If the opposite is true, then although $\phi$ is infinite, it is not complete. This definition therefore also covers the case of infinite branches.

Let $t \in \TE$ and $X \subseteq \For$. By $X^{t}$ we denote the set $\{\langle A, t \rangle : A \in X \}$.  \label{Remarks on starting label} Additionally, we assume  a label  $\t$ as a starting point of branches and tableaux. Let us recall that the label $\t$ must have the form $w_{j}^{i}(x_{1}, \dots, x_{i})$,  for some functional symbol  $w_{j}^{i}$ and $x_{1}, \dots, x_{i} \in \mathbb{I}$.  It is also assumed that $w_{j}^{i}$ under the fitting function $\f$ is related to the k$^{th}$ domain $W_{k}$ in models from $\M$. We can now introduce the notion of branch consequence relation which is a~tableau counterpart of $\models_{\M}$.

\begin{defi}[Branch consequence relation]\label{Branch consequence relation}

Let $X \cup \{ A \}\subseteq \textsf{For}$. Formula $A$ is a \textit{branch consequence of} $X$
(in short: $X \vartriangleright_{\TR} A$) iff there exists  a~finite set $Y \subseteq X$, such that all complete branches starting
from the set $Y^{\t}\cup\{\langle \sim A, \t \rangle\}$ are closed.

\end{defi}

It would seem that there are infinitely many complete branches that we would need to check to know if $X \vartriangleright_{\TR} A$. In practice, however, a tableau is just a minimal and  sufficient set of branches that we need to check, so we could treat the construction of a complete tableau as a way of choosing a sufficient number of branches from a much wider universe of complete branches.

We say a branch $\psi$ is \textit{maximal in  set of branches} $\Phi$ (in short: $\Phi$--\textit{maximal}) iff there does not exist  a
branch $\phi \in \Phi$, such that $\psi \subset \phi$.

\begin{defi}[Tableau]\label{Tableau}
Let $X \cup \{A\}\subseteq \textsf{For}$ and $\Phi$ be a set of branches.
An ordered triple $\langle X, A, \Phi \rangle$ is \textit{a tableau} for $\langle X, A \rangle$
(or just \textit{tableau}) iff:

\begin{itemize}

\item $\Phi$ is a non-empty subset of the set of branches starting from the set
 $ X^{\t} \cup \{\langle \sim A, \t \rangle \}$,
 (i.e. if $\psi \in \Phi$, then $\psi(1) = X^{\t} \cup \{\langle \sim A, \t  \rangle \}$)

\item each branch in $\Phi$ is $\Phi$-maximal

\item for any $n, i \in \mathbb{N}$ and any branches
$\psi_{1}$, \dots, $\psi_{n} \in \Phi$, if:

\begin{itemize}

\item $i$ and $i+1$ are in domains of functions $\psi_{1}$, \dots, $\psi_{n}$
\item for any $1 \leq  k, k' \leq n$ and any $o \leq i$, $\psi_{k}(o) = \psi_{k'}(o)$

\end{itemize}

\noindent then there exists  a rule $\R \in \TR$ and an ordered $m$-tuple
$\langle Y_{1}$, $\dots$, $Y_{m} \,\rangle$ $\in$ $\R$, where $1 < m$,  such that for all $1 \leq k \leq n$:

\begin{itemize}

\item $\psi_{k}(i) = Y_{1}$

\item there exists  $1 < l \leq m$, such that $\psi_{k}(i+1) = Y_{l}$.

\end{itemize}

\end{itemize}

\end{defi}

According to definition \ref{Tableau}, a tableau is a set of branches vertically and horizontally ordered by the set of rules $\TR$ and updated by subsequent applications. Since
we would like to define a~complete tableau as a maximal set of branches restricted by certain conditions,  we must take into account a~problem concerning branches that can (but may not) occur in a complete tableau. Let us consider an example.

\begin{ex}\begin{myfont}[Redundant branches]\label{Example: redundant branches} Consider a tableau system for a logic with two logical values (truth and falsity) and with only one point of relativization. 
Let classical negation $\neg$ and classical disjunction $\vee$  belong to the set  $\Con$ of symbols of the logic. Finally, let the set of tableau rules among others consist of symbolically written rules:

\begin{align*}
(\text{\R}_{\neg \neg})\quad\begin{array}{c}
 \neg \neg A \\ \hline
   A
\end{array}\qquad\qquad\qquad(\text{\R}_{\vee})\quad \begin{array}{c}
 A \vee B \\ \hline
   A  \mid B
\end{array}
\end{align*}

\n Now, assume some set $X \cup \{ p \vee q,  \neg \neg p \}$, where $X$ is a set of formulae and $p, q$ are variables. Look at these three sequences that are branches (due to the definition of branch \ref{Branch}, if $(\text{\R}_{\neg \neg})$ and $(\text{\R}_{\vee}) \in \TR$):

\begin{description}
\item[\rm (a):] $X \cup \{ p \vee q,  \neg \neg p \} \subset X \cup \{  p \vee
q, \neg \neg p,  p  \}$, produced by the application of rule $\R_{\neg\,\neg}$ 
\item[\rm (b):] $X \cup \{ p \vee
q, \neg \neg p \} \subset X \cup \{p \vee q,  \neg \neg p, p \}$
produced by the application of the rule for disjunction $\R_{\vee}$  (to the left component of disjunction) 
\item[\rm (c):] $X \cup \{ p \vee q, \neg \neg p \} \subset X \cup \{ p \vee q, \neg \neg p, q
\}  \subset X \cup \{ p \vee q, \neg \neg p, q, p\}$, produced first by the application of the rule for disjunction $\R_{\vee}$  (to the right component of disjunction), and then additionally extended by rule $\R_{\neg\,\neg}$.
\end{description}

\n Branch (c) is redundant in the following sense: if branch (a) is closed then (c) is closed; if (c) is open, then branch (a) is open, too. Hence all we need to know is contained in branch (a) (or (b) --- they are identical). As we see the presence of branch (c) in a tableau is acceptable, but not obligatory to get the full information on this part of proof.\end{myfont}
\end{ex}

We can now capture the phenomenon of redundant branches generally.

\begin{defi} [Redundant variant of branch]\label{Useless variant of branch} Let $\phi$ and $\psi$ be  branches such that for some numbers
$i$ and $i +1$ that belong to their domains  and for all $j \leq i$, $\phi(j) = \psi(j)$,
but $\phi(i+1) \not = \psi(i+1)$. Branch $\psi$ is \textit{a redundant variant of branch} $\phi$ iff:

\begin{itemize}

\item there are   rule $\R \in \TR$ and an
  $n$--tuple $\langle X_{1}, \dots, X_{n} \rangle \in \R$, such that $ \phi(i) = X_{1}$ and $\phi(i+1) = X_{j}$,
for some $1 < j \leq n$

\item there are rule $\R' \in \TR$ and an $m$--tuple $\langle Y_{1}, \dots, Y_{m} \rangle \in \R'$,
where $n < m$, such that $\psi(i) = Y_{1}$ and:

     \begin{enumerate}

     \item $\psi(i+1) = Y_{k}$, for some $1 < k \leq m$

     \item for all $1< l \leq n$ there is  $1 < o \leq m$ such that $o \not = k$ and~$X_{l} = Y_{o}$.

     \end{enumerate}

\end{itemize}

\noindent Let $\Phi$, $\Psi$ be sets of branches and $\Phi \subset \Psi$. $\Psi$ is \textit{a redundant superset} of $\Phi$ iff for any branch $\psi \in \Psi \setminus \Phi$ there is  a branch $\phi \in \Phi$ such that $\psi$ is a redundant variant of $\phi$.

\end{defi}

\noindent By use of \ref{Useless variant of branch} a  complete tableau can be defined by some kind of maximization.

\begin{defi}[Complete/incomplete tableau] \label{Complete incomplete tableau} \!\!Let \!$\langle X, A, \Phi \rangle$ be a tableau. A tableau $\langle X, A, \Phi \rangle$ is \textit{complete} iff:

\begin{enumerate}

\item all branches in $\Phi$ are complete

\item if  $\Phi \subset \Psi$ and $\langle X, A, \Psi \rangle$ is a tableau, then $\Psi$ is a redundant superset of $\Phi$, for all sets of branches $\Psi$.

\end{enumerate}

\noindent A tableau is \textit{incomplete} iff it is not complete.

\end{defi}

\begin{defi}[Closed/open tableau]\label{Closed open tableau}

Let $\langle X, A, \Phi \rangle$ be a tableau. A tableau $\langle X, A, \Phi \rangle$ is \textit{closed} iff $\langle X, A, \Phi \rangle$ is complete and all branches in $\Phi$ are closed. A tableau is \textit{open} iff it is not closed.

\end{defi}

\begin{ex}\begin{myfont}[Tableaux for a variant  of $\SEp$]\label{Przyklad tableaux dla podlogiki S} Let us consider some attempt to construction of tableau system for the logic presented in example \ref{Przyklad podlogiki S}. We do not claim here that the tableau system is sound and complete to that logic. It would require an additional proof.

To define tableaux rules and proofs for the logic determined by the models $\langle \vartheta, s \rangle$, where $\vartheta \colon \Var \longrightarrow \{0, 1 \}$, and $s \colon \For \longrightarrow W$, where $W$ is an arbitrary non-empty set of sets, and for a complex formula $A$:

\[ s(A) =  \bigcap \{U \in W | \text{ for some } p \in \var(A),  s(p) = U \}, \]

\n under the truth conditions given in \ref{Przyklad tableaux dla podlogiki S}, we remind that the set of formulae $\For$ is built with variables $\Var$ and connectives: $\neg$, $\wedge$, $\rightarrow$. Since it is a logic with two logical values (truth and falsity) and with only one point of relativization,  the set of tableau  expressions can be reduced to the language  of $\For$ extended by auxiliary expressions which  is the union of the following sets:
\begin{itemize}\itemsep=2pt

\item $\For\times\N$
    
\item $\{\langle \sim A, i \rangle \colon \langle A, i \rangle \in \For\times\N \}$.

\end{itemize}

\n The expressions $\langle A, i \rangle$ state that an object $i$ is in the content of $A$, while the expressions $\langle \sim A, i \rangle$ state that an object $i$ is not in the content of formula $A$. As we said earlier, when the context is clear, we remove brackets $\langle A, i \rangle$, $\langle \sim A, i \rangle$, and write: $A, i$, $\sim A, i$.

The notation for the rules includes the default assumption of all clauses from definitions of rule \ref{Rule} and tableau rule \ref{Tableau rule} (so, \textit{inter alia},  outputs are proper supersets of inputs and inputs are always b-consistent). We have classical rules for $\vee$, $\neg$ and their combinations:

\begin{align*}
(\text{\R}_{\wedge})\quad \begin{array}{c}
 A \wedge B \\ \hline
   A  \\ B
\end{array}\qquad\qquad\qquad(\text{\R}_{\neg \neg})\quad\begin{array}{c}
 \neg \neg A \\ \hline
   A
\end{array}\qquad\qquad\qquad(\text{\R}_{\neg \wedge})\quad \begin{array}{c}
 \neg(A \wedge B) \\ \hline
   \neg A  \mid \neg B
\end{array}
\end{align*}

\n For the implication we assume two rules. $(\text{\R}_{\rightarrow (1)})$ reflects the truth-functional aspect of $\rightarrow$, while $(\text{\R}_{\rightarrow (2)})$ expresses its content-related aspect. In case of $(\text{\R}_{\rightarrow (2)})$ we add additional clauses: 

\begin{flalign*}
&(\text{\R}_{\rightarrow (1)})& &\quad \begin{array}{c}
 A \rightarrow B \\ \hline
   \neg A  \mid B
\end{array}&\qquad\qquad\qquad(\text{\R}_{\rightarrow (2)})& & &\quad\begin{array}{c}
  A \rightarrow B \\ \hline
  A, i \\
  B, i
\end{array}\\
& & & \;\, \quad
& & & &\text{\small where: }&\\[-0.1cm] 
& & & \;\, \quad
& & & &\text{\small (1) $i$ is new}&\\[-0.1cm] 
& & & & & & &\text{\small (2) no expressions of the form $A, j$, }&\\[-0.1cm]
& & & & & & & \text{\small $B, j$ belong together to the input}&
\end{flalign*}



\n For  the negative case of implication we have:

\begin{flalign*}
& (\text{\R}_{\neg \rightarrow(1)})  & & 
\begin{array}{c}\neg(A\rightarrow B)\\ \hline \begin{array}{c|c} A & A, i\\ \neg B  & B, j \end{array}\end{array}&\;\, \quad\quad
&(\text{\R}_{\neg \rightarrow (2)}) & & 
\begin{array}{c}\begin{array}{c} A, i\\ B, j\\ A, k \end{array} \\ \hline \sim\! B, k \end{array}&\\[0.3cm]
& & &\text{\small where  $i,j$ are new}&\\[-0.1cm]
& & &\text{\small and $i\neq j$}&\\[-0.1cm]
& & &\text{\small and for no  $k \neq l$, the expressions: }\;\, \quad
& & & &\text{\small where $i,j$ are introduced by }&\\[-0.1cm] 
& & &\text{\small $A, k$ and $B, l$  belong to the input} & & & &\text{\small by the rule  (\text{\R}$_{\neg \rightarrow(1)}$)}&
\end{flalign*}

\n We have the rules that reflect the property of  \textit{intersected  set assignment}:

\begin{flalign*}
&\hspace{0.25cm} (\text{\R}_{i}) & &
\begin{array}{c} A, i \\ \hline\\[-1.2em]\begin{array}{c}   A_{1}, i \\ \vdots \\  A_{n}, i  \end{array}\end{array}\;\, \quad
& (\text{\R}_{\sim i}) & & &
\begin{array}{c} \sim A, i \\ \hline\\[-1.2em]\begin{array}{l|c|r}   \sim A_{1}, i & \dots & \sim  A_{n}, i  \end{array}\end{array}  &\\[0.2cm]
& & & \text{\small where $\var(A) = \{A_{1}, \dots, A_{n} \}$}
& & & & \text{\small where $\var(A) = \{A_{1}, \dots, A_{n} \}$}&
\end{flalign*}

\n And finally we must assume two rules to be close to the general paradigm we are establishing. One that transforms the metalinguistic negation $\sim$ into the classical negation we have in the language, and additionally a rule that reduces the classical contradiction to b-inconsistency. 

\begin{flalign*}
&\hspace{0.25cm} (\text{\R}_{\sim}) & &
\begin{array}{c} \sim A \\ \hline\\[-1.2em]  \neg A \end{array}\;\, \quad
& (\text{\R}_{\neg}) & & &
\begin{array}{c} A \\ \neg A  \\ \hline\\[-1.2em]  \sim A \end{array}  &\\[0.2cm]
\end{flalign*}

\n It is worth underlying that the clauses of the rules: $(\text{\R}_{\rightarrow (2)})$,  $(\text{\R}_{\neg \rightarrow (1)})$, $(\text{\R}_{\neg \rightarrow (2)})$ refer to the whole inputs, as rules are generally defined on sets of expressions.

We now  consider the set of expressions $\{p, \sim ( p  \rightarrow \neg (\neg p \wedge \neg q)) \}$.  
We have 33 kinds of branches obtainable by the use of some orders of applications  of the proposed tableau rules (according to the definition of branch \ref{Branch}). We write `33 kinds', because we will use a variable $i$, where $i  \in \mathbb{I}$. To not write too long sequences we assume some convention. If $\phi = X_{1} \subset \dots \subset X_{n}$ 
is a branch, by $\phi, Y$ we will mean the branch extended with $Y$: $X_{1} \subset \dots \subset X_{n} \subset Y$.

So, in fact, countably-infinitely many branches of these  33 kinds of branches exist, that  vary depending on what number variable $i$ takes. Let us start with the initial branches:

\medskip

\n $\phi_{0}$: $\{p, \sim ( p  \rightarrow \neg (\neg p \wedge \neg q)) \}$ --- a one-element branch, obtained by no use of the rules;

\medskip

\n $\phi_{1}$: $\{p, \sim ( p  \rightarrow \neg (\neg p \wedge \neg q)) \} \subset \{p, \sim ( p  \rightarrow \neg (\neg p \wedge \neg q)), \neg ( p  \rightarrow \neg (\neg p \wedge \neg q))\}$ --- a branch with two  elements,  obtained by the application of $(\text{\R}_{\sim})$ to $\sim ( p  \rightarrow \neg (\neg p \wedge \neg q))$ --- it is the only  way of  extending $\phi_{0}$;

\medskip

\n Now, applying the rule $(\text{\R}_{\neg \rightarrow(1)})$ to  $\neg ( p  \rightarrow \neg (\neg p \wedge \neg q))$ in the branch $\phi_{1}$, we obtain two branches:

\medskip

\n $\phi_{1.1}$: $\phi_{1}$, $\{p, \sim ( p  \rightarrow \neg (\neg p \wedge \neg q)), \neg ( p  \rightarrow \neg (\neg p \wedge \neg q)), p, \neg \neg (\neg p \wedge \neg q)\}$

\medskip

\n $\phi_{1.2}$: $\phi_{1}$, $\{p, \sim ( p  \rightarrow \neg (\neg p \wedge \neg q)), \neg ( p  \rightarrow \neg (\neg p \wedge \neg q)), \langle p, i \rangle, \langle  \neg (\neg p \wedge \neg q), j \rangle \}$, for some two different $i, j \in \N$.

\medskip

\n First, we will decompose the expressions contained in $\phi_{1.1}$. Applying the rule $(\text{\R}_{\neg \neg})$ to $\neg \neg (\neg p \wedge \neg q)$ in $\phi_{1.1}$,  and next $(\text{\R}_{\wedge})$ to $\neg p \wedge \neg q$ in $\phi_{1.1.1}$ we obtain the two subsequent branches:

\medskip
\n $\phi_{1.1.1}$: $\phi_{1.1}$, $\{p, \sim ( p  \rightarrow \neg (\neg p \wedge \neg q)), \neg ( p  \rightarrow \neg (\neg p \wedge \neg q)), p, \neg \neg (\neg p \wedge \neg q), \neg p \wedge \neg q \}$

\medskip
\n $\phi_{1.1.1.1}$: $\phi_{1.1.1}$,  $\{p, \sim ( p  \rightarrow \neg (\neg p \wedge \neg q)), \neg (p  \rightarrow \neg (\neg p \wedge \neg q)), p, \neg \neg (\neg p \wedge \neg q), \neg p \wedge \neg q, \neg p, \neg q\}$ --- this is an \textit{almost}  closed branch, since it contains $p$ and $\neg p$, which by the rule $(\text{\R}_{\neg})$ results in the closed branch, because its  last element contains $p$ and $\sim p$ (by definition \ref{Branch: closed/open,complete}):

\medskip
\n $\phi_{1.1.1.1.1}$: $\phi_{1.1.1.1}$, $\{p, \sim ( p  \rightarrow \neg (\neg p \wedge \neg q)), \neg ( p  \rightarrow \neg (\neg p \wedge \neg q)), p, \neg \neg (\neg p \wedge \neg q), \neg p \wedge \neg q, \neg p, \neg q, \sim p\}$.
\medskip

\n Note again that no tableau rule can be applied to the last element of closed branch,  by the definition of \ref{Rule}.

\n Independently, we can extend $\phi_{1.2}$, using the rule $(\text{\R}_{\neg \rightarrow(2)})$ to $\langle p, i \rangle, \langle  \neg (\neg p \wedge \neg q), j \rangle$ (where once $k = i$ and once $k = j$) and thus obtaining:

\medskip

\n $\phi_{1.2.1}$: $\phi_{1.2}$, $\{p, \sim ( p  \rightarrow \neg (\neg p \wedge \neg q)), \neg ( p  \rightarrow \neg (\neg p \wedge \neg q)), \langle p, i \rangle, \langle  \neg (\neg p \wedge \neg q), j \rangle, \langle \sim p, j \rangle \}$ 

\medskip
\n and:

\medskip
\n $\phi_{1.2.2}$: $\phi_{1.2}$, $\{p, \sim ( p  \rightarrow \neg (\neg p \wedge \neg q)), \neg ( p  \rightarrow \neg (\neg p \wedge \neg q)), \langle p, i \rangle, \langle  \neg (\neg p \wedge \neg q), j \rangle,  \langle  \sim  \neg (\neg p \wedge \neg q), i \rangle \}$. 

\medskip

\n The branch $\phi_{1.2}$ can be also extended by use of $(\text{\R}_{i})$  to $\langle  \neg (\neg p \wedge \neg q), j \rangle$: 

\medskip
\n $\phi_{1.2.3}$: $\phi_{1.2}$, $\{p, \sim ( p  \rightarrow \neg (\neg p \wedge \neg q)), \neg ( p  \rightarrow \neg (\neg p \wedge \neg q)), \langle p, i \rangle, \langle  \neg (\neg p \wedge \neg q), j \rangle, \langle p, j \rangle,  \langle q, j \rangle \}$. 

\medskip

First, we will decompose $\phi_{1.2.1}$. The branch $\phi_{1.2.1}$ can be extended by the use of $(\text{\R}_{i})$ to the expression $\langle  \neg (\neg p \wedge \neg q), j \rangle$:  
\medskip

\n $\phi_{1.2.1.1}$: $\phi_{1.2.1}$, $\{p, \sim ( p  \rightarrow \neg (\neg p \wedge \neg q)), \neg ( p  \rightarrow \neg (\neg p \wedge \neg q)), \langle p, i \rangle, \langle  \neg (\neg p \wedge \neg q), j \rangle, \langle \sim p, j \rangle, \langle p, j \rangle, \langle q, j \rangle \}$ --- it is a closed branch

\medskip 

\n The branch $\phi_{1.2.1}$ can also be  extended by the use of $(\text{\R}_{\neg \rightarrow(2)})$ to the expressions $\langle p, i \rangle$, $\langle  \neg (\neg p \wedge \neg q), j \rangle$:  
\medskip

\n $\phi_{1.2.1.2}$: $\phi_{1.2.1}$, $\{p, \sim ( p  \rightarrow \neg (\neg p \wedge \neg q)), \neg ( p  \rightarrow \neg (\neg p \wedge \neg q)), \langle p, i \rangle, \langle  \neg (\neg p \wedge \neg q), j \rangle, \langle \sim p, j \rangle, \langle \sim  \neg (\neg p \wedge \neg q), i \rangle \}$ 

\medskip
\n And now we can extend it by applying $(\text{\R}_{i})$ to $\langle  \neg (\neg p \wedge \neg q), j \rangle$: 

\medskip
\n $\phi_{1.2.1.2.1}$: $\phi_{1.2.1.2}$, $\{p, \sim ( p  \rightarrow \neg (\neg p \wedge \neg q)), \neg ( p  \rightarrow \neg (\neg p \wedge \neg q)), \langle p, i \rangle, \langle  \neg (\neg p \wedge \neg q), j \rangle, \langle \sim p, j \rangle, \langle \sim  \neg (\neg p \wedge \neg q), i \rangle,  \langle p, j \rangle, \langle  q, j \rangle\}$ --- it is a closed branch.

\medskip

\n But also the branch $\phi_{1.2.1.2}$ can be extended by the rule $(\text{\R}_{\sim i})$ applied to $\langle \sim  \neg (\neg p \wedge \neg q), i \rangle$:

\medskip

\n $\phi_{1.2.1.2.2}$: $\phi_{1.2.1.2}$, $\{p, \sim ( p  \rightarrow \neg (\neg p \wedge \neg q)), \neg ( p  \rightarrow \neg (\neg p \wedge \neg q)), \langle p, i \rangle, \langle  \neg (\neg p \wedge \neg q), j \rangle, \langle \sim p, j \rangle, \langle \sim  \neg (\neg p \wedge \neg q), i \rangle, \langle \sim p , i \rangle \}$ 

\medskip

\n $\phi_{1.2.1.2.3}$: $\phi_{1.2.1.2}$, $\{p, \sim ( p  \rightarrow \neg (\neg p \wedge \neg q)), \neg ( p  \rightarrow \neg (\neg p \wedge \neg q)), \langle p, i \rangle, \langle  \neg (\neg p \wedge \neg q), j \rangle, \langle \sim p, j \rangle, \langle \sim  \neg (\neg p \wedge \neg q), i \rangle, \langle \sim q , i \rangle \}$ 

\medskip

Now, to the expression  $\langle  \neg (\neg p \wedge \neg q), j \rangle$, present in both branches we apply $(\text{\R}_{i})$, and get two closed branches:

\medskip

\n $\phi_{1.2.1.2.2.1}$: $\phi_{1.2.1.2}$, $\{p, \sim ( p  \rightarrow \neg (\neg p \wedge \neg q)), \neg ( p  \rightarrow \neg (\neg p \wedge \neg q)), \langle p, i, \rangle, \langle  \neg (\neg p \wedge \neg q), j \rangle, \langle \sim p, j \rangle, \langle \sim  \neg (\neg p \wedge \neg q), i \rangle, \langle \sim p , i \rangle,  \langle p, j \rangle, \langle q, j \rangle \}$ 

\medskip

\n $\phi_{1.2.1.2.3.1}$: $\phi_{1.2.1.2}$, $\{p, \sim ( p  \rightarrow \neg (\neg p \wedge \neg q)), \neg ( p  \rightarrow \neg (\neg p \wedge \neg q)), \langle p, i, \rangle, \langle  \neg (\neg p \wedge \neg q), j \rangle, \langle \sim p, j \rangle, \langle \sim  \neg (\neg p \wedge \neg q), i \rangle, \langle \sim p , i \rangle,  \langle p, j \rangle, \langle q, j \rangle \}$ 

\medskip

Now, we list the possible extensions of the branch $\phi_{1.2.2}$. When we use the rule $(\text{\R}_{\neg \rightarrow(2)})$ to $\langle p, i \rangle, \langle  \neg (\neg p \wedge \neg q), j \rangle$ (where $k = j$ ), we obtain the branch:

\medskip 
\n $\phi_{1.2.2.1}$: $\phi_{1.2.2}$, $\{p, \sim ( p  \rightarrow \neg (\neg p \wedge \neg q)), \neg ( p  \rightarrow \neg (\neg p \wedge \neg q)), \langle p, i \rangle, \langle  \neg (\neg p \wedge \neg q), j \rangle,  \langle  \sim  \neg (\neg p \wedge \neg q), i \rangle, \langle \sim p, j\rangle \}$. 
\medskip 

\n The $\phi_{1.2.2.1}$ we can extend by applying $(\text{\R}_{i})$ to $\langle  \neg (\neg p \wedge \neg q), j \rangle$ Then, we have the closed branch:

\medskip

\n $\phi_{1.2.2.1.1}$: $\phi_{1.2.2.1}$, $\{p, \sim ( p  \rightarrow \neg (\neg p \wedge \neg q)), \neg ( p  \rightarrow \neg (\neg p \wedge \neg q)), \langle p, i \rangle, \langle  \neg (\neg p \wedge \neg q), j \rangle,  \langle  \sim  \neg (\neg p \wedge \neg q), i \rangle, \langle \sim p, j\rangle, \langle  p, j\rangle, \langle q, j \rangle \}$. 
\medskip

\n Alternatively, we can extend $\phi_{1.2.2.1}$ by applying $(\text{\R}_{\sim i})$ to  $\langle  \sim  \neg (\neg p \wedge \neg q), i \rangle$. Then we obtain:
\medskip

\n $\phi_{1.2.2.1.2}$: $\phi_{1.2.2.1}$, $\{p, \sim ( p  \rightarrow \neg (\neg p \wedge \neg q)), \neg ( p  \rightarrow \neg (\neg p \wedge \neg q)), \langle p, i \rangle, \langle  \neg (\neg p \wedge \neg q), j \rangle,  \langle  \sim  \neg (\neg p \wedge \neg q), i \rangle, \langle \sim p, i \rangle \}$. 
\medskip

\n and

\medskip
\n $\phi_{1.2.2.1.3}$: $\phi_{1.2.2.1}$, $\{p, \sim ( p  \rightarrow \neg (\neg p \wedge \neg q)), \neg ( p  \rightarrow \neg (\neg p \wedge \neg q)), \langle p, i \rangle, \langle  \neg (\neg p \wedge \neg q), j \rangle,  \langle  \sim  \neg (\neg p \wedge \neg q), i \rangle, , \langle \sim q, i \rangle \}$. 

\medskip
\n The $\phi_{1.2.2.1.2}$ is obviously closed, while $\phi_{1.2.2.1.3}$ can be extended by $(\text{\R}_{i})$ used to 
$\langle  \neg (\neg p \wedge \neg q), j \rangle$, which also results in a closed branch:

\medskip 
\n $\phi_{1.2.2.1.3.1}$: $\phi_{1.2.2.1.3}$, $\{p, \sim ( p  \rightarrow \neg (\neg p \wedge \neg q)), \neg ( p  \rightarrow \neg (\neg p \wedge \neg q)), \langle p, i \rangle, \langle  \neg (\neg p \wedge \neg q), j \rangle,  \langle  \sim  \neg (\neg p \wedge \neg q), i \rangle, , \langle \sim q, i \rangle, \langle p, j \rangle, \langle q, j \rangle \}$. 

\bigskip

The branch $\phi_{1.2.2}$ can be extended by applying $(\text{\R}_{ i})$ to $ \langle  \neg (\neg p \wedge \neg q), j \rangle$. Then we obtain  the branch:

\medskip

\n $\phi_{1.2.2.2}$: $\phi_{1.2.2}$, $\{p, \sim ( p  \rightarrow \neg (\neg p \wedge \neg q)), \neg ( p  \rightarrow \neg (\neg p \wedge \neg q)), \langle p, i \rangle, \langle  \neg (\neg p \wedge \neg q), j \rangle,  \langle  \sim  \neg (\neg p \wedge \neg q), i \rangle, \langle p, j \rangle, \langle q, j \rangle \}$.

\medskip

The branch $\phi_{1.2.2.2}$ can be extended by applying $(\text{\R}_{\sim i})$ to  $\langle  \sim  \neg (\neg p \wedge \neg q), i \rangle$. Then we obtain two branches:

\medskip

\n $\phi_{1.2.2.2.1}$: $\phi_{1.2.2.2}$, $\{p, \sim ( p  \rightarrow \neg (\neg p \wedge \neg q)), \neg ( p  \rightarrow \neg (\neg p \wedge \neg q)), \langle p, i \rangle, \langle  \neg (\neg p \wedge \neg q), j \rangle,  \langle  \sim  \neg (\neg p \wedge \neg q), i \rangle, \langle p, j \rangle, \langle q, j \rangle, \langle \sim p, i \rangle \}$. 

\medskip

\n and

\medskip

\n $\phi_{1.2.2.2.2}$: $\phi_{1.2.2.2}$, $\{p, \sim ( p  \rightarrow \neg (\neg p \wedge \neg q)), \neg ( p  \rightarrow \neg (\neg p \wedge \neg q)), \langle p, i \rangle, \langle  \neg (\neg p \wedge \neg q), j \rangle,  \langle  \sim  \neg (\neg p \wedge \neg q), i \rangle, \langle p, j \rangle, \langle q, j \rangle, \langle \sim q, i \rangle  \}$. 
\medskip

$\phi_{1.2.2.2.1}$ is closed, while $\phi_{1.2.2.2.2}$  can be still extended by the rule $(\text{\R}_{\neg \rightarrow(2)})$ applied  to $\langle p, i \rangle, \langle  \neg (\neg p \wedge \neg q), j \rangle$ (where $k = j$ ), resulting in the closed branch:

\medskip

\n $\phi_{1.2.2.2.2.1}$: $\phi_{1.2.2.2.2}$, $\{p, \sim ( p  \rightarrow \neg (\neg p \wedge \neg q)), \neg ( p  \rightarrow \neg (\neg p \wedge \neg q)), \langle p, i \rangle, \langle  \neg (\neg p \wedge \neg q), j \rangle,  \langle  \sim  \neg (\neg p \wedge \neg q), i \rangle, \langle p, j \rangle, \langle q, j \rangle, \langle \sim q, i \rangle, \langle \sim p, j\rangle  \}$. 

\medskip

The branch $\phi_{1.2.2.2}$ can be one more extended by the rule $(\text{\R}_{\neg \rightarrow(2)})$ applied  to $\langle p, i \rangle, \langle  \neg (\neg p \wedge \neg q), j \rangle$ (where $k = j$ ), which results in the closed branch:

\medskip

\n $\phi_{1.2.2.2.3}$: $\phi_{1.2.2.2}$, $\{p, \sim ( p  \rightarrow \neg (\neg p \wedge \neg q)), \neg ( p  \rightarrow \neg (\neg p \wedge \neg q)), \langle p, i \rangle, \langle  \neg (\neg p \wedge \neg q), j \rangle,  \langle  \sim  \neg (\neg p \wedge \neg q), i \rangle, \langle p, j \rangle, \langle q, j \rangle, \langle \sim p, j \rangle \}$.

Finally, the branch $\phi_{1.2.3}$ can be decomposed by application of the rule $(\text{\R}_{\neg \rightarrow(2)})$ applied  to $\langle p, i \rangle, \langle  \neg (\neg p \wedge \neg q), j \rangle$, either under the assumption that $k = i$, or under the assumption that $k = j$. In the former case we obtain:

\medskip

\n $\phi_{1.2.3.1}$: $\phi_{1.2.3}$, $\{p, \sim ( p  \rightarrow \neg (\neg p \wedge \neg q)), \neg ( p  \rightarrow \neg (\neg p \wedge \neg q)), \langle p, i \rangle, \langle  \neg (\neg p \wedge \neg q), j \rangle, \langle p, j \rangle,  \langle q, j \rangle, \langle \sim \neg (\neg p \wedge \neg q), i \rangle \}$. 

\medskip

In the latter case:

\medskip

\n $\phi_{1.2.3.2}$: $\phi_{1.2.3}$, $\{p, \sim ( p  \rightarrow \neg (\neg p \wedge \neg q)), \neg ( p  \rightarrow \neg (\neg p \wedge \neg q)), \langle p, i \rangle, \langle  \neg (\neg p \wedge \neg q), j \rangle, \langle p, j \rangle,  \langle q, j \rangle, \langle \sim p, j \rangle \}$. 

\medskip

\n The branch $\phi_{1.2.3.2}$ is closed, while $\phi_{1.2.3.1}$ is extendable in two ways. First, we can apply $(\text{\R}_{\sim i})$ to $\langle \sim \neg (\neg p \wedge \neg q), i \rangle$. Then we get two branches:

\medskip
\n $\phi_{1.2.3.1.1}$: $\phi_{1.2.3.1}$, $\{p, \sim ( p  \rightarrow \neg (\neg p \wedge \neg q)), \neg ( p  \rightarrow \neg (\neg p \wedge \neg q)), \langle p, i \rangle, \langle  \neg (\neg p \wedge \neg q), j \rangle, \langle p, j \rangle,  \langle q, j \rangle, \langle \sim \neg (\neg p \wedge \neg q), i \rangle, \langle \sim p, i \rangle \}$

\medskip

\n $\phi_{1.2.3.1.2}$: $\phi_{1.2.3.1}$, $\{p, \sim ( p  \rightarrow \neg (\neg p \wedge \neg q)), \neg ( p  \rightarrow \neg (\neg p \wedge \neg q)), \langle p, i \rangle, \langle  \neg (\neg p \wedge \neg q), j \rangle, \langle p, j \rangle,  \langle q, j \rangle, \langle \sim \neg (\neg p \wedge \neg q), i \rangle, \langle \sim q, i \rangle  \}$. 

\medskip

\n The branch $\phi_{1.2.3.1.1}$ is closed, but $\phi_{1.2.3.1.2}$ can be extended by $(\text{\R}_{\neg \rightarrow(2)})$ applied  to $\langle p, i \rangle, \langle  \neg (\neg p \wedge \neg q), j \rangle$ with $k = j$. After that we have the closed branch:

\medskip

\n $\phi_{1.2.3.1.2.1}$: $\phi_{1.2.3.1.2}$, $\{p, \sim ( p  \rightarrow \neg (\neg p \wedge \neg q)), \neg ( p  \rightarrow \neg (\neg p \wedge \neg q)), \langle p, i \rangle, \langle  \neg (\neg p \wedge \neg q), j \rangle, \langle p, j \rangle,  \langle q, j \rangle, \langle \sim \neg (\neg p \wedge \neg q), i \rangle, \langle \sim q, i \rangle,  \langle \sim p, j \rangle  \}$. 

\medskip

The second way of extending $\phi_{1.2.3.1}$  is to apply $(\text{\R}_{\neg \rightarrow(2)})$ applied  to $\langle p, i \rangle, \langle  \neg (\neg p \wedge \neg q), j \rangle$ with $k = j$, again. The we have the closed branch:

\medskip

\n $\phi_{1.2.3.1.3}$: $\phi_{1.2.3.1}$, $\{p, \sim ( p  \rightarrow \neg (\neg p \wedge \neg q)), \neg ( p  \rightarrow \neg (\neg p \wedge \neg q)), \langle p, i \rangle, \langle  \neg (\neg p \wedge \neg q), j \rangle, \langle p, j \rangle,  \langle q, j \rangle, \langle \sim \neg (\neg p \wedge \neg q), i \rangle, \langle \sim p, j \rangle \}$. 

\medskip

It is a good time to sum up.  Among 33 kind of branches the complete branches (according to the  definition of a complete branch \ref{Branch: closed/open,complete}) are: $\phi_{1.1.1.1.1}$, $\phi_{1.2.1.1}$, $\phi_{1.2.1.2.1}$, $\phi_{1.2.1.2.2.1}$, $\phi_{1.2.1.2.3.1}$, $\phi_{1.2.2.1.1}$, $\phi_{1.2.2.1.2}$, $\phi_{1.2.2.1.3.1}$, $\phi_{1.2.2.2.1}$, $\phi_{1.2.2.2.3}$ ,$\phi_{1.2.2.2.2.1}$,  $\phi_{1.2.3.1.1}$, $\phi_{1.2.3.1.2.1}$,  $\phi_{1.2.3.1.3}$, $\phi_{1.2.3.2}$.  Because all of those complete branches are closed by the definition of a closed branch \ref{Branch: closed/open,complete}, $p  \rightarrow \neg (\neg p \wedge \neg q)$ is a branch consequence of set $\{p \}$ with respect to the given set of tableau rules (in short: $\{ p \} \vartriangleright p  \rightarrow \neg (\neg p \wedge \neg q)$, where $\vartriangleright$ is determined by the given tableau rules), by the definition of the branch consequence relation (\ref{Branch consequence relation}), since $\{p, \sim ( p  \rightarrow \neg (\neg p \wedge \neg q)) \}$ is a finite set.


Which sets of branches are tableaux? According to definition of tableaux \ref{Tableau} all singletons $\{ \phi_{j} \}$, where $0 \leq j \leq 33 $ are tableaux. However, not all two-elementary sets are tableaux. For example $\{ \phi_{0}, \phi_{1} \}$ is not a tableau since $\phi_{0}$ is not $\{ \phi_{0}, \phi_{1} \}$-maximal as $\phi_{0}$ is a subbranch of $\phi_{1}$. But, for example, $\{ \phi_{1.1}, \phi_{1.2} \}$ is a tableau. Branches $\phi_{1.1}$, $\phi_{1.2}$ differ at the third position, but there exists tableau rule $(\text{\R}_{\neg \rightarrow(1)})$  that enables this, as the last condition of definition of tableau \ref{Tableau} states. We have the following complete tableaux (according to the definition of a complete tableau \ref{Complete incomplete tableau}): 

\smallskip

\n $\{\phi_{1.1.1.1.1}, \phi_{1.2.1.1} \}$, $\{\phi_{1.1.1.1.1}, \phi_{1.2.1.2.1}\}$, \newline $\{\phi_{1.1.1.1.1}, \phi_{1.2.1.2.2.1}, \phi_{1.2.1.2.3.1}\}$, $\{\phi_{1.1.1.1.1}, \phi_{1.2.2.1.1}\}$, \newline $\{\phi_{1.1.1.1.1}, \phi_{1.2.2.1.2}, \phi_{1.2.2.1.3.1} \}$, $\{\phi_{1.1.1.1.1}, \phi_{1.2.2.2.1}, \phi_{1.2.2.2.2.1 } \}$, \newline $\{\phi_{1.1.1.1.1}, \phi_{1.2.2.2.3}\}$,
$\{\phi_{1.1.1.1.1}, \phi_{1.2.3.1.1}, \phi_{1.2.3.1.2.1}\}$, 
$\{\phi_{1.1.1.1.1}, \phi_{1.2.3.2}\}$, \newline $\{\phi_{1.1.1.1.1}, \phi_{1.2.3.1.3}\}$,

\smallskip

\n No branch can be added to any of the ten  sets provided that they remain tableaux. Finally, all of them are closed tableaux by definition of a closed tableau \ref{Closed open tableau}.

We would like to identify general relationships between $\vartriangleright$, the existence of a closed tableau and the semantic consequence relation. To ascertain this we will establish tableau theorem \ref{Tableau metatheorem}. We see that $\{p\} \not \models_{\S} p  \rightarrow \neg (\neg p \wedge \neg q)$ (see: \ref{Przyklad podlogiki S}), however $\{ p \} \vartriangleright p  \rightarrow \neg (\neg p \wedge \neg q)$. So, the general conditions that connect $\models_{\S}$ to $\vartriangleright$ must be not fulfilled because the tableau system is not sound in respect to semantics for $\S$.\end{myfont}
\qed{}
\end{ex}

\section{Tableau metatheory facts}\label{sec:5}

We would like to determine the general relationship between $\models_{\M}$, $\vartriangleright_{\TR}$, and the~existence of a closed tableau. But first, we will establish  a series of definitions and facts. Then, we will use them in subsection \ref{subsec:tableau methatheorem} in a proof of tableau metatheorem for propositional logics \ref{Tableau metatheorem}.

\subsection{From relation  $\models_{\M}$  to branch consequence relation $\vartriangleright_{\TR}$}

\begin{defi}[Closure under rules]\label{Closure under rules} Let $X, Y \subseteq \Ex$. A set $Y$ is \emph{a closure of} $X$   \emph{under tableau rules} iff $X \subseteq Y$ and there exists   a complete branch $\phi \colon  K \longrightarrow \Pow(\textsf{Ex})$   such that $Y = \bigcup \{ \phi(i) \mid i \in K \}$.
\end{defi}

For any set of expressions we have at least one closure under rules, but sometimes there can be more closures. This follows from the definitions of: a rule \ref{Rule}, a tableau rule \ref{Tableau rule} and a complete branch \ref{Branch: closed/open,complete}. It is worth noting that any closure under rules (or in short: closure) is an one-element-branch (by \ref{Coro Set of expressions is branch}).

\begin{lem}[On the existence of open and complete branches]\label{On existence of open and complete branch}

Let $X \cup \{ A \}\subseteq \textsf{For}$. If for all finite $Y \subseteq X$ there exists a complete and open branch starting with $Y^{\t} \cup \{ \langle \sim A, \t \rangle \}$, then there exists a closure of $X^{\t}  \cup \{ \langle \sim A, \t  \rangle\}$ under tableau rules that is an open and complete branch.

\end{lem}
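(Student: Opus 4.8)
The plan is to argue by a compactness / König's-lemma construction, turning the hypothesis about finite subsets into a single open complete closure of the (possibly infinite) set $S := X^{\t} \cup \{\langle \sim A, \t\rangle\}$. First I would fix a fair enumeration of all potential rule applications (pairs consisting of a rule from $\TR$ together with a finite core in the sense of condition (CC) of Definition~\ref{Tableau rule}) and build a systematic tableau tree $\mathcal{T}$ rooted at $S$: at each node $Z$ one applies the first scheduled rule that genuinely extends $Z$ (adds information not already similar to a subset of $Z$, cf.\ (CE)), branching into its finitely many outputs. Each output has the form $Z \cup (\text{finite set})$, because by (CC) together with (CF) every rule application adds only finitely many new expressions; hence $\mathcal{T}$ is finitely branching. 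By construction and fairness, taking the union $\bigcup \phi$ along any maximal path $\phi$ of $\mathcal{T}$ yields a \emph{complete} branch starting from $S$ in the sense of Definition~\ref{Branch: closed/open,complete}.

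The proof then splits on whether $\mathcal{T}$ is infinite. If $\mathcal{T}$ is infinite, then since it is finitely branching König's lemma supplies an infinite path $\phi$. No node $\phi(i)$ can be $b$-inconsistent, since by condition (b) of Definition~\ref{Rule} no rule applies to a $b$-inconsistent set and the path would then terminate; so $\phi$ is open, and any complementary pair in $\bigcup\phi$ would already occur at a finite stage, so $\bigcup\phi$ is $b$-consistent as well. Fairness makes $\phi$ complete. By Corollary~\ref{Coro Set of expressions is branch} the set $\bigcup\phi$ is itself a one-element branch; it is open and complete, and since $S = \phi(1) \subseteq \bigcup\phi$ it is a closure of $S$ under tableau rules (Definition~\ref{Closure under rules}). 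This is exactly the required object.

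If instead $\mathcal{T}$ is finite, then either some maximal branch is open --- in which case its union is, as above, an open complete closure and we are done --- or every maximal branch is closed. The heart of the argument is to show that this last situation is impossible. Since $\mathcal{T}$ is finite, only finitely many rule applications occur, each using a finite core and contributing finitely many expressions, and each closed branch is witnessed by a single complementary pair; collecting the finitely many formulae of $X$ that occur either in these cores or in these closing pairs yields a finite $Y \subseteq X$. I would then \emph{replay} the same scheduled rule applications starting from $Y^{\t} \cup \{\langle \sim A, \t\rangle\}$: by (CC) and (CE) each rule used in $\mathcal{T}$ remains applicable once its core is present, and every closing complementary pair still appears, so the analogous finite tree over $Y^{\t} \cup \{\langle \sim A, \t\rangle\}$ has all of its maximal (hence complete) branches closed. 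This contradicts the hypothesis, which furnishes an open complete branch from $Y^{\t} \cup \{\langle \sim A, \t\rangle\}$.

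I expect the main obstacle to be precisely this localization-and-replay step: making rigorous that the finitely many ``active'' formulae can be isolated into a finite $Y$, and that the fair systematic strategy is confluent enough that ``all branches of the restricted tree close'' indeed entails ``every complete branch from $Y^{\t}\cup\{\langle\sim A,\t\rangle\}$ is closed''. Both points rest entirely on the finitary closure conditions (CF) and (CC) and on the monotonicity/expansion condition (CE) of Definition~\ref{Tableau rule}, which guarantee that applicability of a rule depends only on a finite core and survives passage to supersets; the bookkeeping needed to match the two trees branch-by-branch is the routine but delicate part of the argument.
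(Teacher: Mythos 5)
Your route is genuinely different from the paper's. The paper never builds a tree over the full set and never invokes K\"onig's lemma: it enumerates $X=\{B_{1},B_{2},\dots\}$, uses the hypothesis at \emph{every} stage $n$ to obtain an open complete closure of $\{B_{1},\dots,B_{n}\}^{\t}\cup\{\langle\sim A,\t\rangle\}$ extending the one chosen at stage $n-1$, and takes the union of the resulting chain; openness and completeness of that union then follow because complementary pairs and (CC)-cores are finite, hence lie in some chain element, whose own openness/completeness gives the contradiction. Against this, your main gap is the finite case: to contradict the hypothesis you must pass from ``every maximal branch of my replayed systematic tree over $Y^{\t}\cup\{\langle\sim A,\t\rangle\}$ is closed'' to ``every complete branch starting from $Y^{\t}\cup\{\langle\sim A,\t\rangle\}$ is closed'', since the open complete branch the hypothesis promises is under no obligation to follow your fair schedule: it may apply rules in a different order and choose different fresh indices. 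This confluence claim is not routine bookkeeping resting on (CF), (CC), (CE); in the paper it is essentially a free-standing result (lemma \ref{Lemma on open tableaux}) with its own inductive proof, and it hinges on closure under similar sets (CS) together with fact \ref{Two similar sets of expression are either consistent or inconsistent}, because closing pairs can be transported between branches that disagree on fresh indices only up to similarity. Your proposal never invokes (CS) at all.

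The second gap is your completeness claims for limit objects. Completeness (definition \ref{Branch: closed/open,complete}) demands that \emph{no} tuple of any rule in $\TR$ has $\bigcup\phi$ as input, whereas (CE) is only a \emph{sufficient} condition for applicability; so from ``every scheduled core has had its outputs added up to similarity'' one cannot conclude that no applicable tuple remains. The same problem infects your finite case: a leaf of the pruned tree (no scheduled application ``genuinely extends'' it) need not be complete in the paper's sense. The paper's proof also leans on (CE) loosely at this point, but its architecture is safer: its chain elements are complete branches supplied by the hypothesis, so completeness of the union is reduced, via finiteness of cores, to a contradiction with the completeness of some $\phi_{i}$; your construction contains no complete object to play that role. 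As it stands, the proposal needs both the (CS)-based confluence lemma and a repaired completeness argument before it is a proof.
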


\begin{proof}

Assume $X \cup \{ A \}\subseteq \textsf{For}$. $(\ddag)$ Let for all finite $Y \subseteq X$ there exist a complete and open branch starting with $Y^{\t} \cup \{ \langle \sim A, \t \rangle \}$. We have two possibilities: either $X \subseteq \textsf{For}$ is finite or infinite.

If $X \subseteq \textsf{For}$ is finite, then by $(\ddag)$ there exists a complete and open branch $\phi$ starting with $X^{\t} \cup \{ \langle \sim A, \t \rangle \}$. As a consequence of definition \ref{Closure under rules}, $\bigcup \phi$ is a closure under tableau rules. By corollary \ref{Coro Set of expressions is branch}, $\bigcup \phi$  is an open and complete branch.

Now assume $X \subseteq \textsf{For}$ is infinite, so countable. We enumerate all formulae in $X \subseteq \textsf{For}$:

\[ B_{1}, B_{2}, B_{3}, \dots \]

\n Afterwards, we define some subsets of $X^{\t} \cup \{ \langle \sim A, \t \rangle \}$:

\[ \{ \langle B_{1}, \t \rangle, \allowbreak \langle B_{2}, \t \rangle, \dots, \langle B_{n}, \t \rangle, \langle \sim A, \t \rangle \},\]

\n where $n \geq 1$. By assumption by $(\ddag)$, for all $\{ \langle B_{1}, \t \rangle, \allowbreak \langle B_{2}, \t \rangle, \dots, \allowbreak \langle B_{n}, \t \rangle, \allowbreak \langle \sim A, \t \rangle \}$ there exists at least one complete and open branch starting with \allowbreak $\{ \langle B_{1}, \t \rangle, \allowbreak \langle B_{2}, \t \rangle, \dots, \langle B_{n}, \t \rangle, \allowbreak \langle \sim A, \t \rangle \}$. So, now we define a list of sets of such branches:

\[\Phi_{1}, \Phi_{2}, \Phi_{3}, \dots, \]

\n where for any $1 \leq i$, $\Phi_{i}$ is the set of all complete and open branches starting with $\{ \langle B_{1}, \t \rangle, \allowbreak \langle B_{2}, \t \rangle, \dots, \langle B_{i}, \t \rangle, \allowbreak \langle \sim A, \t \rangle \}$.

Thereafter, for any $\Phi_{n}$, where $n \in  \mathbb{N}$,  we define $\overline{\Phi}_{n}$, the set of all closures made of branches that belong to $\Phi_{n}$:

\[ \psi \in \overline{\Phi}_{n} \text{ iff } \text{ for some } \phi \in \Phi_{n}: \psi = \bigcup \phi. \]

\n Any $\overline{\Phi}_{n}$ contains closures that are open, since  in $\Phi_{n}$ there are only complete and open branches.

We define the following sets: for any $n \in \mathbb{N}$, by $\overline{\overline{\Phi}}_{n}$ we define such a subset of $\overline{\Phi}_{n}$ that for all $\phi \in \overline{\overline{\Phi}}_{n}$ there exists such closure $\psi \in \overline{\Phi}_{n+1}$ that $\phi \subseteq \psi$. We claim that $\overline{\overline{\Phi}}_{n}$ is not empty, for any $n \in \mathbb{N}$.

Let us assume   set  $\overline{\Phi}_{n}$. It contains all open closures of set $\{ \langle B_{1}, \t \rangle, \dots, \allowbreak \langle B_{n}, \t \rangle,  \langle \sim A, \t \rangle \}$. Let  $\overline{\Phi_{n}\cup \{\langle B_{n+1}, \t \rangle\}}$ be the set of all closures under rules of sets $\phi \cup \{\langle B_{n+1}, \t \rangle\}$, where $\phi \in \overline{\Phi}_{n}$. Set $\overline{\Phi_{n}\cup \{\langle B_{n+1}, \t \rangle\}}$ is non-empty which follows from the definitions of: rule \ref{Rule}, tableau rule \ref{Tableau rule} and complete branch \ref{Branch: closed/open,complete}. $\overline{\Phi_{n}\cup \{\langle B_{n+1}, \t \rangle\}}$ contains only complete one-element-branches. However, if all branches in $\overline{\Phi_{n}\cup \{\langle B_{n+1}, \t \rangle\}}$ are closed, then by (CE), (CC) and (CS) in definition of tableau rule \ref{Tableau rule}  also all closures in $\overline{\Phi}_{n+1}$ are closed, which contradicts  assumption $(\ddag)$. Therefore, $\overline{\Phi_{n}\cup \{\langle B_{n+1}, \t \rangle\}} \cap \overline{\Phi}_{n+1}$ is non-empty. As a consequence,  $\overline{\overline{\Phi}}_{n}$ is not empty, for any $n \in \mathbb{N}$, as for some $\phi \in \overline{\overline{\Phi}}_{n}$ there is an complete closure $\psi \in \overline{\Phi}_{n+1}$ such that $\phi \subseteq \psi$.

Consequently, we have the following sequence of non-empty sets:

\[ \overline{\overline{\Phi}}_{1}, \overline{\overline{\Phi}}_{2}, \overline{\overline{\Phi}}_{3}, \dots \]

\n that for any $\overline{\overline{\Phi}}_{n}$ and any $\phi \in \overline{\overline{\Phi}}_{n}$ there exists such $\psi \in \overline{\overline{\Phi}}_{n+1}$ that $\phi \subseteq \phi$.

By choosing some $\phi_{1} \in \overline{\overline{\Phi}}_{1}$ we  define a sequence of open and complete closures:

\[(\alpha): \qquad \phi_{1}, \phi_{2}, \phi_{3}, \dots \]

\n For any $n \in \mathbb{N}$ and any  $\phi_{n}$ in $(\alpha)$, $\phi_{n} \in \overline{\overline{\Phi}}_{n}$, $\phi_{n+1} \in \overline{\overline{\Phi}}_{n+1}$ and $\phi_{n} \subseteq \phi_{n+1}$. Finally, we define set $\Sigma$ as the union of the elements of sequence $(\alpha)$: $\bigcup_{i \in \mathbb{N}} \phi_{i}$.

Let us notice that $\Sigma$ is a closure of $X^{\t}  \cup \{ \langle \sim A, \t  \rangle\}$ under tableau rules that is an open and complete branch. As we already know, $\Sigma$ is a branch, an one-element-branch (by \ref{Coro Set of expressions is branch}), and by the construction of sequence $(\alpha)$,  $X^{\t}  \cup \{ \langle \sim A, \t  \rangle\} \subseteq \Sigma$.

If $\Sigma$ is not open, then --- by definition of \ref{Branch inconsistent set of expressions} --- $\Sigma$  includes a finite, b-inconsistent subset. However, it would mean that for some $i \in \mathbb{N}$,  $\phi_{i} \in (\alpha)$ would be a b-inconsistent subset. But it is contradictory, since every $\phi_{i}$ is open.

For a contradiction, let us then assume that $\Sigma$ is not a complete branch. So --- by definition of complete branch \ref{Branch: closed/open,complete} ---  there exists  such a two-element-branch $\Sigma'$ that $\bigcup \Sigma \subset \Sigma'$, where $\bigcup \Sigma = \Sigma$, since $\Sigma$ is an one-element-branch. Because there exists the two-element-branch $\Sigma'$ such, that $\Sigma \subset \Sigma'$, so there must exist a tableau rule $\R \in \textbf{TR}$ and an $n$-tuple $\langle \Sigma, X_{1}, \dots, X_{n} \rangle \in \R$ such that $\bigcup \Sigma' = X_{i}$, where $1 \leq i \leq n$. However, by the condition of closure under cores (CC) (definition of tableau rule \ref{Tableau rule}), there are:  a minimal, finite subset $Y \subseteq \Sigma$,  subsets $Y_{1} \subseteq X_{1}$, \dots, $Y_{n} \subseteq X_{n}$ and $n$-tuple $\langle Y, Y_{1}, \dots, Y_{n} \rangle \in \R$. Since $Y$ is finite, there is such an element of string $(\alpha)$ $\phi_{i}$ that $Y \subseteq \phi_{i}$. Since $\phi_{i}$ is b-consistent --- by the condition of closure under expansion (CE) (definition of tableau rule \ref{Tableau rule}) ---  there is a rule $\langle \phi_{i}, Z_{1}, \dots, Z_{n} \rangle \in \R$.  However it means that $\phi_{i}$ is not a complete branch.

Summing up, $\Sigma$ is a closure of $X^{\t}  \cup \{ \langle \sim A, \t  \rangle\}$ under tableau rules that is an open and complete branch.\end{proof}

It is worth noting that the notions below constitute some generalizations of the \textit{abstract consistency property} proposed by R. Smullyan and applied by M. Fitting to systems of logic defined by possible world's semantics (see: \cite{MF1983}). Here, however, we have generalized possible world's semantics, proposing the concept of general semantic structure (definition \ref{General semantic structure}), which is something more general than the notion of frame in modal logic. Therefore, in the following definitions and facts (the definition of structure generated by branch \ref{Structure generated by branch}, the corollary on open branch generating model \ref{Open branch generates model}, and the definition on the relationship between models and rules \ref{Models sound with respect to rules}) we give a broader look at the problem of abstract consistency property.

Now, we give a general definition of a structure induced by a branch. 

\begin{defi}[Structure generated by branch]\label{Structure generated by branch}   Let $\phi$ be a branch. $\s = \langle \{W_{i}\}_{i\in M}$, $\{ R_{j}\}_{j\in N}, \vartheta \rangle$ is a \textit{structure generated by} $\phi$ iff

\begin{itemize}

\item $M = \{i \mid   \exists \langle A, w_{i}^{k} (x_{1}, \dots x_{k}) \rangle \in \Ex \: (\langle A, w_{i}^{k} (x_{1}, \dots x_{k}) \rangle \in \bigcup \phi) \}$

\item for all $i \in M$, $W_{i}$ is  a maximal subset of:

\smallskip

$\{m \mid   \exists \langle A, w_{i}^{k} (x_{1}, \dots x_{k}) \rangle \in \Ex \: [(\langle A, w_{i}^{k} (x_{1}, \dots x_{k}) \rangle \in \bigcup \phi \text{ or }$

$\langle \sim A, w_{i}^{k} (x_{1}, \dots x_{k}) \rangle \in \bigcup \phi) \text{ and } m = \langle x_{1}, \dots x_{k} \rangle ]\}$

\smallskip

such that: if  $o \equiv l \in \bigcup \phi$, $m' = \langle x_{1}, \dots, o, \dots, x_{k} \rangle$ and $m'' = \langle x_{1}, \dots, l,\dots,  x_{k} \rangle$, then $m' \not \in W_{i}$ or $m'' \not \in W_{i}$.


\item $N = \{j \mid   \exists r_{j}^{k}(x_{i}, \dots, x_{k}) \in \Ex \:  [r_{j}^{k}(x_{i}, \dots, x_{k})\in \bigcup \phi \text{ or }\sim r_{j}^{k}(x_{i}, \dots, x_{k}) \allowbreak \in \bigcup \phi] \}$

\item for all $x_{1}, \dots, x_{k}  \in \mathbb{N}$,  $j \in N$,  $\langle x_{i}, \dots, x_{k} \rangle \in R_{j}$ iff $r_{j}(x_{i}, \dots, x_{k}) \in \bigcup \phi$

\item for all $A \in \Var$, $i \in M$ and  $\langle x_{1}, \dots x_{k} \rangle \in W_{i}$, $\vartheta(A, \langle x_{1}, \dots x_{k} \rangle) = 1$ iff $\langle A, w_{i}^{k} (x_{1}, \dots x_{k}) \rangle \in \bigcup \phi$.


\end{itemize}

\end{defi}

We have an obvious corollary on the relation between structures and open branches.

\begin{coro}[Open branch generates structure]\label{Open branch generates model}  If $\phi$ is an open branch, then there exists a structure $\s$ generated by $\phi$.
\end{coro}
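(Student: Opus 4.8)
The plan is to read off every component of $\s$ directly from $\bigcup\phi$ according to Definition \ref{Structure generated by branch}, and then verify that what we obtain is genuinely a general semantic structure in the sense of Definition \ref{General semantic structure}. First I would record that, since $\phi$ is open, its union $\bigcup\phi$ is b-consistent: a complementary pair would have both its members appearing in a single $\phi(i)$ (the images of $\phi$ form a $\subseteq$-chain), contradicting openness; and by Corollary \ref{Coro Set of expressions is branch} this $\bigcup\phi$ is itself a one-element branch, so it is a legitimate object to build on. The index sets $M$, $N$, the relations $R_j$, and the values of $\vartheta$ on variables are then each determined outright by the defining clauses, reading $M$ (as in the $W_i$-clause) so as to collect the index of every label occurring with either polarity, which guarantees $M \neq \emptyset$ because $\langle \sim A, \t\rangle \in \bigcup\phi$ contributes the index of $\t$; the disjointness $M \cap N = \emptyset$ is automatic once domain-indices (from the functional symbols $w_i^k$) and relation-indices (from the predicate symbols $r_j^k$) are kept in separate syntactic categories.

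The substantive step is the construction of the domains $W_i$. For $i \in M$ let $C_i$ be the candidate set of tuples named in the $W_i$-clause; it is non-empty by the definition of $M$. The $\equiv$-condition forbids $W_i$ from simultaneously containing two tuples $m' = \langle x_1,\dots,o,\dots,x_k\rangle$ and $m'' = \langle x_1,\dots,l,\dots,x_k\rangle$ whenever $o \equiv l \in \bigcup\phi$, so I would view these forbidden pairs as the edges of a graph on $C_i$ and take $W_i$ to be a maximal independent set. Such a set exists by Zorn's lemma: the family of independent subsets of $C_i$ is closed under unions of chains (if a forbidden pair lay inside a chain-union, both endpoints would already lie in one member of the chain, contradicting its independence), and every independent set extends to a maximal one. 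Since singletons are independent, $W_i$ is non-empty. Thus $\{W_i\}_{i\in M}$ is a non-empty family of non-empty sets, and $\{R_j\}_{j\in N}$ is a (possibly empty) family of relations, as required.

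Finally I would check that $\vartheta$ is a well-defined total function into $\{0,1\}$. Every pair $\langle p, m\rangle$ with $p \in \Var$ and $m \in \bigcup_{i\in M} W_i$ receives exactly one value via the biconditional (value $1$ when the corresponding positive expression lies in $\bigcup\phi$, and $0$ otherwise), so totality and two-valuedness are immediate; the only genuine risk is a clash, and here openness does the work, since b-consistency of $\bigcup\phi$ rules out having both $\langle p, w_i^k(x_1,\dots,x_k)\rangle$ and $\langle \sim p, w_i^k(x_1,\dots,x_k)\rangle$ present in $\bigcup\phi$, so the negative information on the branch never contradicts the assigned value. Assembling these pieces yields a triple $\s = \langle \{W_i\}_{i\in M}, \{R_j\}_{j\in N}, \vartheta\rangle \in \GS$ satisfying every clause of Definition \ref{Structure generated by branch}, which is exactly a structure generated by $\phi$. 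The main obstacle is the domain construction: securing a maximal $\equiv$-consistent subset that is still non-empty, which is where the closure-under-chains argument for Zorn's lemma and the non-emptiness of the candidate set are both needed; a secondary point requiring care is ensuring that a single tuple cannot be forced to carry two different valuation values should it lie in two distinct domains, which one settles by treating the domains as disjointly indexed.
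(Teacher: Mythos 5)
Your argument is correct and is essentially the paper's own proof written out in full: the paper disposes of this corollary in a single sentence, citing just the definitions of open branch and of structure generated by branch, and your construction is precisely the unfolding of that appeal. The details you supply --- b-consistency of $\bigcup\phi$ via the chain structure of a branch, Zorn's lemma for the existence of the maximal $\equiv$-independent subsets $W_i$, and well-definedness of $\vartheta$ --- are verifications the paper leaves entirely implicit.
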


\begin{proof}By definitions of open branch \ref{Branch: closed/open,complete} and structure generated by branch \ref{Structure generated by branch}. \end{proof}

The last definition is on how complete branches  generate a structure that can be extended to model belonging to $\M$ (we assumed in section \ref{Local semantic consequence relation} as $(\maltese\, \maltese)$) and satisfying an initial set of~expressions.

\begin{defi}[Models sound with respect to rules]\label{Models sound with respect to rules} Let $\phi$ be a complete and open branch and $X
  = \{  \langle B, t \rangle \mid  \langle B, t \rangle \in \bigcup \phi \}$.  $\M$ is \textit{sound} with respect to the rules of $\TR$ iff for all structures $\s$ generated by  $\phi$: (a) $\s$ can be extended to a model $\mathfrak{M} \in \textbf{M}$ and (b) if $\langle A, t \rangle \in X$,  then $\mathfrak{M},  x_{1} \models A$, and if $\langle \sim A, t \rangle \in X$,  then $\mathfrak{M}, x_{1} \not \models A$, where $A \in \For$, $t  =  w_{i}^{k} (x_{1})$, and $x_{1} \in W_{k}$.


\end{defi}

\subsection{From a closed tableau to semantic relation $\models_{\M}$}


Now we need to establish a relationship between tableaux and models.

\begin{defi}[Rules sound with respect to models]\label{Rules sound with respect to models} $\TR$ is \textit{sound} with respect to $\M$ iff for all sets $X_{1}, \dots, X_{i} \subseteq \textsf{Ex}$ (where $1 < i$), all models $\m \in \M$, and all rules $\R \in \TR$, if $\langle X_{1}, \dots, X_{i} \rangle \in \R$ and  $\m$ is suitable for $X_{1}$, then $\m$ is suitable for  $X_{j}$, for some $1 < j \leq i$.
\end{defi}

\begin{lem}[On maximal and open branch] \label{Lemma on maximal and open branch}

Let $\TR$ be sound with respect to $\M$. Let  $X \cup\{ A \}  \subseteq \For$ be finite. If $X \not \models A$, then there exists a complete and open branch starting with $\{\langle B, \t \rangle \mid  B \in X \cup \{  \sim A \} \}$.
\end{lem}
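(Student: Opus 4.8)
The plan is to turn the semantic failure $X \not\models A$ into a countermodel, use it to seed a branch, and keep that countermodel ``alive'' along the branch so that the branch can never close. First I would unfold the hypothesis using the local consequence relation (the $(\maltese)$ clause of Definition~\ref{Semantic consegence relation}): from $X \not\models A$ we obtain a model $\m \in \M$ and a relativization point $w$ in the $k^{th}$ domain of $\m$ with $\m, w \models B$ for every $B \in X$ and $\m, w \not\models A$. Writing the starting label as $\t = w_{j}^{i}(x_{1}, \dots, x_{i})$ (recall from \ref{Remarks on starting label} that $w_{j}^{i}$ is tied by $\f$ to the $k^{th}$ domain $W_k$), I would define $f'$ on the indices $x_{1}, \dots, x_{i}$ so that $\langle f'(x_{1}), \dots, f'(x_{i}) \rangle = w \in W_{k}$, and arbitrarily elsewhere. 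Reading off Definition~\ref{Model sutable to a set of expressions}, clauses (a) and (b) then hold for the initial set $S_{0} := \{\langle B, \t \rangle \mid B \in X\} \cup \{\langle \sim A, \t \rangle\}$ (by the countermodel), while clauses (c)--(f) are vacuous since $S_{0}$ contains no relational or identity expressions. Hence $\m$ is suitable for $S_{0}$.

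Next I would construct, stage by stage, a branch $\phi$ starting from $S_{0}$ (in the sense of Definition~\ref{Branch}) while maintaining the invariant \emph{$\m$ is suitable for $\phi(n)$}. To do this I fix a fair enumeration of all candidate rule-applications (legitimate since $\Ex$, and hence the set of finite cores, is countable). At stage $n$, if the next scheduled rule $\R \in \TR$ genuinely applies to $\phi(n)$ as input, then because $\m$ is suitable for $\phi(n)$, soundness of $\TR$ with respect to $\M$ (Definition~\ref{Rules sound with respect to models}) guarantees that \emph{some} output of that application is again suitable for $\m$; I let $\phi(n+1)$ be such an output. Each $\phi(n)$ stays finite by (CF), hence co--infinite, so that (CE) remains available to license the applications dictated by the enumeration.

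Finally I would argue completeness and openness separately. For \textbf{openness}: since $\m$ is suitable for every $\phi(n)$, Fact~\ref{Fact No model is suitable for b-inconsistent set} forces each $\phi(n)$ to be $b$-consistent, so no $\phi(i)$ is $b$-inconsistent and $\phi$ is open by Definition~\ref{Branch: closed/open,complete}. For \textbf{completeness}: I would mirror the closing argument of Lemma~\ref{On existence of open and complete branch}: if some rule applied nontrivially to $\bigcup \phi$, then by (CC) its core has a finite input $Y \subseteq \bigcup \phi$, so $Y \subseteq \phi(n)$ for some $n$; by (CE) and (CS) the application is then available at stage $n$, so fairness of the enumeration ensures the corresponding new information was already introduced into the branch, a contradiction. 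Thus $\bigcup \phi$ admits no extension and $\phi$ is complete, giving the desired complete and open branch from $S_{0}$.

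The main obstacle is the interplay in the middle step between \emph{saturation} and \emph{faithfulness}: the fair schedule must simultaneously cover every genuinely-applicable rule (for completeness) and, at each such step, descend into an output that preserves suitability of $\m$ (for openness), all while reserving infinitely many unused indices so that the co--infiniteness hypothesis of (CE) is never lost. This bookkeeping is exactly the delicate part, and it is the reason the argument must lean on the structural conditions (CC), (CE) and (CS) of Definition~\ref{Tableau rule} rather than on the bare notion of a rule; once these are in force, the construction runs in parallel to Lemma~\ref{On existence of open and complete branch}.
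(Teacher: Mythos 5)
Your proposal is correct in substance but follows a genuinely different route from the paper's. The common part is the opening: both you and the paper unfold $X \not\models A$ via Definition~\ref{Semantic consegence relation} into a countermodel $\m \in \M$ with a witness point in the k$^{th}$ domain, send the starting label onto that point, and read off Definition~\ref{Model sutable to a set of expressions} that $\m$ is suitable for the initial set $\{\langle B, \t \rangle \mid B \in X \cup \{\sim A\}\}$. After that, however, the paper does \emph{not} construct anything: it argues by reductio, assuming that every complete branch starting from that set is closed, noting that closed branches are finite and terminate in b-inconsistent sets for which no model is suitable (Fact~\ref{Fact No model is suitable for b-inconsistent set}), and then using soundness of $\TR$ (Definition~\ref{Rules sound with respect to models}) in contrapositive form to push non-suitability backwards from the ends of the branches to their common first element, contradicting the suitability of $\m$. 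You instead build the branch forwards: a fair schedule of candidate applications, the invariant that $\m$ stays suitable at every stage (choosing, by soundness, a suitable output each time), openness from Fact~\ref{Fact No model is suitable for b-inconsistent set}, and completeness by a (CC)/(CE)/(CS) saturation argument mirroring Lemma~\ref{On existence of open and complete branch}. Your route buys an explicit witness and makes the key invariant visible, at the price of redoing the saturation bookkeeping (fairness, keeping every stage finite so that co-infiniteness and (CE) remain available) that the paper only performs once, inside Lemma~\ref{On existence of open and complete branch}. The paper's route is shorter, but its backward-propagation step is tacit in exactly the dual way your scheduling is: it silently requires that every output of every rule application occurring along the closed branches is itself pursued by some complete branch in the collection, so that the contrapositive of soundness (``not suitable for any output, hence not suitable for the input'') is actually applicable. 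One caution about your completeness step: the final ``contradiction'' (that the new information was already introduced into $\bigcup \phi$) strictly needs the further observation that a rule cannot have $\bigcup \phi$ as input once sets similar to its core's outputs already lie inside $\bigcup \phi$; this follows from the spirit rather than the letter of (CE), but the paper makes the same silent leap in the closing argument of Lemma~\ref{On existence of open and complete branch}, so your proof stands at the same level of rigor as the original while being organized constructively rather than by contradiction.
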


\begin{proof}
Let $\TR$ be sound with respect to $\M$. Let  $X \cup\{ A \}  \subseteq \For$ be finite. Assume that  $X \not \models A$. So there exists a  model $\m \in \M$, domain $W_{k}^{\m}$ and point of relativization  $w^{\m} \in W_{k}^{\, \m}$ that $\m, w^{\m} \models X$ and $\m, w^{\m}  \not \models A$, where domain $W_{k}^{\, \m}$ is the k$^{th}$  domain of $\m$ (note $(\maltese\, \maltese)$, section \ref{Local semantic consequence relation}).

 By \ref{Model sutable to a set of expressions} model $\m$ is suitable for $\{\langle B, \t \rangle \mid  B \in X \}$, but not suitable for $\langle A, \t \rangle$, where $f(\circ(\t)) = w^{\m}$.  So, $\m$ is suitable for $\{\langle B, \t \rangle \mid B \in X \cup \{  \sim A \} \}$. For a contradiction let us additionally assume that all complete branches starting with $\{\langle B, \t \rangle \mid  B \in X \cup \{  \sim A \} \}$ are closed.

Obviously, the set of the complete branches starting with $\{\langle B, \t \rangle \mid  B \in X \cup \{  \sim A \} \}$ is not empty which is guaranteed by definition of tableau rule \ref{Tableau rule}, in particular by the condition of being closed under finite sets (CF) and by assumption that  $X \cup\{ A \}$ is finite. By dint of them we can put all the applicable tuples in a sequence that is a branch. Of course, all of the branches must be of a finite length, since they are closed and by \ref{Rule} the rules cannot be applied to b-inconsistent sets.

Moreover, no model can be suitable for a b-inconsistent set of expressions (by \ref{Branch inconsistent set of expressions}). However, since $\TR$ is assumed to be sound with respect to $\M$ and $\m \in \M$, the model $\m$ is also not suitable for the former parts of the branches, particularly to the first element $\{\langle B, \t \rangle \mid  B \in X \cup \{  \sim A \} \}$ of all the branches under consideration. But this is a contradiction.  Therefore at least one complete branch starting with $\{\langle B, \t \rangle \mid  B \in X \cup \{  \sim A \} \}$ must be open. \end{proof}

The last lemma before summarizing all the facts so far is about the relationship between open branches and all possible tableaux. 

\begin{lem}[On open tableaux]\label{Lemma on open tableaux} Let  $X \cup\{A\} \subseteq \For$ be finite. If there is a~complete and open branch starting with $\{\langle B, \t \rangle \mid B \in X \cup \{\sim A \} \}$, then all complete tableaux $\langle X, A, \Psi \rangle$ are open.

\end{lem}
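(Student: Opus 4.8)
The plan is to argue by contradiction. Suppose some complete tableau $\langle X, A, \Psi \rangle$ were \emph{not} open, i.e.\ closed; by Definition~\ref{Closed open tableau} every branch in $\Psi$ is then closed. Write $S = X^{\t}\cup\{\langle \sim A,\t\rangle\}$ for the common first set of all branches in $\Psi$ (Definition~\ref{Tableau}); since $\{\langle B,\t\rangle\mid B\in X\cup\{\sim A\}\}=X^{\t}\cup\{\langle \sim A,\t\rangle\}$, this is exactly the set from which the hypothesised complete open branch $\phi$ also starts. Put $\Sigma=\bigcup\phi$. By Corollary~\ref{Coro Set of expressions is branch} $\Sigma$ is a one-element branch; as $\phi$ is open, no $\phi(i)$ is $b$-inconsistent, so $\Sigma$ contains no complementary pair and is $b$-consistent; and as $\phi$ is complete, no tableau rule applies to $\Sigma$ (Definition~\ref{Branch: closed/open,complete}). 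My aim is to trace a single branch down through $\Psi$ all of whose members are similar, with respect to $\LV$, to subsets of $\Sigma$; by Fact~\ref{Two similar sets of expression are either consistent or inconsistent} every such member is then $b$-consistent, so the traced branch is open, contradicting that all branches of $\Psi$ are closed.

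For the tracing I would maintain the invariant that the prefix reached so far ends in a set $\psi(i)$ similar, with respect to $\LV$, to some subset $\Sigma_i\subseteq\Sigma$. The base case holds since $\psi(1)=S\subseteq\Sigma$. For the step, at the current cohesion node the tableau applies a single rule: there are $\R\in\TR$ and a tuple $\langle\psi(i),Y_2,\dots,Y_m\rangle\in\R$ (Definition~\ref{Tableau}). Using closure under similar sets (CS) I transport this application to $\Sigma_i$, obtaining $\langle\Sigma_i,Z_2,\dots,Z_m\rangle\in\R$ with each $Z_l$ similar to $Y_l$. I then claim at least one $Z_l$ is similar, with respect to $\LV$, to a subset of $\Sigma$. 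Indeed, were none of them, then---$\Sigma$ being $b$-consistent and $\Sigma_i\subseteq\Sigma$---closure under cores (CC) and closure under expansion (CE) would yield a tuple $\langle\Sigma,U_2,\dots\rangle\in\R$, i.e.\ a rule application extending $\Sigma$, contradicting the completeness of $\phi$. Hence some $Z_l$, and so the corresponding output $Y_l$, is similar to a subset of $\Sigma$; this $Y_l$ restores the invariant and, by Fact~\ref{Two similar sets of expression are either consistent or inconsistent}, is $b$-consistent.

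It then remains to check that $\Psi$ actually contains a branch following this chosen output, so the trace is a genuine branch of $\Psi$ and not a path that leaves it. Here I would exploit the maximality built into completeness of the tableau (Definition~\ref{Complete incomplete tableau}): if the $b$-consistent output $Y_l$ were taken by no branch of $\Psi$, then extending a branch into $Y_l$ to a complete branch and adjoining it would produce a strictly larger tableau; since that new branch arises from the \emph{same} rule $\R$ as the branches already present at the node, it is not a redundant variant of any of them in the sense of Definition~\ref{Useless variant of branch} (which demands a rule with strictly more outputs), so the enlarged set would not be a redundant superset---contradicting completeness. Thus at every node the good output is realised, the trace is a branch of $\Psi$, and its maximal completion inside $\Psi$ is a complete open branch, the desired contradiction.

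The hard part is the saturation step that invokes (CE) on $\Sigma$: condition (CE) requires its input to be \emph{co--infinite}, whereas the union of a complete branch may a priori exhaust the index set $\mathbb{I}$. I would defuse this at the outset by replacing $\phi$ with a similar branch---equally open and complete by Fact~\ref{Two similar sets of expression are either consistent or inconsistent} and closure under similar sets---whose labels use only a co--infinite set of indices; this is legitimate because similarity is demanded only up to $\LV$, so the non-designated indices may be renamed into, say, the even numbers, leaving infinitely many free for (CE) to introduce. A secondary point needing care is the redundancy bookkeeping of the previous paragraph, where one must verify both that the enlarged set is still a tableau and that the adjoined same-rule branch genuinely falls outside Definition~\ref{Useless variant of branch}.
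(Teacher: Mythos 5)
Your proposal is, at bottom, the same argument as the paper's. Both proofs play the closed complete tableau $\langle X,A,\Psi\rangle$ off against the complete open branch $\phi$ by maintaining exactly your invariant: a traced member of a tableau branch is similar, with respect to $\LV$, to a subset of $\bigcup\phi$. The base case holds because both objects start from $X^{\t}\cup\{\langle\sim A,\t\rangle\}$; the step is propagated by (CS) and is forced to succeed because otherwise (CC) and (CE) would let the governing rule fire on $\bigcup\phi$, contradicting completeness of $\phi$; and Fact~\ref{Two similar sets of expression are either consistent or inconsistent} transfers b-(in)consistency across similarity. The only real difference is the direction in which the collision is read off: the paper argues by contraposition that the simulation must eventually copy the b-inconsistent last member of a finite closed branch of $\Psi$ into $\bigcup\phi$ (so every complete branch is closed, including the infinite case, which it handles separately), whereas you read the same collision as saying that the traced branch of $\Psi$ has only b-consistent members and hence is open. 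These are mirror images of one proof, so your route is essentially the paper's.

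What distinguishes your write-up is that you make explicit two debts the paper's proof leaves unpaid, and here one caution is in order. First, (CE) does require its expanded input to be co-infinite, and the paper never verifies this for $\bigcup\phi$; your renaming repair is the right idea, though note that (CS) only produces outputs similar via \emph{some} bijection, so coherently renaming the whole branch with one global bijection needs an extra check. Second, both proofs need the traced ``good'' output to be realised by an actual branch of $\Psi$; the paper simply asserts that the simulated set is the next member ``of one of the branches in the tableau''. Your attempt to extract this from completeness of the tableau is in the right spirit, but your reading of Definition~\ref{Useless variant of branch} is too strong: redundant variance is witnessed by \emph{some} pair of tuples in $\TR$ satisfying the size condition, not necessarily by the tuples actually used to build the branches, so ``the adjoined branch arises from the same rule'' does not by itself rule out its being a redundant variant. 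This sub-claim is the one piece you would still have to repair for a watertight argument; relative to the paper this is not a defect of your proposal, since the paper's own proof passes over the same point in silence.
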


\begin{proof}

Assume some finite $X \cup\{A\} \subseteq \For$.  Our proof is a proof by contraposition. Let at least one complete tableau $\langle X, A, \Psi \rangle$ be closed. By definition of complete tableau \ref{Complete incomplete tableau}, $\Psi$ contains only complete and closed branches that all start with the set $\{\langle B, \t \rangle \mid B \in X \cup \{\sim A \} \}$. Additionally, all of them are of a finite length, since they are closed.

Let us now consider a complete branch $\phi$ starting with $\{\langle B, \t \rangle \mid B \in X \cup \{\sim A \} \}$. Since $\phi$ is a complete branch, no rule can be applied to $\bigcup \phi$ (by definition of complete branch \ref{Complete incomplete tableau}). By induction on the length of the branches contained in $\Psi$ we are aiming to show that $\bigcup \phi$ contains a b-inconsistent subset of expressions, and so $\phi$ is a  closed branch.

First, in the tableau $\langle X, A, \Psi \rangle$ all branches start with the set of expressions $\{\langle B, \t \rangle \mid B \in X \cup \{\sim A \} \}$. Either $\{\langle B, \t \rangle \mid B \in X \cup \{\sim A \} \}$ is b-consistent, or not. Of course, if $\{\langle B, \t \rangle \mid B \in X \cup \{\sim A \} \}$ is b-inconsistent, then $\phi$ is a  closed branch. So let the set of expressions $\{\langle B, \t \rangle \mid B \in X \cup \{\sim A \} \}$ be b-consistent. If so, then some rule was applied to $\{\langle B, \t \rangle \mid B \in X \cup \{\sim A \} \}$ and any branch in $\Psi$ as a second member is $\{\langle B, \t \rangle \mid B \in X \cup \{\sim A \} \} \cup X_{i}$, for some $i \in \mathbb{N}$,  where $X_{i}$ is an output of the rule. Since $\phi$ is a complete branch, $\bigcup \phi$ must contain a similar set to $X_{i}$, because the tableau rules are closed under similarity (CS) (by \ref{Tableau rule}). Of course, either $\{\langle B, \t \rangle \mid B \in X \cup \{\sim A \} \} \cup X_{i}$ is b-consistent, or not. If  $\{\langle B, \t \rangle \mid B \in X \cup \{\sim A \} \} \cup X_{i}$ is b-inconsistent, then $\phi$ is a  closed branch, as $\bigcup \phi$ contains a similar set to $\{\langle B, \t \rangle \mid B \in X \cup \{\sim A \} \} \cup X_{i}$ and  similar sets both are either b-consistent, or both b-inconsistent (by the definition of similar sets of expressions \ref{Similar sets of expressions}). So let the set of expressions $\{\langle B, \t \rangle \mid B \in X \cup \{\sim A \} \} \cup X_{i}$ be b-consistent.

Second, assume that $\bigcup \phi$ contains the set of expressions $Y_{n}$ that is similar to the $n^{th}$--memeber of one of the branches in the tableau $\langle X, A, \Psi \rangle$.  Again, if $Y$ is b-inconsistent, $\bigcup \phi$, as well as $\phi$ is a closed branch. If not, than  $\bigcup \phi$ must contain the set of expressions $Y_{n+1}$ that is similar to the $n^{th}+1$--member of one of the branches in the tableau $\langle X, A, \Psi \rangle$. However, because all branches in the tableau $\langle X, A, \Psi \rangle$ are finite (as closed), so $\bigcup \phi$  either contains (1) a set of expressions that is similar to the  union of members of one of the branches in $\Psi$, or contains (2) the union of  some complete and open, but infinite branch. If (1) holds, then $\bigcup \phi$ is b-consistent, as well as $\phi$ itself. Otherwise, if  (2) holds,  then there exists an infinite, complete and so open branch that starts from the set $\{\langle B, \t \rangle \mid B \in X \cup \{\sim A \} \}$. By the condition of being closed under similarity  (CS) (\ref{Tableau rule}) the copy of the branch should be in the tableau $\langle X, A, \Psi \rangle$, since it is a complete tableau. But $\langle X, A, \Psi \rangle$ is a closed tableau which is a contradiction and cannot contain any open branch.

Therefore any~complete  branch starting with $\{\langle B, \t \rangle \mid B \in X \cup \{\sim A \} \}$ is closed. \end{proof}

\subsection{Tableau Metatheorem on: $\models_{\M}$, $\vartriangleright_{\TR}$, and tableaux}\label{subsec:tableau methatheorem}

\noindent The former definitions and lemmas are sufficient to conclude the tableau metatheorem for any propositional logics defined by way proposed here.

\begin{theo}[Tableau Metatheorem]\label{Tableau metatheorem} Let (1) the set of rules $\TR$ be sound with respect to the class of models   $\M$ and (2) the class of models $\M$ be sound with respect to the  rules of $\TR$.  Then for all $X \subseteq \For$, $A \in \For$ the following statements are equivalent:

\begin{itemize}

\item $X \models_{\M} A$

\item  $X \vartriangleright_{\TR} A$

\item there is a finite $Y \subseteq X$ and a closed tableau $\langle Y, A, \Phi \rangle$.

\end{itemize}

\end{theo}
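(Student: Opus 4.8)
The plan is to prove the three statements equivalent by establishing the cycle
$$X \models_{\M} A \;\Rightarrow\; X \vartriangleright_{\TR} A \;\Rightarrow\; (\exists\text{ finite }Y\subseteq X,\ \text{closed tableau }\langle Y,A,\Phi\rangle) \;\Rightarrow\; X \models_{\M} A,$$
arranging the three arrows so that each one isolates a single hypothesis of the theorem and discharges it through the lemmas already proved in this section.

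For $X \models_{\M} A \Rightarrow X \vartriangleright_{\TR} A$ I would argue by contraposition and invoke hypothesis (2), that $\M$ is sound with respect to $\TR$ (Definition \ref{Models sound with respect to rules}). Assuming $X \not\vartriangleright_{\TR} A$, Definition \ref{Branch consequence relation} tells us that for every finite $Y \subseteq X$ it is \emph{not} the case that all complete branches starting from $Y^{\t} \cup \{\langle \sim A, \t\rangle\}$ are closed, i.e.\ each such $Y$ admits a complete and open branch. This is precisely the hypothesis of Lemma \ref{On existence of open and complete branch}, which delivers a single complete and open branch whose union $\phi$ is a closure of $X^{\t} \cup \{\langle \sim A, \t\rangle\}$. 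Applying Definition \ref{Models sound with respect to rules} to $\phi$ yields a model $\m \in \M$ that extends the structure generated by $\phi$ (whose existence is Corollary \ref{Open branch generates model}) and that satisfies every $B \in X$ while falsifying $A$ at the point of relativization named by $\t$; hence $X \not\models_{\M} A$, as required.

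For $X \vartriangleright_{\TR} A \Rightarrow$ (closed tableau), I would take the finite witness $Y \subseteq X$ supplied by Definition \ref{Branch consequence relation} and assemble a complete tableau $\langle Y, A, \Phi\rangle$ in the sense of Definitions \ref{Tableau} and \ref{Complete incomplete tableau}; the only thing to check here is that such a complete tableau exists, after which the assumption that \emph{all} complete branches from $Y^{\t} \cup \{\langle \sim A, \t\rangle\}$ are closed forces every branch of $\Phi$ to be closed, so $\langle Y,A,\Phi\rangle$ is closed by Definition \ref{Closed open tableau}. This arrow uses neither soundness hypothesis. Finally, for (closed tableau) $\Rightarrow X \models_{\M} A$ I would again argue contrapositively and use hypothesis (1), that $\TR$ is sound with respect to $\M$: if $Y \not\models_{\M} A$, then Lemma \ref{Lemma on maximal and open branch} produces a complete and open branch from $\{\langle B, \t\rangle \mid B \in Y \cup \{\sim A\}\}$, and Lemma \ref{Lemma on open tableaux} then forces every complete tableau $\langle Y, A, \Psi\rangle$ to be open, contradicting the existence of a closed one; thus $Y \models_{\M} A$, and monotonicity of $\models_{\M}$ (immediate from Definition \ref{Semantic consegence relation}, since $Y \subseteq X$) upgrades this to $X \models_{\M} A$.

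The routine glue — monotonicity, the bookkeeping between a formula set and its labelled image $X^{\t}$, and the bare existence of a complete tableau — is straightforward. The main obstacle is the first arrow, namely extracting a genuine $\M$-counter-model from the merely syntactic failure to close, particularly when $X$ is infinite. That difficulty is exactly what Lemma \ref{On existence of open and complete branch} (a compactness-style construction gluing finite open closures into one) and Definition \ref{Models sound with respect to rules} are designed to absorb, so with those in hand the step reduces to verifying that the label $\t$ does name a point of relativization in the $k^{th}$ domain, as arranged on page~\pageref{Remarks on starting label}.
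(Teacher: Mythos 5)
Your proposal is correct and follows essentially the same route as the paper: the same cyclic decomposition $\models_{\M} A \Rightarrow \vartriangleright_{\TR} \Rightarrow$ closed tableau $\Rightarrow \models_{\M}$, with hypothesis (2) discharged in the first arrow through Lemma \ref{On existence of open and complete branch} together with Corollary \ref{Open branch generates model} and Definition \ref{Models sound with respect to rules}, the bare existence of a complete tableau carrying the second arrow (the paper phrases this one contrapositively, you argue it directly, which is an immaterial difference), and hypothesis (1) discharged in the third arrow through Lemmas \ref{Lemma on maximal and open branch} and \ref{Lemma on open tableaux}. The one point where you genuinely diverge is the third arrow, and there your version is the tighter one: the paper contraposes on $X \not\models_{\M} A$ and applies Lemma \ref{Lemma on maximal and open branch} to the possibly infinite set $X$ itself --- which strictly exceeds that lemma's hypothesis, since it is stated only for finite premise sets --- and then descends to finite subsets via closure under expansion (CE) of Definition \ref{Tableau rule}. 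You instead apply the lemma only to the finite witness $Y$ supplied by the closed tableau, derive the contradiction with Lemma \ref{Lemma on open tableaux} at that finite level, and lift $Y \models_{\M} A$ to $X \models_{\M} A$ by monotonicity of the local consequence relation of Definition \ref{Semantic consegence relation}. This monotonicity detour buys a proof that stays within the stated hypotheses of the supporting lemmas, avoiding the finiteness gap in the paper's own argument, at no extra cost.
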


\begin{proof}

We assume points (1), (2) and take some $X \cup \{A \}  \subseteq \For$.

(a) $X \models_{\M} A \Longrightarrow X \vartriangleright_{\TR} A$.

\n By contraposition, we assume that $X \not \vartriangleright_{\TR} A$. So, by the definition of the  branch consequence relation \ref{Branch consequence relation} for all finite  $Y \subseteq X$ there exists a complete and open branch starting with $Y^{\t} \cup \{ \langle \sim A, \t \rangle \}$. Hence, by  lemma \ref{On existence of open and complete branch} there exists some closure $Z$ of $X^{\t}  \cup \{ \langle \sim A, \t  \rangle\}$ under the rules that is an open and complete branch. By corollary \ref{Open branch generates model} any open branch generates a structure, so also set $Z$ does. Since we assumed that (2) models were sound with respect to the rules, the structure can be extended to model $\mathfrak{M} \in \textbf{M}$ and (b) if $\langle A, \t \rangle \in X$,  then $\mathfrak{M},  x_{1} \models A$, and if $\langle \sim A, \t \rangle \in X$,  then $\mathfrak{M}, x_{1} \not \models A$, where $A \in \For$, $\t  =  w_{i}^{k} (x_{1})$, and $x_{1} \in W_{k}$. 
Since $X^{\t}  \cup \{ \langle \sim A, \t  \rangle\} \subseteq Z$, $X \not \models_{\M} A$.

(b) $X \vartriangleright_{\TR} A \Longrightarrow$ there is a finite $Y \subseteq X$ and a closed tableau $\langle Y, A, \Phi \rangle$.

\n By contraposition we assume that for all finite $Y \subseteq X$ no tableau $\langle Y, A, \Phi \rangle$ is complete and closed. Obviously, for all finite $Y \subseteq X$ there exists a complete tableau $\langle Y, A, \Phi \rangle$, since for any finite set of expressions there exists a complete branch  (by corollary \ref{Coro Set of expressions is branch} and the definition of tableau rules \ref{Tableau rule}) and so a set of branches being a tableau can be constructed (by \ref{Tableau}). Hence one of the branches that belong to $\Phi$  is complete and open.

(c) There is a finite $Y \subseteq X$ and a closed tableau $\langle Y, A, \Phi \rangle$ $ \Longrightarrow X \models_{\M} A$.

\n Again, by contraposition, we assume that $X \not \models_{\M} A$. So, by lemma \ref{Lemma on maximal and open branch}, there exists a complete and open branch starting with $\{\langle B, \t \rangle \mid B \in X \cup \{  \sim A \} \}$. Hence, for any finite $Y \subseteq X$ there exists a complete and open branch starting with $\{\langle B, \t \rangle \mid B \in Y \cup \{  \sim A \}\}$, since the tableau rules are closed under expansion (condition (CE) in \ref{Tableau rule}). So if for no finite $Y \subseteq X$ there  exists a complete and open branch starting with $\{\langle B, \t \rangle \mid B \in Y \cup \{  \sim A \}\}$, then there would not be one for $\{\langle B, \t \rangle \mid B \in X \cup \{  \sim A \}\}$ either.

Finally, by \ref{Lemma on open tableaux}, all complete tableaux $\langle Y, A, \Psi \rangle$ are open, for all finite $Y \subseteq X$.\end{proof}

\section{Applications of the metatheory}\label{sec:6}

The theory presented may – at first glance – seem quite complex, especially when compared to more popular and didactic approaches to tableaux. But you only need to set it up once, and once you've done that, it shows its value in the ease with which we can generate specific tableau systems. Consider, for example, how we might define a tableau system for some semantically formulated propositional logic.

We start with $\models$ logic specified by some set of models. Now the models need to be linked to the tableau language using a fitting function. Then, you must provide a set of tableau rules that meet the conditions in the tableau rule definition \ref{Tableau rule}. Then, according to the tableau metatheorem \ref{Tableau metatheorem}, if it is positively established that the tableau models and rules are sound with respect to each other, an adequate tableau system will thus be established. Thus, the metatheory presented in this paper offers a simplification of the process of defining all concepts and proving specific facts when constructing the $\vartriangleright$ equivalent of a given logic $\models$. In fact, it is enough to define the rules and check their relationship with the models, so in the presented methodology, the construction of an adequate tableau system comes down to three steps. Now, we show these three steps using the example of multimodal, three-valued logic. 

\begin{ex}\begin{myfont} 
Here we will show an application of the metatheory to the construction of adequate tableau system for some bi-modal, three-valued logic. 

Let us assume the set of formulae $\For$ builit with variables, four binary connectives: $\wedge$, $\vee$, $\rightarrow$, $\leftrightarrow$,  three unary connectives: negation $\neg$,  modal, alethic operator: $\Diamond$, and one modal, epistemic operator: $\K$. So, this satisfies the definition of formula \ref{Formula}.

The set of structures for the language, we initially consider,    consists of all of the ordered tuples: $\langle W, R^\K, R^\Diamond, \vartheta\rangle$, where $W\neq \varnothing$,
$R^\K\subseteq W^2$, $ R^\Diamond \subseteq W^2$, $\vartheta \colon W\times \textsf{Var} \longrightarrow \{1, 0, \frac{1}{2}\}$. This set can be identified with some subset of general semantic structures $\langle \{W_{i}\}_{i\in M}$, $\{ R_{j}\}_{j\in N}$ $\vartheta' \rangle$ from  $\GS$ (definition \ref{General semantic structure}); exactly with the structures: $\langle \{W_{1}, W_{2}, W_{3}\}$, $\{ R_{1}, R_{2}, R_{3}\}$, $\vartheta' \rangle$, where $W_{1} = W$, $W_{2} = \{1, 0, \frac{1}{2}\}$, $W_{3} = W_{1} \times W_{2}$, $R_{1} = R^\K$, $R_{2} = R^\Diamond$, $R_{3} = \leq \subseteq \{1, 0, \frac{1}{2} \} \times \{1, 0, \frac{1}{2} \}$ and the function $\vartheta' \colon (W_{1} \cup W_{2} \cup W_{3}) \times \Var \longrightarrow \{ 0, 1 \}$ is defined by the condition: for all $p_{i} \in \Var$, for all $w \in W_{1}$ there exists exactly one such $x \in W_{2}$ that $\vartheta'(\langle w, x \rangle, p_{i}) = 1$, and $\vartheta'(\langle p_{i}, w \rangle ) = 1$ iff $\vartheta'(\langle p_{i}, w, 1 \rangle ) = 1$, while for the rest of elements $\vartheta'$ assigns 0. Of course, the k$^{th}$ domain of structure is $W_{1}$ or $W$ (the definition of  domain structure \ref{kth domain}). Now, we must define a satisfiability relation to transform the structures into models $\M$ (see the definition of models \ref{Models of For}).
We do that in two steps. Having a structure $\langle W, R^\K, R^\Diamond, \vartheta\rangle$, we extend $\vartheta$ to the function $V \colon W \times \For \longrightarrow \{1, 0, \frac{1}{2}\}$ in the following way,\label{From v to V for Diamond K 3} for all $w, u \in W$: 

\begin{align*}
V(w,\neg A) &= 1-V(w,A) \\
V(w, A\wedge B) &= min\{V(w,A), V(w,B)\}\\
V(w, A\lor B) &= max\{V(w,A), V(w,B)\}\\
V(w, A\to B) &= min\{1-V(w,A)+V(w,B),1\}\\
V(w, \K A) &= min\{V(u,A): wR^\K u\}\\
V(w, \Diamond A) &= max\{V(u,A): w R^\Diamond u\}.
\end{align*}

\n Naturally, the functions $min$ and $max$ are defined on the arithmetic order $\leq$, and $min (\emptyset)= 1$, while $max (\emptyset)= 0$.  It is worth noting that the conditions for non-modal connectives are identical to Łukasiewicz's three-valued logic $\bold{\L}3$, however extended to worlds. 

Among the structures, we select those $\s = \langle W, R^\K, R^\Diamond, \vartheta\rangle$ that meet the following condition: (Reflexivity)  $\forall_{w \in W} R^\K (w, w)$\footnote{This condition is usually assumed in the epistemic interpretation of modality, because it is related to the validity of $\K A \rightarrow A$.}.
The set of structures that satisfy these conditions we denote by $\S$. 

For any structure $\s = \langle \{W_{1}, W_{2}, W_{3}\}$, $\{ R_{1}, R_{2}, R_{3}\}$, $\vartheta' \rangle$ (aka $\langle W, R^\K, R^\Diamond, \vartheta\rangle$) the satisfiability relation $\models_{\s} \subseteq (W_{1} \cup W_{2} \cup W_{3}) \times \For$ is now defined as follows:  $\langle w, n  \rangle \models_{\s}  A$ iff $V(w, A) = n$, for $n \in \{1, 0, \frac{1}{2}\}$, and $w \models_{\s}  A$ iff $V(w, A) = 1$. So, the class $\S$ under the union of $\models_{\s}$ transforms into  the class of models $\M$, and thus in a model $\m$ and in a world $w^{\m}$: $\mathfrak{M},w\vDash A$ iff $V(w, A) = 1$. Finally, by the definition of semantic consequence relation  \ref{Semantic consegence relation} and the class of models $\M$ we obtain semantically determined logic $\models_{\K\Diamond 3}$. The question is how to define an adequate tableau system?

The first step of constructing an adequate tableau system to $\models_{\K\Diamond 3}$ is to link models $\M$ to the tableau language using a fitting function. So, let us start describing a tableau language. As labels (see the definition of tableau labels \ref{Tableu labels}) we assume the union of $\N$ and the set of pairs $\N \times \{1, 0, \frac{1}{2}\}$ which we denote by $\TE$. We omit  functional symbols: the elements from $\N$ denote objects from $W_{1}$, the pairs from $\N \times \{1, 0, \frac{1}{2}\}$ denote objects from $W_{3}$ --- so this is the assignment under the fitting function. (Of course, we could, for example,  use a symbol $w_{3}$ to form labels, $w_{3}(i, n)$, or $w_{1}$ for $w_{1}(i)$,  but it seems redundant in this case.) Now, we assume the set of  expressions (see the definition of expression: \ref{Expression}) as $\Ex = \{ \langle A,  t \rangle  | A \in \For, t \in \TE \} \cup \{ \langle \sim A, t \rangle  | A \in \For, t \in \TE \}$. When it is clear, we just write: $A, t $ instead of $\langle A , t\rangle$ and $ \sim A, t $ instead of $\langle \sim A, t \rangle$. We also assume that a label starting a proof $\t$ must have the form $i \in \N$. So, the proof of the formula $A$ on the ground of the set of formulae $Y$ starts from $Y^{\t}\cup\{\langle \sim A, \t \rangle\}$, for some $\t$ (look at the remark on the starting label in  the section \ref{Remarks on starting label}) $Y^{\t}\cup\{\langle \sim A, \t \rangle\}$.\label{Remarks on starting label for KDiamond3}

The second step is to provide a set of tableau rules that meet the conditions in the tableau rule definition \ref{Tableau rule}.
This step is creative and requires a proposal. Our set of tableau rules $\TR_{\K \Diamond 3}$ contains:


\begin{align*}
(\text{\R}_{i1})\quad\begin{array}{c}  A, i\\ \hline\\[-1,08em] A, i, 1  \end{array}\quad\quad
(\text{\R}_{\sim i})\quad\begin{array}{c} \sim A, i\\ \hline\\[-1,08em] \begin{array}{l|c}  A, i, 0  & A, i, \frac{1}{2}\end{array}\end{array}\quad\quad(\text{\R}_{\sim})\quad\begin{array}{c}  A, i, n \\  A, i, m  \\ \hline\\[-1,08em]  \sim A, i, n  \end{array}
\end{align*}

\n When we start the proof with $Y^{\t}\cup\{\langle \sim A, \t \rangle\}$, for some $\t$, the formula $A$, and the set of formulae $Y$,  the rule $(\text{\R}_{i1})$ allows us to state that the formulae in $Y$ have the logical value 1 in $i$, while the rule $(\text{\R}_{\sim} i)$  brings the information that $A$ has got in the world $i$ either $0$ or $\frac{1}{2}$. The rule $(\text{\R}_{\sim})$ is simply to close a branch according to the paradigm when it contains $A, i, n$ and $ A, i, m$, where $n $ and $m$ are different. So, the clause for $(\text{\R}_{\sim})$ is that $n$ and $m$ are different. 

There are two things worth noting about these three tableau rules.

(a) The adoption of  the rules  ($\text{\R}_{i1}$) and ($\text{\R}_{\sim i}$)  is justified by the general approach presented in our paper, which requires that we start the proof with the assumption that the premises are satisfied in the $k_{th}$ element of domain $W_{k}$, while the conclusion is not satisfied in that element (look at the definition of k$^{th}$ domain structure \ref{kth domain}). However, since we are analyzing the case of three-valued logic, we need to allow for more sub-options, which we do by adopting rules ($\text{\R}_{i1}$) and ($\text{\R}_{\sim i}$). Rule ($\text{\R}_{i1}$) says that satisfiability in element $i$ means having the logical value 1, while rule ($\text{\R}_{\sim i}$) says that unsatisfiability means having the value 0 or $\frac{1}{2}$.

However, if we assume that we start the tableau proof with labels: $B, i, 1$ for the premises, and $\sim A, i, 1$ for the conclusion, we would need only a rule that directly tells us to consider two options:

\begin{align*}
\begin{array}{c} \sim A, i, 1\\ \hline\\[-1,08em] \begin{array}{l|c}  A, i, 0  & A, i, \frac{1}{2}\end{array}\end{array}
\end{align*}

(b) We could also ignore the rule $(\text{\R}_{\sim})$ if we assumed that the appearance of $A, i, n$ and $A, i, m$ on a branch gives b-inconsistency if $n$ is different from $m$. However, in our metatheory we generally assumed that a branch is b-inconsistent if it contains $A, t$ and $\sim A, t$, for some label $t$ (look at the comment on tableau meta-negation $\sim$ in section \ref{sec:3}). Therefore the expressions $A, i, n$ and $A, i, m$ reduce to $A, i, n$, $\sim A, i, n$ by the rule  $(\text{\R}_{\sim})$.

As is obvious, our tableau systems metatheory is not ``tailor-made'', but is redundantly tailored. But that is precisely what makes it general. So we can start with more tableau rules, and then ``tailor'' them to a more economical form. Here we will stay close to the initial, general approach.

The subsequent rules (from $(\text{\R}_{\neg 1})$ to $(\text{\R}_{\leftrightarrow \frac{1}{2}})$) correspond to \L ukasiewicz's valuations adopted in the semantics.

\begin{align*}
\quad(\text{\R}_{\neg 1})\quad\begin{array}{c} \neg A, i,  1 \\ \hline\\[-1em] A, i,  0
\end{array}\quad\quad(R_{\neg 0})\quad\begin{array}{c} \neg A, i,  0 \\ \hline\\[-1em] A, i,  1
\end{array}\quad\quad(\text{\R}_{\neg \frac{1}{2}})\quad\begin{array}{c} \neg A, i,  \frac{1}{2} \\ [0.2em]\hline\\[-1em] A, i,  \frac{1}{2}
\end{array}
\end{align*}

\begin{align*}
(\text{\R}_{\wedge 1})\quad\begin{array}{c}  A \wedge B, i,  1 \\ \hline\\[-1em] A, i,  1 \\ B, i, 1
\end{array}\quad(\text{\R}_{\wedge 0})\quad\begin{array}{c}A \wedge B, i,  0\\ \hline\\[-1,08em] \begin{array}{l|c}  A, i, 0  & B, i,  0\end{array}\end{array}
\quad(\text{\R}_{\wedge \frac{1}{2}})\quad\begin{array}{c}A \wedge B, i,  \frac{1}{2}\\ [0.2em]\hline\\[-1,08em] \begin{array}{l|c|r}  A, i, 1 &  A, i,  \frac{1}{2} & A, i,  \frac{1}{2} \\[0,17em]   B, i,  \frac{1}{2} & B, i,  \frac{1}{2} & B, i,  1 \end{array}\end{array}
\end{align*}

\begin{align*}
(\text{\R}_{\vee 1})\quad\begin{array}{c}A \vee B, i,  1\\ \hline\\[-1,08em] \begin{array}{l|c}  A, i, 1  & B, i,  1\end{array}\end{array}\quad\quad(\text{\R}_{\vee 0})\quad\begin{array}{c}  A \vee B, i,  0 \\ \hline\\[-1em] A, i,  0 \\[-0,2em] B, i, 0
\end{array}
\quad(\text{\R}_{\vee \frac{1}{2}})\quad\begin{array}{c}A \vee B, i,  \frac{1}{2}\\ [0.2em]\hline\\[-1,08em] \begin{array}{l|c|r}  A, i, 0 & A, i,  \frac{1}{2} & A , i,  \frac{1}{2}  \\ [0.2em] B, i,  \frac{1}{2} & B , i,  \frac{1}{2} & B , i,  0 \end{array}\end{array}
\end{align*}

\begin{align*}
(\text{\R}_{\rightarrow 1})\quad\begin{array}{c}A \rightarrow B, i,  1\\ \hline\\[-1,08em] \begin{array}{l|c|r}  A, i, 0  & B, i,  1 & A, i, \frac{1}{2}\\[0,2em]  & &   B, i,  \frac{1}{2}\end{array}\end{array}\quad\quad(\text{\R}_{\rightarrow 0})\quad\begin{array}{c}  A \rightarrow B, i,  0 \\ \hline\\[-1em] A, i,  1\\[-0,2em] B, i, 0
\end{array}
\quad(\text{\R}_{\rightarrow \frac{1}{2}})\quad\begin{array}{c}A \rightarrow B, i,  \frac{1}{2}\\ [0.2em]\hline\\[-1,08em] \begin{array}{l|c}  A, i, 1  & A, i,  \frac{1}{2} \\ B, i,  \frac{1}{2} & B, i,  0 \end{array}\end{array}
\end{align*}

\begin{align*}
(\text{\R}_{\leftrightarrow 1})\quad\begin{array}{c}A \leftrightarrow B, i,  1\\ \hline\\[-1,08em] \begin{array}{l|c|r}  A, i, 1 & A, i, 0 & A, i, \frac{1}{2} \\ [0,09em]  B, i, 1 & B, i, 0 & B, i,  \frac{1}{2}\end{array}\end{array}
\quad\quad\quad(\text{\R}_{\leftrightarrow 0})\quad\begin{array}{c}  A \leftrightarrow B, i,  0 \\ \hline\\[-1,08em] \begin{array}{l|c}  A, i, 1& A, i,  0 \\  B, i,  0 & B, i,  1 \end{array}\end{array}
\end{align*}
\begin{align*}
(\text{\R}_{\leftrightarrow \frac{1}{2}})\quad\begin{array}{c}A \leftrightarrow B, i,  \frac{1}{2}\\ [0.2em]\hline\\[-1,08em] \begin{array}{c|c|c|c}  A, i, 1 & A, i,  \frac{1}{2} & A, i, 0 &  A, i, \frac{1}{2}\\ B, i,  \frac{1}{2} & B, i,  1 & B, i,  \frac{1}{2} & B, i, 0\end{array}\end{array}
\end{align*}

In all subsequent rules, if the expression $i r^{\Diamond} j$ (or $i r^{\K} j$) appears in the denominator, then $j$ must be new; if in the numerator, it simply already appeared on the branch.

The rules for $\Diamond$ and $\K$ correspond to the meanings we assigned to them in the semantics for $\models_{\K\Diamond 3}$.

\begin{align*}
(\text{\R}_{\Diamond 1})\quad\begin{array}{c} \Diamond A, i, 1 \\ \hline\\[-1em] i r^{\Diamond} j \\ A, j, 1\\
\end{array}\quad(\text{\R}_{\Diamond 0})\quad\begin{array}{c} \Diamond A, i,  0 \\ i r^{\Diamond} j\\ \hline\\[-1em] A, j,  0 
\end{array}\quad(\text{\R}_{\Diamond\frac{1}{2}})\quad\begin{array}{c} \Diamond A, i, \frac{1}{2}  \\ [0.2em]\hline\\[-1em] i r^{\Diamond} j \\  A, j, \frac{1}{2}
\end{array}
\quad(\text{\R}'_{\Diamond\frac{1}{2}})\quad\begin{array}{c} \Diamond A, i, \frac{1}{2}\\ i r^{\Diamond} j   \\[0.2em]\hline\\[-1em] \begin{array}{l|c} A, j, \frac{1}{2} & A, j, 0\end{array}
\end{array}
\end{align*}

\begin{align*}
(\text{\R}_{\K 1})\quad\begin{array}{c} \K A, i, 1 \\ i r^{\K} j \\ \hline\\[-1em]  A, j, 1\\
\end{array}\quad(\text{\R}_{\K 0})\quad\begin{array}{c} \K A, i,  0 \\  \hline\\[-1em] i r^{\K} j\\ A, j,  0 
\end{array}\quad(\text{\R}_{\K\frac{1}{2}})\quad\begin{array}{c} \K A, i, \frac{1}{2}  \\ [0.2em]\hline\\[-1em] i r^{\K} j \\  A, j, \frac{1}{2}
\end{array}
\quad\quad(\text{\R}'_{\K\frac{1}{2}})\quad\begin{array}{c} \K A, i, \frac{1}{2}\\ i r^{\K} j   \\[0.2em]\hline\\[-1em] \begin{array}{l|r} A, j, \frac{1}{2} & A, j, 1\end{array}
\end{array}
\end{align*}

The remaining rules correspond to the property of (Reflexivity). 
Three variants of the rule for (Reflexivity) have as premises expressions in which an index can appear:

\begin{align*}
\quad(\text{\R}_{\text{ref}})\quad\begin{array}{c} A, i, n \\\hline\\[-1em]  i r^{\K} i  \end{array}\quad\quad\quad(\text{\R}_{\text{ref} \,\K})\quad\begin{array}{c} ir^{\K} j\\\hline\\[-1em]  j r^{\K} j  \end{array}\quad\quad\quad(\text{\R}_{\text{ref} \,\Diamond})\quad\begin{array}{c} ir^{\Diamond} j\\\hline\\[-1em]  j r^{\K} j  \end{array}
\end{align*}



\n The set of tableau rules determines the branch consequence relation $\vartriangleright_{\TR_{\K \Diamond 3}}$ (see: \ref{Branch consequence relation}). 

The last  step of construction  is to  establish that the models $\M$ and the rules $\TR_{\K \Diamond 3}$ are sound with respect to each other to be able to apply the Tableau Metatheorem  \ref{Tableau metatheorem}. First we show that the rules are sound with respect to models. 

\medskip

\n \textbf{Fact 6.2.}\label{DK3 Rules sound to models} Rules from $\TR_{\K \Diamond 3}$ are sound with respect to the class of models $\M$.     

\smallskip
\n \textit{Proof.} To show that the rules are sound with respect to models (according to the definition \ref{Rules sound with respect to models}) let us take some sets $X_{1}, \dots, X_{i} \subseteq \textsf{Ex}$ (where $1 < i$), a model $\m \in \M$, and a rule $\R \in \TR_{\K \Diamond 3}$. Then let us assume that $\langle X_{1}, \dots, X_{i} \rangle \in \R$ and  $\m$ is suitable for $X_{1}$. We must check all cases of rules and show that $\m$ is suitable for  $X_{j}$, for some $1 < j \leq i$. Since the model $\m$ is suitable for $X_{1}$, according the definition of model suitable for a set of expressions \ref{Model sutable to a set of expressions} there is the function  $f': \mathbb{I} \longrightarrow \bigcup_{i \in \{1, 2, 3 \}}  W_{i}$, such that the following conditions  are fulfilled:

\begin{itemize}

\item if $\langle A, i \rangle  \in X_{1}$, then $f'(i) \in W$ and $\m, f'(i) \models A$

\item if $\langle \sim A, i \rangle \in X_{1}$, then $f'(i) \in W$ and $\m, f'(i) \not \models A$

\item if $\langle A, i, n \rangle \in X_{1}$, then $f'(i) \in W$, $n \in \{1, 0, \frac{1}{2}\}$ and $\m, f'(i), n \models A$

\item if $\langle \sim A, i, n \rangle \in X_{1}$, then $f'(i) \in W_{1}$, $n \in \{1, 0, \frac{1}{2}\}$ and $\m, f'(i), n \not \models A$

\item if $i r^{\K} j \in X_{1}$, then $\langle f'(i), f'(j) \rangle  \in R^{\K}$

\item if $i r^{\Diamond} j \in X_{1}$, then $\langle f'(i), f'(j) \rangle  \in R^{\Diamond}$.

\end{itemize}

\n We will check two exemplary cases. 

Let $\R = \text{\R}_{\neg 1}$ and assume that $\langle \neg A, i,  1 \rangle \in X_{1}$. So $\m, f'(i), 1 \models \neg A$ and $V(f'(i),\neg A) = 1$. Therefore $1 = 1- V(f'(i),A)$ and $V(f'(i),A) = 0$. As a consequence $\m, f'(i), 0 \models A$. Finally, the model $\m$ is suitable to $X_{1} \cup \{\langle \neg A, i,  0 \rangle \}$. 

Let $\R = (\text{\R}'_{\K\frac{1}{2}})$ and assume that $\langle \K A, i, \frac{1}{2} \rangle, i r^{\K} j \in X_{1}$. So $\m, f'(i), \frac{1}{2} \models  A$  and $\langle f'(i), f'(j) \rangle \in R^{\K}$. So $V(f'(i), \K A) = \frac{1}{2} = min\{ V(u,A): f'(i)R^\K u\}$. Consequently, $V(f'(j), A) \neq 0$, and  
$V(f'(j), A) = \frac{1}{2}$ or $V(f'(j), A) = \frac{1}{2}$. Therefore, $\m, f'(j), \frac{1}{2} \models  A$ or $\m, f'(j), 1 \models  A$. Finally, the model $\m$ is suitable to $X_{1} \cup \{\langle  A, j,  \frac{1}{2} \rangle \}$  or $X_{1} \cup \{\langle A, j,  1 \rangle \}$.


The remaining cases we check similarly. $\Box$

\smallskip
Now  we prove that the models are sound with respect to rules.

\medskip

\n \textbf{Fact 6.3.}\label{DK3 Models sound to rules} The class of  models $\M$ are sound 
with respect to rules from $\TR_{\K \Diamond 3}$.  

\smallskip

\n \textit{Proof.} Let $\phi$ be a complete and open branch. 
By corollary \ref{Open branch generates model} (Open branch generates structure) there exists a structure  generated by $\phi$. By the definition \ref{Structure generated by branch} (Structure generated by branch) the structure has a form $\langle W, W', R^{1}, R^{2}, \vartheta \rangle$:

\begin{itemize}
    \item $i \in W$ iff $\langle A, i \rangle \in  \bigcup \phi$ or $\langle \sim A, i \rangle \in  \bigcup \phi$, where $i \in \N$,

\item a pair $\langle i, n \rangle \in W'$ iff $\langle A, i, n \rangle \in \bigcup \phi$ or $\langle \sim A, i, n \rangle \in  \bigcup \phi$, where  $i \in \N$ and $n \in \{1, 0, \frac{1}{2}\}$,

\item $R^{1}(i, j)$ iff $i r^\K j \in  \bigcup \phi$,

\item $R^{2}(i, j )$ iff $i r^\Diamond j \in  \bigcup \phi$,

\item $\vartheta$ is such function $\vartheta \colon (W \cup W') \times \Var \longrightarrow \{ 0, 1 \}$ that for all $A \in \Var$:
\begin{itemize}

\item $\vartheta(i, n, A) = 1$ iff $\langle A, i, n \rangle \in  \bigcup \phi$,

\item  $\vartheta(i, A) = 1$ iff $\langle A, i \rangle \in  \bigcup \phi$.

\end{itemize}
\end{itemize}

\n We extend the structure to  structure $\langle \{W_{1}, W_{2}, W_{3}\}$, $\{ R_{1}, R_{2}, R_{3}\}$, $\vartheta' \rangle$, where $W_{1} = W \cup \{i \colon \langle i, n \rangle \in W' \}$, $W_{2} = \{1, 0, \frac{1}{2}\}$, $W_{3} = W_{1} \times W_{2}$, $R_{1} = R^{1}$, $R_{2} = R^{2}$, $R_{3} = \leq \subseteq \{1, 0, \frac{1}{2} \} \times \{1, 0, \frac{1}{2} \}$, and the function $\vartheta' \colon (W_{1} \cup W_{2} \cup W_{3}) \times \Var \longrightarrow \{ 0, 1 \}$ is defined by the condition: $\vartheta'(x) = \vartheta(x)$, for all $x \in (W \cup W')\times \Var$, while for the rest of elements $\vartheta'$ assigns 0. Now, we extend $\vartheta'$ to $V'\colon (W_{1} \cup W_{2} \cup W_{3}) \times \Var \longrightarrow \{ 0, 1 \}$ using the conditions from the begging of this example \ref{From v to V for Diamond K 3}, when we define a model for the logic we consider.  It is easy to notice that the structure defined satisfies condition (Reflexivity)  because it was generated by a complete branch to which the rules  $(\text{\R}_{\text{ref}})$, $(\text{\R}_{\text{ref} \,\K})$,  $(\text{\R}_{\text{ref} \,\Diamond})$ have been applied. So, for any $i \in W_{1}$, $R_{1}(i, i)$. 
Now, we define the relation of satisfiability $\models \subseteq (W_{1} \cup W_{2} \cup W_{3}) \times \For$ as follows:  $\langle w, n  \rangle\models A$ iff $V'(w, A) = n$, for $n \in \{1, 0, \frac{1}{2}\}$, and $w\models A$ iff $V'(w, A) = 1$. 

So far we have shown that the generated structure can be extended to a model $\m \in \M$, which is point (a) of the definition \ref{Models sound with respect to rules} (Models sound with respect to rules). By the same definition we need to prove the point (b), that is, if $\langle A, i \rangle \in X$,  then $\mathfrak{M},  i \models A$, and if $\langle \sim A, i\rangle \in X$,  then $\mathfrak{M}, i \not \models A$, where $A \in \For$, because we assumed at the begining \ref{Remarks on starting label for KDiamond3} that the starting label is $i \in \N$. To do this, however, we must generally show that if $\langle A, t \rangle \in X$,  then $\mathfrak{M},  t \models A$, and if $\langle \sim A, t\rangle \in X$,  then $\mathfrak{M}, t \not \models A$, where $A \in \For$, $t = i$ and $i \in W_{1}$,  or $t = \langle i, n \rangle$, $i \in W_{1}$, and $n \in W_{2} = \{1, 0, \frac{1}{2}\}$.  

The proof is inductive on the complexity of the formula $A$. 

For variables we have: if $\langle A, t \rangle \in  \bigcup \phi$,  then $\mathfrak{M},  t \models A$, and if $\langle \sim A, t\rangle \in X$,  then $\mathfrak{M}, t \not \models A$, where $A \in \Var$, $t = i$ and $i \in W_{1}$,  or $t = \langle i, n \rangle$, $i \in W_{1}$, and $n \in W_{2} = \{1, 0, \frac{1}{2}\}$, by the definition of \ref{Structure generated by branch} (Structure generated by branch). For more complex formulae we must consider the cases of application of tableau rules, assuming inductively that if $\langle A, t \rangle \in \bigcup \phi$,  then $\mathfrak{M},  t \models A$, and if $\langle \sim A, t\rangle \in \bigcup \phi$,  then $\mathfrak{M}, t \not \models A$, where $t = i$ and $i \in W_{1}$,  or $t = \langle i, n \rangle$, $i \in W_{1}$, and $n \in W_{2} = \{1, 0, \frac{1}{2}\}$, $A \in \For$ with complexity $m$. So, we must show that $\langle A, t \rangle \in \bigcup \phi$,  then $\mathfrak{M},  t \models A$, and if $\langle \sim A, t\rangle \in \bigcup \phi$,  then $\mathfrak{M}, t \not \models A$, where $t = i$ and $i \in W_{1}$,  or $t = \langle i, n \rangle$, $i \in W_{1}$, and $n \in W_{2} = \{1, 0, \frac{1}{2}\}$, $A \in \For$ with complexity $m+1$.

Let $\langle A, t \rangle \in \bigcup \phi$.  We consider three exemplary cases. 

Let $t = \langle i, \frac{1}{2} \rangle$ and $A = \Diamond B$. Because the branch $\phi$ is complete,  the rule $(\text{\R}_{\Diamond\frac{1}{2}})$ was applied.  So, $\langle B, i,  \frac{1}{2} \rangle, i r^{\Diamond} j \in \bigcup \phi$. But then, also  $(\text{\R}'_{\Diamond\frac{1}{2}})$ has been applied, resulting in $\langle B, k, \frac{1}{2} \rangle \in \bigcup \phi$ or $\langle B, k, 0 \rangle \in \bigcup \phi$, for any such $k \in \N$ that $i r^{\Diamond} k \in \bigcup \phi$. Whatever has happened, by the inductive hypothesis,  $\mathfrak{M},  \langle k, n \rangle \models B$, for any such $k$ and $n \in \{\frac{1}{2}, 0\}$. But $\mathfrak{M},  \langle k, \frac{1}{2}\rangle \models B$, when $k = j$. So, $V'(w, \Diamond B) = max\{V'(u,B): w R^\Diamond u\} = \frac{1}{2}$, and $\mathfrak{M},  \langle i, \frac{1}{2}\rangle \models A$.

Let $t = i$. Because the branch $\phi$ is complete,  the rule $(\text{\R}_{i1})$ was applied and $\langle A, i, 1  \rangle \in \bigcup \phi$. By the inductive hypothesis,  $\mathfrak{M},  \langle i, 1 \rangle \models A$. So, $V'(i, A) = 1$ and $\mathfrak{M}, i \models  A$. 

Let $t = i$ and $A = \sim B$. Because the branch $\phi$ is complete,  the rule $(\text{\R}_{\sim })$ was applied, then  $\langle A, i, 0  \rangle \in \bigcup \phi$ or $\langle A, i, \frac{1}{2}  \rangle \in \bigcup \phi$. By the inductive hypothesis,  $\mathfrak{M},  \langle i, 0 \rangle \models A$ or $\mathfrak{M},  \langle i, \frac{1}{2}  \rangle \rangle \models A$. So, $V'(i, A) \neq 1$ and $\mathfrak{M}, i \not \models  A$. 

The remaining cases we check similarly. $\Box$

Finally we can apply our Tableau Metatheorem \ref{Tableau metatheorem}, since we have proved the facts on soundness
of models and rules 
 (6.2, 6.3), and automatically get the adequacy of $\models_{\Diamond \K 3}$ and $\TR_{\Diamond \K 3}$.

 \medskip
 
\textbf{Theorem 6.4} (Tableau Metatheorem) For all $X \subseteq \For$, $A \in \For$ the following statements are equivalent:
\begin{itemize}

\item $X \models_{\Diamond \K 3} A$

\item  $X \vartriangleright_{\Diamond \K 3} A$

\item there is a finite $Y \subseteq X$ and a closed tableau $\langle Y, A, \Phi \rangle$.
\end{itemize}

 Thus, the metatheory presented in the paper simplified the process of defining all notions and proving specific facts when constructing the $\vartriangleright_{\K \Diamond 3}$ equivalent of a given logic $\models_{\K \Diamond 3}$. In fact, it is enough to define the rules and check their relationship with the models, so in the presented methodology, as we shown, the construction of an adequate tableau system really came down to three steps.\end{myfont} 
\qed{}
\end{ex}

One may however ask the question whether within this methodology we can automate it `more and more widely'? 

`More' here means that given $\models$ (the class of structures and the truth conditions), we could simply automatically generate a set of tableau rules that specify the appropriate $\vartriangleright$ (so no proof is needed that the models and rules are sound to each other). This seems to be possible to some extent, but would require further examination of the metatheory presented. 

On the other hand, the term `more widely' means that we could establish a similar methodology for a first-order language, and perhaps a universal approach, covering all kinds of languages in which logical systems are formulated. This also seems possible. What the theory covers turns out to be all sufficient features of any tableau system defined by general semantic structures. In the standard approach --- unlike the one presented here --- it seems that it is very difficult to prove general facts about classes of logics, because we do not have universal and precise concepts that are constant and differ only from one set of tableau rules to another. So the meta-theory can of course still be generalized and its range of applications can be much wider. Of course, the ultimate goal of working on a metatheory of tableau should be to achieve the same level of generality as we find in the metatheory of axiomatic systems.

We believe that having such a general methodology for tableau systems is a desirable condition for general study of the problem, such as: estimating proofs, economics of formulating tableau systems, reducing labeled to unlabeled systems, comparing different tableau systems in one theoretical approach, comparing our approach to other approaches/paradigms --- in terms of the level of generality, automation of the creation of appropriate tableau systems, the scope of automatic transformations of tableau systems into other types of deductive systems. These are future challenges that require a theoretical framework, which we have partially developed here, so that the logic can better address them. 

\section*{References}\label{sec:8}

\begingroup
\renewcommand{\section}[2]{}%
\bibliography{bib}

@incollection{EB1955,
        author = "Evert W. Beth",
	title = "Semantic Entailment and Formal Derivability",
        editor =   " ",
        booktitle   = "Semantic Entailment and Formal Derivability. Koninklijke Nederlandse Akademie van Wentenschappen, Proceedings of the Section of Sciences, 18",
        publisher = "Noord-Hollandsche",
        year = "1955",
	  pages       = "309--342",
        remark =  {Reprinted in Jaakko Hintikka (ed.) The Philosophy of Mathematics, Oxford University Press, 1969}
}

@article{ZL1960,
  author =       "Zbigniew Lis",
  title =        "Wynikanie semantyczne a wynikanie formalne (semantic entaliment vs. syntactic entailment)",
  journal =      "Studia Logica",
  volume =       "10",
  pages =        "39--54",
  year =         "1960",
  DOI =          "https://doi.org/10.1007/BF02120399"
}

@article{RRiVR1972,
    author     =   "Routely Richard and Valerie Routley",
    title      =   "The Semantics of First Degree Entailment",
    journal    =   "Noûs",
    volume     =   "6(4)",
    year       =   "1972",
    pages      =   "335--369",
}

@book{MF1983, 

        title = "Proof Methods for Modal and Intuitionistic Logics",
        year = "1983",
        author = "Melvin Fitting",
        publisher = "D. Reidel, Dordrecht, Holland" 
}

@book{RE1990,
    author    = "Richard L. Epstein",
    title     = "The semantic foundations of logic. Volume 1: Propositional logics",
    year      = "1990",
    publisher = "Springer Science+Business Media",
    address   = "Dordrecht"
}

@incollection{RG1999,
        author = "Rajeev Gor\'{e}",
	title = "Tableau Methods for Modal and Temporal Logics",
        editor =   "D'Agostino M., Gabbay D., Haehnle R. and Posegga J.",
        booktitle   = "Handbook of Tableau Methods",
        year = "1999",
        publisher = "Kluwer",
	  pages     = "297--396"
}

@article{TJ2007, 
    author     =   "Tomasz Jarmu\.{z}ek",
    title      =   "Construction of tableaux for classical logic: tableaux as combinations of branches, branches as chains of sets",
    journal    =   "Logic and Logical Philosophy",
    volume     =   "1(16)",
    year       =   "2007",
    pages      =   "85--101"
}

@article{TJ2008, 
    author     =   "Tomasz Jarmu\.{z}ek",
    title      =   "Tableau system for logic of categorial propositions and decidability",
    journal    =   "Bulletin of the Section of Logic",
    volume     =   "37",
    year       =   "2008",
    pages      =   "223--231"
}

@book{GP2008,
    author      =   "Graham Priest",
    year        =   "2008",
    title       =   "An introduction to non-classical  logic",
    publisher   =   "Cambridge University Press"
}

@incollection{TJ2013,
        author      =  "Tomasz Jarmu\.{z}ek",
	    title       = "Tableau Metatheorem for Modal Logics",
        editor      =   "Ciuni R., Wansing H. and Willkomennen C.",
        booktitle   = "Recent Trends in Philosphical Logic, Trends in Logic",
        year        = "2013",
        publisher   = "Springer Verlag",
	    pages       = "105--128"
}

@book{TJ2013b,
    author      =   "Tomasz Jarmu\.{z}ek ", 
    year        =   "2013", 
    title       =   "Formalizacja metod tablicowych dla logik zda\'{n} i logik nazw (Formalization of tableau methods for propositional logics and for logics of names)", 
    publisher   =   "Wydawnictwo UMK, Toru\'{n}"
}

@incollection{TJiMT2013,
        author =  "Tomasz Jarmu\.{z}ek and Tkaczyk M.",
	    title = "A method of defining paraconsistent tableaux",
        editor =   "J. Y. Beziau, M. Chakraborty and S. Dutta",
        booktitle   = "New Directions in Paraconsistent Logic,  vol. 152 of series Springer Proceedings in Mathematics and Statistics",
        year = "2015",
        publisher = "Springer Indie",
	  pages     = "296--307"
}

@article{TJiMT2015,
    author =  "Tomasz Jarmu\.{z}ek and Tkaczyk M.",
    title      =   "Modal paraconsistent tableau systems of logic",
    journal    =   "WSEAS Transactions on Mathematics",
    volume     =   "14",
    year       =   "2015",
    pages      =   "248--255"
}

@article{DM2017,
  author =       "David C. Makinson",
  title =        "Relevance via decomposition: A project, some results, an open question",
  journal =      "Australasian Journal of Philosophy",
  volume =       "14",
  number =       "3",
  pages =        "356--377",
  year =         "2017",
  DOI =          ""
}

@article{TJiMK2020,
  author    = "Tomasz Jarmu{\.z}ek and Mateusz Klonowski",
  title     = "On logics of strictly-deontic modalities. {A} semantic and tableau approach",
  journal   = "Logic and Logical Philosophy",
  volume    = "29",
  number    = "3",
  pages     = "335--380",
  year      = "2020",
  DOI       = "10.12775/LLP.2020.010"
}

@incollection{TJiRG2021,
  author      = "Tomasz Jarmu{\.z}ek and Rajeev Gor{\'e}",
  title       = "Tableau metatheory for syllogistic logics",
  editor      = "M. Fitting",
  booktitle   = "Selected topics from contemporary logics",
  publisher   = "College Publications",
  year        = "2021",
  pages       = "539--582"
}

@article{TJiMK2022,
    author      =   "Tomasz Jarmu{\.z}ek and Mateusz Klonowski",
    year        =   "2022",
    title       =   "Tableaux for logics of content relationship and set-assignment semantics",
    journal     =   "Logica Universalis",
    volume      =   "16",
    pages       =   "195--219",
    url         =   "https://doi.org/10.1007/s11787-022-00293-w",
    DOI         =   "10.1007/s11787-022-00293-w"
}
\bibliographystyle{plain}
\endgroup





\end{document}